\def\nN{\mathcal{N}}
\def\fJ{J}
\def\fF{\mathcal{F}}
\def\R{\mathbb{R}}
\def\N{\mathbb{N}}
\def\G{\mathcal{G}}
\def\K{\mathbf{K}}
\def\L{\mathbf{L}}
\def\cK{\mathbf{K}}
\def\S{\mathbf{S}}
\def\sS{\mathbf{S}}
\newcommand{\Prox}[2]{\mathsf{Prox}_{#1}\left(#2\right)}
\newcommand{\Prb}[1]{\mathbb{P}\left(#1\right)}
\newcommand{\eE}{\mathbb{E}}
\DeclareMathOperator*{\argmin}{arg\,min}
\DeclareFontFamily{U}{matha}{\hyphenchar\font45}
\DeclareFontShape{U}{matha}{m}{n}{
<-6> matha5 <6-7> matha6 <7-8> matha7
<8-9> matha8 <9-10> matha9
<10-12> matha10 <12-> matha12
}{}
\DeclareSymbolFont{matha}{U}{matha}{m}{n}
\DeclareFontFamily{U}{mathx}{\hyphenchar\font45}
\DeclareFontShape{U}{mathx}{m}{n}{
<-6> mathx5 <6-7> mathx6 <7-8> mathx7
<8-9> mathx8 <9-10> mathx9
<10-12> mathx10 <12-> mathx12
}{}
\DeclareSymbolFont{mathx}{U}{mathx}{m}{n}
\DeclareMathDelimiter{\vvvert} {0}{matha}{"7E}{mathx}{"17}
\DeclarePairedDelimiterX{\normiii}[1]
{\vvvert}
{\vvvert}
{\ifblank{#1}{\:\cdot\:}{#1}}
\theoremstyle{thmstyleone}
\newtheorem{theorem}{Theorem}
\newtheorem{proposition}[theorem]{Proposition}
\newtheorem{corollary}{Corollary}
\theoremstyle{thmstyletwo}
\newtheorem{remark}{Remark}
\newtheorem{lemma}{Lemma}
\theoremstyle{thmstylethree}
\newtheorem{definition}{Definition}
\newtheorem{assumption}{Assumption}
\begin{document}

\title[Equivariant Denoisers for Plug and Play Image Restoration]{Equivariant Denoisers for Plug and Play Image Restoration}
\author[1,2]{\fnm{Marien} \sur{Renaud}}\email{marien.renaud@math.u-bordeaux.fr}
\author[1]{\fnm{Eliot} \sur{Guez}}
\author[1]{\fnm{Arthur} \sur{Leclaire}}
\author[2]{\fnm{Nicolas} \sur{Papadakis}}

\affil[1]{\orgdiv{LTCI}, \orgname{T\'el\'ecom Paris}, IP Paris}
\affil[2]{\orgname{Univ. Bordeaux, CNRS, INRIA, Bordeaux INP, IMB, UMR 5251}, \postcode{F-33400} \city{Talence}, \country{France}}

\abstract{
One key ingredient of image restoration is to define a realistic prior on clean images to complete the missing information in the observation.
State-of-the-art restoration methods rely on a neural network to encode this prior. 
Typical image distributions are invariant to some set of transformations, such as rotations or flips.
However, most deep architectures are not designed to represent an invariant image distribution.
Recent works have proposed to overcome this difficulty by including equivariance properties within a Plug-and-Play paradigm.
In this work, we propose two unified frameworks named Equivariant Regularization by Denoising (ERED) and Equivariant Plug-and-Play (EPnP) based on equivariant denoisers and stochastic optimization. We analyze the convergence of the proposed algorithms and discuss their practical benefit.}
\keywords{Image restoration, optimization, equivariance, plug-and-play.}
\maketitle

\section{Introduction}
Image restoration aims at recovering a proper image $x \in \R^d$ from a degraded observation ${y \in \R^m}$. 
A model to obtain $y$ from $x$ can be defined as $y \sim \mathcal{N}(\mathcal{A}(x))$,
where $\mathcal{A}: \R^d \to \R^m$ is a deterministic operation on $x$ and $\mathcal{N}$ is a noise distribution that corresponds to a known physical degradation. Typically, linear degradations with Gaussian noise can be written as $y = A x + n$, with $A \in \R^{m \times d}$ and $n \sim \nN(0, \sigma_y^2 I_m)$.

The restoration task can then be reformulated as a variational problem
\begin{align}\label{eq:ideal_opt_pblm}
    \argmin_{x \in \R^d} \fF(x) = f(x) + \lambda r(x),
\end{align}
where the data-fidelity term $f = - \log{p(y|\cdot)}$ is the log-likelihood representing the degradation, while $r = - \log p$ is the regularization term encoding the prior distribution $p$, i.e. the model for clean images. 
The choice of the regularization $r$ is crucial to complete the missing information in the observation $y$. For example, classical explicit regularizations favor piecewise constant images~\cite{rudin1992nonlinear} or wavelet sparsity~\cite{mallat1999wavelet}.

\textbf{\textit{Implicit regularization with RED and PnP}}
To regularize the problem~\eqref{eq:ideal_opt_pblm}, state-of-the-art generic methods rely on the use of an external denoiser within the REgularization by Denoising (RED)~\cite{romano2017little} or Plug-and-Play (PnP)~\cite{venkatakrishnan2013plug} frameworks. These methods build upon an off-the-shelf denoiser that is plugged within established first order optimization algorithms ({\em e.g.} Proximal Gradient Descent) in place of explicit (RED) or implicit (PnP) gradient descent steps over the regularization function $r$.

This denoiser $D_{\sigma}$, parametrized by a noise level $\sigma > 0$, is generally a pretrained deep neural network~\cite{zhang2021plug}. Using a deep denoiser to encode the regularization has indeed proved to be state-of-the-art in many real-world applications including SAR despeckling~\cite{deledalle2017mulog}, phase retrieval~\cite{Metzler2018},  MRI acquisition~\cite{liu2020rare,iskender2023red}, tomographic reconstruction~\cite{majee2021multi}, Rician noise removal~\cite{wei2023nonconvex}, traffic-data completion~\cite{chen2024low} or structure-texture decomposition~\cite{guennec2025joint}.

\textbf{\textit{Constrained denoisers for convergent schemes}}
Obtaining convergence result for PnP or RED is an active field of research to have a theoretical guarantee that the restoration algorithm is stable and reliable.
Depending of the optimization scheme, convergence result can be derived by constraining the denoiser being bounded~\cite{chan2016plug, gavaskar2019proof}
having a symetric Jacobian~\cite{romano2017little,cohen2021regularization}, being averaged~\cite{Sun_2019}, being firmly nonexpansive~\cite{terris2020building,sun2021scalable}, being simply nonexpansive~\cite{reehorst2018regularization,ryu2019plug,liu2021recovery},
being pseudo-contractive~\cite{wei2024learning},
or being a gradient field~\cite{hurault2022gradient,hurault2024convergent}. The more recent conditions~\cite{hurault2022gradient,wei2024learning} can be ensured for deep denoiser with a slight degradation of their denoising perform.

\textbf{\textit{Image invariances}} Image distributions can be expected to be invariant to some transformations such as rotations or flips. 
Therefore, data augmentations methods~\cite{lecun2002gradient,krizhevsky2012imagenet} relies on these invariances. However, it is not clear that existing trained denoisers encode invariances on the underlying prior distribution. 
Recent works develop restoration methods that take advantage of equivariance to improve restoration with
rotation~\cite{fu2023rotationequivariantproximaloperator,terris2024equivariant} or roto-translation~\cite{celledoni2021equivariant}. Equivariance have also been used to developed
unsupervised image restoration~\cite{chen2021equivariant}, bootstrapping for uncertainty quantification~\cite{tachella2023equivariantbootstrappinguncertaintyquantification} or
improve the quality of a pretrained score for diffusion model~\cite{mbakam2024empirical}.

\textbf{\textit{Stochastic versions of RED and PnP}} Another way to increase the performance of image restoration methods is to rely on stochastic versions of optimization algorithms, such as the Stochastic Proximal Gradient descent algorithm~\cite{nitanda2014stochastic,xiao2014proximal}.
Different strategy of stochastic RED or PnP have thus been proposed, such as the equivariant methods~\cite{terris2024equivariant,renaud2025equivariant} that correspond to stochastic optimization algorithms.
Previously, the authors of~\cite{laumont2023maximum} proposed to add Gaussian noise in the algorithm to better explore the landscape. In~\cite{renaud2024plug,renaud2024convergenceanalysisproximalstochastic}  noise is added to the image before denoising in order to ensure that the denoiser is applied on its training domain during inference. In the same line of work, the authors of~\cite{hu2024stochasticdeeprestorationpriors} use a random degradation-restoration operation to regularize.
Other works  resulting in stochastic algorithms compute mini-batch approximation of the regularization~\cite{sun2019block} or the data-fidelity~\cite{tang2020fast} in order to reduce the computational cost.

\textbf{\textit{Links with sampling algorithms}}
Another strategy to solve inverse problems is to sample the posterior distribution $p(x|y)$ instead of solving the optimization problem~\eqref{eq:ideal_opt_pblm}. For sampling, deep denoisers are also used to encode the prior knowledge on the distribution of clean images within Gibbs sampling~\cite{coeurdoux2024plug,sun2024provable,Faye_IEEE_Trans_IP_2025}, Langevin dynamics~\cite{Laumont_2022,renaud2023plug,habring2025diffusion,renaud2025stability} or diffusion model~\cite{songscore,kadkhodaie2020solving,xia2023diffir} frameworks. Recent works show that diffusion models can be reformulated as relaxed noising-denoising algorithm~\cite{leclaire2025backward}. Thus the algorithm gap between sampling~\cite{chung2022diffusion} and stochastic optimization algorithms~\cite{renaud2024plug} is subtle. This fine gap has been previously observed in the series of works~\cite{Laumont_2022,laumont2023maximum}, which relate the correspondence between the unadjusted Langevin algorithm and a stochastic gradient algorithm to a different weighting of the noise term. In this paper, we will recover this correspondence: indeed, we prove that the proposed SnoPnP Algorithm~\ref{alg:SnoPnP} aims for optimizing (see Proposition~\ref{prop:snopnp_convergence_critical_point}), whereas its PnP-PSGLA counterpart~\cite{renaud2025stability} (which includes a different normalization of the noise component) aims for sampling.

\textbf{\textit{Nonconvex stochastic optimization}}
In terms of algorithmic convergence, RED and PnP generally correspond to first-order algorithms applied to the optimization of  nonconvex problems~\eqref{eq:ideal_opt_pblm}. 
When coming to the stochastic optimization of nonconvex problems, an extensive literature~\cite{atchade2014stochastic,konevcny2015mini,ghadimi2016mini,allen2018katyusha,davis2019proximally,j2016proximal} focuses on Stochastic Proximal Gradient descent motivated by the application of learning deep neural works with a convex constraint on the weights.
These works have been extended by including variance reduction~\cite{li2018simple,xu2023momentum,lu2024variance} or momentum~\cite{gao2024non} mechanisms or generalized to Bregman divergences~\cite{ding2023nonconvex}.
More recently some works have extended the previous guarantees for weakly convex regularizations~\cite{li2022unified,renaud2024convergenceanalysisproximalstochastic}.\\

However, for equivariant image restoration, the noise is added intentionally to improve the performance, discarding previous proof strategies. Namely, for the Perturbed Proximal Gradient descent algorithm that we introduce in section~\ref{sec:epnp}, the proximal evaluations are randomly perturbed. To the best of our knowledge, that kind of optimization algorithm has never been analyzed. The optimization results presented here for these algorithms are new, and can be applied beyond the context of image restoration.

\textbf{\textit{Contributions and outline}}
In this paper, we propose a unified formalism for the ERED and PnP frameworks, which generalizes the so-called equivariant PnP~\cite{terris2024equivariant}, equivariant RED~\cite{renaud2025equivariant}  as well as other stochastic versions of related algorithms~\cite{hu2024stochasticdeeprestorationpriors,renaud2024plug}.

In section~\ref{sec:back}, we start by introducing the main RED and PnP concepts that will be used all along the document.
In section~\ref{sec:ered}, we recall the $\pi$-Equivariant Regularization by Denoising (ERED) of~\cite{renaud2025equivariant} and give theoretical insights on the convergence guarantees and the critical points behavior of ERED (Algorithm~\ref{alg:ERED}).

Then we extend ERED to PnP in section~\ref{sec:epnp}, by introducing the Equivariant Plug-and-Play algorithm (EPnP, Algorithm~\ref{alg:EPnP}), for which we provide convergence guarantees. 
In section~\ref{sec:snopnp}, we refine these guarantees in the particular case of the Stochastic Denoising Plug-and-Play (SnoPnP, Algorithm~\ref{alg:SnoPnP}).

We finally provide in Section~\ref{sec:experimental} numerical experiments and comparisons for image restoration tasks, and discuss the practical benefits of equivariant approaches.

With respect to the short paper~\cite{renaud2025equivariant}, we propose a new extension of equivariant algorithms in the PnP framework (EPnP and SnoPnP) and provide the associated convergence results. Moreover, advanced numerical experiments and comparisons have been conducted to validate the ERED, EPnP and SnoPnP algorithms.

\section{Background}\label{sec:back}
In order to solve problem~\eqref{eq:ideal_opt_pblm}, when $f$ and $r$ are differentiable, one can use a gradient descent algorithm. However, the gradient of $r$, i.e. the score of the prior distribution $p$ of clean images $s := \nabla \log p = - \nabla r$ is unknown. The authors of~\cite{romano2017little} proposed to make the following approximation 
\begin{align}
    \nabla r = - \nabla \log p \approx \nabla r_{\sigma} := - \nabla \log p_{\sigma},
\end{align} 
where $p_{\sigma} = \nN_{\sigma} \ast p$ is the convolution of $p$ with the Gaussian $\nN_{\sigma}$ with $0$-mean and $\sigma^2 I_d$-covariance matrix. This is motivated by the Tweedie formula~\cite{efron2011tweedie}, which makes $\nabla \log p_{\sigma}$ tractable
\begin{align}\label{eq:tweedie_formula}
    -\nabla \log p_{\sigma}(x) = \frac{1}{\sigma^2} \left( x - D_{\sigma}^\ast(x) \right),
\end{align}
where $D_{\sigma}^\star$ is the Minimum Mean Square Error (MMSE) denoiser defined by $D_{\sigma}^\star(\tilde x) := \eE [x | \tilde x ] = \int_{\R^d}{x p(x | \tilde x) dx}$,
for $\tilde x = x + \epsilon \text{ with } x \sim p(x), \epsilon \sim \nN(0, \sigma^2 I_d)$.
This approximation leads to the REgularization by Denoising (RED) iterations:
\begin{align}
    x_{k+1} &=  x_k - \delta \nabla f(x_k) - \delta \frac{\lambda}{\sigma^2} \left( x_k - D_{\sigma}(x_k)\right),
\end{align}
where $D_\sigma$ is a denoiser that is designed to approximate $D_{\sigma}^{\star}$.

Another approach, named Plug-and-Play (PnP)~\cite{venkatakrishnan2013plug}, is to consider a forward-backward algorithm to solve problem~\eqref{eq:ideal_opt_pblm}, that can be written as
\begin{align}\label{eq:forward_backward}
    x_{k+1} = \Prox{\delta r}{x_k - \frac{\delta}{\lambda} \nabla f(x_k)},
\end{align}
where the proximal operator of the regularization $r = - \log p$ is defined for $x \in \R^d$ by
\begin{align}\label{eq:interpretation_pnp}
    \Prox{-\delta \log p}{x} = \argmin_{z \in \R^d} \frac{1}{2 \delta}\|x - z \|^2 - \log p(z),
\end{align}
with $\delta > 0$ the step-size and $p$ the prior distribution. One can observe that the optimization problem in equation~\eqref{eq:interpretation_pnp} amounts to computing the maximum of the posterior distribution of the denoised image $x \in \R^d$ with respect to the prior $p$ and an additive Gaussian noise of standard deviation $\sqrt{\delta}$. Therefore, several works~\cite{sreehari2016plug,hurault2024convergent} proposed to replace $\mathsf{Prox}_{-\delta \log p}$ in equation~\eqref{eq:forward_backward} with a pretrained denoiser $D_{\sigma}$ leading to the Plug-and-Play iterations:
\begin{align}\label{eq:pnp}
    x_{k+1} = D_{\sigma}\left(x_k - \frac{1}{\lambda} \nabla f(x_k)\right).
\end{align}
Note that, in general, the denoiser $D_{\sigma}$ is not guaranteed to be the proximal operator of some regularization and $\sigma > 0$ is not a step size. Under the interpretation of equation~\eqref{eq:interpretation_pnp}, we should have $\sigma = \sqrt{\delta}$. We choose to keep $\sigma > 0$ disconnected with $\delta$ in order to be aligned with the PnP literature~\cite{zhang2021plug,hurault2022gradient}.
Moreover, in the iterations~\eqref{eq:pnp} the two parameters $\lambda$ and $\delta$ are redundant, therefore we choose to keep $\lambda > 0$ and fix $\delta = 1$ with respect to~\eqref{eq:forward_backward}.

The performance of RED and PnP can be improved by slightly modifying the algorithm with stochastic schemes~\cite{terris2024equivariant,renaud2024plug}, which incorporate invariance properties and enhance details in the restoration. In the next sections, we propose a unified framework to generalize these approaches.

\section{\texorpdfstring{$\pi$}{pi}-Equivariant Regularization by Denoising}\label{sec:ered}
We now  study ERED,  the equivariant extension of RED proposed in~\cite{renaud2025equivariant}, that is deduced from a notion of invariance, named $\pi$-equivariance, on the underlying prior. 
We first introduce the general notions of $\pi$-equivariance in section~\eqref{ssec:pieq} and corresponding equivariant regularizations in section~\ref{ssec:eqreg}. Then we exemplify several case corresponding to existing works in section~\ref{sec:equivariant_ex}. Section~\ref{ssec:ERED} then describes the ERED algorithm, while the remaining sections are dedicated to its convergence and critical point analyses. For the sack of completeness, we recall in appendices all the proofs provided in~\cite{renaud2025equivariant}.

\subsection{\texorpdfstring{$\pi$}{pi}-equivariant image distributions}\label{ssec:pieq}
As shown in~\cite{lenc2015understanding}, natural images densities tend to be invariant to some set of transformations such as rotations or flips. To formalize these properties, we  define the key notions of invariance and $\pi$-equivariance.

\begin{definition}[Invariance]\label{def:invariance}
We denote by $g : \R^d \to \R^d$ a differentiable transformation of $\R^d$ and by $\G$ a measurable set of transformations of $\R^d$.
    A density $p$ on $\R^d$ is said to be invariant to a set of transformations $\G$ if $\forall g \in \G$, $p = p \circ g$ a.e.
\end{definition}

\begin{definition}[$\pi$-equivariance]\label{def:equivariance}
We denote by $G \sim \pi$ a random variable of law $\pi$ on $\G$.
    A density $p$ on $\R^d$ is said to be $\pi$-equivariant if 
    $\eE_{G \sim \pi}[|\log(p \circ G)|] < \infty$
    and 
    $\log p = \eE_{G \sim \pi}\left[ \log(p \circ G) \right]$.
\end{definition}

Definition~\ref{def:equivariance} relaxes the notion of invariance for a density in the following sense. If a density $p$ is invariant to each $g \in \G$, $p$ is $\pi$-equivariant, whatever the distribution $\pi$ on $\G$. 

\begin{remark}\label{remark:score_invariance}
For $p \in \mathcal{C}^1(\R^d, \R_{+}^\ast)$ and $g \in \mathcal{C}^1 (\R^d, \R^d)$, we have
\begin{align}\nonumber
    \nabla \log (p \circ g)(x)
    &= \frac{\nabla (p \circ g)(x)}{(p \circ g)(x)} = \frac{\fJ_g^T(x)\nabla p(g(x))}{(p \circ g)(x)}\\ &= \fJ_g^T(x) (\nabla \log p)(g(x)),\label{eq:score_composition}
\end{align}
with $x \in \R^d$. Thus, if $p$ is $\pi$-equivariant, then $s=-\nabla \log p$, the score of $p$,  verifies the identity $s = \eE_{G \sim \pi}\left( \fJ_G^T (s \circ G) \right)$.

\end{remark}

In the context of equivariant transforms, Remark~\ref{remark:score_invariance} suggests to apply the Jacobian of the transformation $g$ instead of the inverse of the transformation $g^{-1}$ as it is done in existing  works~\cite{mbakam2024empirical,tachella2023equivariantbootstrappinguncertaintyquantification,terris2024equivariant,herbreteau2024normalization}. Here the score~\eqref{eq:score_composition} can be computed for any general differentiable transformation $g$, even if $g^{-1}$ does not exist.

\subsection{Equivariant regularization}\label{ssec:eqreg}
In order to encode the desired equivariance property in the regularization $r$, \cite{renaud2025equivariant} introduces the \textbf{Equivariant REgularization by Denoising (ERED)} $ r_{\sigma}^{\pi}$ and the associated score $s_{\sigma}^{\pi}$ respectively defined by
\begin{align}
    r_{\sigma}^{\pi}(x) &:= -\eE_{G \sim \pi} \left( \log (p_{\sigma} \circ G)(x) \right) \label{eq:eq_reg} \\
    s_{\sigma}^{\pi}(x) &:= -\eE_{G \sim \pi} \left(\fJ_G^T(x) (\nabla \log p_{\sigma})(G(x)) \right).\label{eq:equivariant score}
\end{align}

Note that under regularity assumptions on $\pi$ and $p_{\sigma} \circ G$ (e.g. $\mathcal{G}$ finite and $p_{\sigma} \circ G$ differentiable), we get $s_{\sigma}^{\pi} = \nabla r_{\sigma}^{\pi}$.
Thanks to the Tweedie formula~\eqref{eq:tweedie_formula}, $s_{\sigma}^{\pi}$ can be computed with an MMSE denoiser
\begin{align}\label{eq:equiv_score}
    s_{\sigma}^{\pi}(x) &= \eE_{G \sim \pi} \left(\frac{1}{\sigma^2}  \fJ_G^T(x) \left(G(x) - D_{\sigma}^\ast(G(x)) \right) \right).
\end{align}
For a given denoiser $D_{\sigma}$, e.g. a supervised neural network, we can thus introduce the \textit{equivariant denoiser} $\tilde D_{\sigma}$ defined~by
\begin{align}\label{eq:equiv_denoiser}
    \tilde D_{\sigma}(x) = \eE_{G \sim \pi} \left[ \fJ_{G}^T(x) D_{\sigma} \left( G (x) \right) \right].
\end{align}

Since the exact MMSE denoiser $D_{\sigma}^\ast$ is not tractable, we make the following approximation:
\begin{align*}
    s_{\sigma}^{\pi}(x) &= \frac{1}{\sigma^2} \left( \eE_{\pi} \left[\fJ_G^T(x) G(x) \right] - \tilde D_{\sigma}^\ast(x)  \right)\\& \approx \frac{1}{\sigma^2} \left( \eE_{\pi} \left[\fJ_{G}^T(x) G(x) \right] - \tilde D_{\sigma}(x) \right).
\end{align*}

No special structure is required on $\G$ or $\pi$, thus making the  ERED framework a generic construction. 
However, in order to ensure that the ERED $ r_{\sigma}^{\pi}$ is indeed $\pi$-equivariant (Definition~\ref{def:equivariance}), more structure is required on $\G$ and $\pi$. 
In Proposition~\ref{prop:haar}, a sufficient condition on $\G$ and $\pi$ is provided for this property to hold.
Before stating this result, let us recall that a compact Hausdorff topological group admits a unique right-invariant probability measure $\pi$, called Haar measure, that satisfies, for any integrable function $\varphi : \G \to \R$ and for any $g \in \G$, $\int_{\G} \varphi(g(x)) d\pi(x) = \int_{\G} \varphi(x) d\pi(x)$~\cite{haar1933massbegriff,neumann1935haarschen}. For a finite group $\G$, the Haar measure is the counting measure.

\begin{proposition}\label{prop:haar}
    If $\G$ is a compact Hausdorff topological group and $\pi$ the associated right-invariant Haar measure, then $r_{\sigma}^{\pi}$ is $\pi$-equivariant.
\end{proposition}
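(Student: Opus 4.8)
The plan is to first unwind what it means for the regularization $r_\sigma^\pi$, which plays the role of minus the log of an (unnormalized) prior, to be $\pi$-equivariant in the sense of Definition~\ref{def:equivariance}. Reading that definition through the density $\propto e^{-r_\sigma^\pi}$, the normalizing constant enters additively in the logarithm and cancels on both sides of the identity, so the $\pi$-equivariance condition reduces to the purely functional statement
\begin{align*}
    r_\sigma^\pi(x) = \eE_{G \sim \pi}\left[ r_\sigma^\pi(G(x)) \right] \quad \text{for a.e. } x \in \R^d,
\end{align*}
together with the integrability requirement $\eE_{G\sim\pi}[\,|r_\sigma^\pi \circ G|\,] < \infty$. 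This is the only thing I need to establish.

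First I would expand the right-hand side using the definition~\eqref{eq:eq_reg}, writing the inner expectation over an independent copy $H \sim \pi$ and using that the elements of $\G$ compose as $H\circ G$:
\begin{align*}
    \eE_{G\sim\pi}\left[r_\sigma^\pi(G(x))\right] = -\,\eE_{G\sim\pi}\,\eE_{H\sim\pi}\left[\log p_\sigma\big((H\circ G)(x)\big)\right].
\end{align*}
The heart of the argument is then to collapse this double integral using the invariance of the Haar measure. For a fixed transformation $H$, the map $G \mapsto H\circ G$ is left composition by $H$, so the invariance property recalled before Proposition~\ref{prop:haar}, applied with the integrand $\varphi(K)=\log p_\sigma(K(x))$ for our fixed point $x$, gives
\begin{align*}
    \eE_{G\sim\pi}\left[\log p_\sigma\big((H\circ G)(x)\big)\right] = \eE_{K\sim\pi}\left[\log p_\sigma(K(x))\right] = -\,r_\sigma^\pi(x),
\end{align*}
a quantity independent of $H$. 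I would then invoke Fubini's theorem to exchange the two expectations (legitimate once integrability is secured, below), integrate the now-constant value over $H\sim\pi$ using that $\pi$ is a probability measure, and conclude $\eE_{G\sim\pi}[r_\sigma^\pi(G(x))] = r_\sigma^\pi(x)$, as required.

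Two technical points remain. For both the integrability hypothesis and the application of Fubini, I would bound $|r_\sigma^\pi(G(x))| \le \eE_{H}[\,|\log p_\sigma((H\circ G)(x))|\,]$ and apply the same Haar-invariance substitution to reduce $\eE_G\eE_H[\,|\log p_\sigma((H\circ G)(x))|\,]$ to $\eE_K[\,|\log(p_\sigma\circ K)(x)|\,]$, which is finite by the standing assumption that $r_\sigma^\pi$ is well defined, i.e. $\eE_{G\sim\pi}[\,|\log(p_\sigma\circ G)|\,]<\infty$. Joint measurability of $(G,H)\mapsto \log p_\sigma((H\circ G)(x))$ is automatic since composition is continuous on the topological group $\G$ and $p_\sigma$, being a Gaussian convolution, is smooth and strictly positive. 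I expect the main obstacle to be purely bookkeeping: matching the composition convention $H\circ G$ to the orientation of the invariance statement and discharging the integrability needed for Fubini. The group substitution itself is immediate once the orientation is fixed, and compactness ensures the normalized Haar measure is simultaneously a probability measure and (by unimodularity) both left- and right-invariant, so the left/right orientation issue is harmless.
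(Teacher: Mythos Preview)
Your proof is correct and rests on the same idea as the paper's: collapse the double integral by a Haar-invariance substitution. The only difference is orientation---the paper applies right-invariance directly to the \emph{inner} integral (the one coming from the definition of $r_\sigma^\pi$), thereby obtaining the stronger pointwise invariance $r_\sigma^\pi\circ G' = r_\sigma^\pi$ for every $G'\in\G$ without invoking Fubini, whereas you swap the order and use left-invariance on the outer integral; as you note, unimodularity of compact groups makes this a harmless variation.
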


Proposition~\ref{prop:haar} is proved in Appendix~\ref{sec:proof_haar}
Let us discuss the hypothesis on the set of transformations $\G$.
First, $\G$ needs to be a group, i.e. $\forall g, g' \in \G, g^{-1} \circ g' \in \G$, to ensure that the composition $G \circ G'$ is still in $\G$ and that any transformation is invertible.
Moreover, $\G$ needs to be a Hausdorff space (i.e. $\forall g, g' \in G$, there exist two neighbourhoods $U$ and $V$ of  $x$ and $y$, such that $U \cap V = \emptyset$) and is required to be compact to ensure that $\pi$ is a probability measure.
These hypotheses on $\G$ are general and cover in particular any finite discrete $\G$.

\subsection{Examples of equivariant scores}\label{sec:equivariant_ex}
The equivariant formulation generalizes recent works in the literature that we recall here. Moreover, we present new examples of equivariant scores that are included in our general framework.

\textbf{Finite set of isometries}
The authors of \cite{terris2024equivariant} proposed an equivariant version of PnP for a finite set of linear isometric transformations $\G$ with the uniform distribution on $\G$, i.e. $\forall g \in \G, \pi(g) = \frac{1}{|\G|}$. Since $g$ is a linear isometry, $\fJ_g^T(x) = g^{-1}$.
In this case, the $\pi$-equivariant denoiser~\eqref{eq:equiv_denoiser} and  regularization~\eqref{eq:equiv_score} are respectively defined by
\begin{align}
    \tilde D_{\sigma}(x) &= \frac{1}{|\G|} \sum_{g \in \G}{ g^{-1}\left[ D_{\sigma} \left( g(x) \right) \right] },\\
     s_{\sigma}^{\pi}(x) &\approx \frac{1}{\sigma^2} \left(x - \tilde D_{\sigma}(x) \right).
\end{align}

\textbf{Infinite set of isometries}
To generalize the previous example, we can propose an infinite set $\G$ of isometries, e.g. sub-pixel rotations for which $\pi$ can be seen as the angular distribution.
In this case, $\forall g \in \G, \forall x \in \R^d,\  \fJ_g^T(x) = g^{-1}$ and then
\begin{align}
    \tilde D_{\sigma}(x) &= \eE_{G \sim \pi}\left( G^{-1}\left[ D_{\sigma} \left( G(x) \right) \right] \right),\\s_{\sigma}^{\pi}(x) &\approx \frac{1}{\sigma^2} \left(x - \tilde D_{\sigma}(x) \right).
\end{align}

\textbf{Noising-denoising} The stochastic denoising regularization (SNORE) proposed in~\cite{renaud2024plug} consists in noising the image before denoising it. It can be interpreted as a $\pi$-equivariant denoiser for the set of translations $g_{z}(x) = x + \sigma z$, for $x, z \in \R^d$ and $\sigma > 0$ the noise level of $D_{\sigma}$, with the multivariable distribution, i.e. ${\forall z \in \R^d}$, $\pi(g_{z}) = \nN(z; 0, I_d)$. The Jacobian of the translation is $\fJ_{g_z}(x) = I_d$. With this set of transformations, the $\pi$-equivariant  denoiser~\eqref{eq:equiv_denoiser} and regularization~\eqref{eq:equiv_score} can be expressed as 
\begin{align}
    \tilde D_{\sigma}(x) &= \eE_{z \sim \nN(0, \sigma I_d)} \left[ D_{\sigma} \left( x + \sigma z \right) \right] \label{eq:noising_denoising_denoiser} \\
    s_{\sigma}^{\pi}(x) &\approx \frac{1}{\sigma^2} \left( \eE_{\pi} \left[x + \sigma z \right] - \tilde D_{\sigma}(x) \right)\nonumber \\&= \frac{1}{\sigma^2} \left( x - \tilde D_{\sigma}(x) \right).
\end{align}

\subsection{ERED algorithm}\label{ssec:ERED}

In this section, we define a generic $\pi$-equivariant PnP algorithm and demonstrate its convergence.
Instead of solving Problem~\eqref{eq:ideal_opt_pblm}, we will tackle 
\begin{align}\label{eq:equivariant_opt_pblm}
    \argmin_{x \in \R^d} \mathcal{F}_{\sigma}^{\pi}(x) := f(x) + \lambda r_{\sigma}^{\pi}(x),
\end{align}
with the equivariant regularization $r_{\sigma}^{\pi}$ defined in relation~\eqref{eq:eq_reg}.
We now introduce the equivariant Regularization by Denoising (ERED) algorithm (Algorithm~\ref{alg:ERED}), which is a biased stochastic gradient descent to solve Problem~\eqref{eq:equivariant_opt_pblm}.

\begin{algorithm}
\caption{ERED}\label{alg:ERED}
\begin{algorithmic}[1]
\STATE \textbf{Parameters:} $x_0 \in \R^d$, $\sigma > 0$, $\lambda > 0$, $\delta > 0$, $N \in \N$
\STATE \textbf{Input:} degraded image $y$
\STATE \textbf{Output:} restored image $x_{N}$
\FOR{$k = 0, 1, \dots, N-1$}
    \STATE Sample $G \sim \pi$
    \STATE $x_{k+1} = x_k - \delta \nabla f(x_k) - \frac{\delta \lambda}{\sigma^2} \fJ_{G}^T(x_k) \left( G(x_k) - D_{\sigma}(G(x_k))\right)$
\ENDFOR
\end{algorithmic}
\end{algorithm}

Note that Algorithm~\ref{alg:ERED} is a generalization of RED, of the previous equivariant RED proposed in~\cite{terris2024equivariant} and of SNORE~\cite{renaud2024plug}.

\begin{remark}
    If the denoiser $D_{\sigma}$ is $L$-Lipschitz then the equivariant denoiser $\tilde D_{\sigma}$ defined in relation~\eqref{eq:equiv_denoiser} is $L$-Lipschitz. 
    Previous works have shown the link between the Lipschitz constant of the denoiser and the convergence of deterministic PnP algorithms~\cite{ryu2019plug,hurault2022proximal,wei2024learning}. However, no clue indicates that this property still holds for stochastic PnP algorithms, such as Algorithm~\ref{alg:ERED}.
\end{remark}

\subsection{ERED unbiased convergence analysis}

In this section, we prove the convergence of the ERED (Algorithm~\ref{alg:ERED}) run with the exact MMSE denoiser~$D_{\sigma}^\star$. With this denoiser, thanks to Tweedie formula, the iterations are computed by
\begin{align}\nonumber
    x_{k+1} = &\; x_k - \delta_k \nabla f(x_k)\\&- \lambda \delta_k \fJ_{G}^T(x_k) \nabla \log p_{\sigma}(G (x_k)),\label{eq:theoretical_process}
\end{align}
with $G \sim \pi$ and $(\delta_k)_{k \in \N} \in {\left(\R^+\right)}^{\N}$ a non-increasing sequence of step-sizes.

\begin{assumption}
    \itshape \textbf{(a)} \label{ass:step_size_decreas} The step-size decreases to zero but not too fast: $\sum_{k = 0}^{+\infty}{\delta_k} = + \infty$ and $\sum_{k = 0}^{+\infty}{\delta_k^2} < + \infty$.
    
    \itshape \textbf{(b)} \label{ass:data_fidelity_reg} The data-fidelity term $f : x \in \R^d \mapsto f(x) \in \R$ is $\mathcal{C}^{\infty}$.
    
    \itshape \textbf{(c)} \label{ass:prior_score_approx} The noisy prior score is sub-polynomial, i.e. there exist $B\in \R^+$, $\beta \in \R$ and $n_1 \in \N$ such that $\forall \sigma > 0$, $\forall x \in \R^d$, $\|\nabla \log p_{\sigma}(x)\| \le B \sigma^{\beta} (1 + \|x\|^{n_1})$.
\end{assumption}

Assumption~\ref{ass:step_size_decreas}(a) is standard in stochastic gradient descent analysis. It suggests a choice of the step-size rule to ensure convergence, for instance $\delta_k = \frac{\delta}{k^\alpha}$ with $\alpha \in ]\frac{1}{2},1]$.
Assumption~\ref{ass:data_fidelity_reg}(b) is typically verified for a linear degradation with additive Gaussian noise, i.e. $f(x) = \frac{1}{\sigma_y^2}\|y-Ax\|^2$. It ensures that the objective function of Problem~\eqref{eq:equivariant_opt_pblm} is $\mathcal{C}^{\infty}$. Under the so-called manifold hypothesis, i.e. $p$ is supported on a compact, it is shown in~\cite{debortoli2023convergence} that  Assumption~\ref{ass:prior_score_approx}(c) is verified with $n_1 = 1$ and $\beta = -2$.

\begin{assumption}\label{ass:finite_moment}
    \textbf{(a)} The random variable $\fJ_G$ has a uniform finite moment, i.e. $\exists \epsilon > 0, M_{2+\epsilon} \ge 0$ such that $\forall x \in \R^d, \eE_{G \sim \pi}(\vvvert \fJ_G(x) \vvvert^{2+\epsilon}) \le M_{2+\epsilon} < +\infty$, with $\vvvert \cdot \vvvert$ the operator norm defined for $A \in \R^{d\times d}$ by $\vvvert A \vvvert = \sup_{\|x\|=1} \frac{\|Ax\|}{\|x\|}$.

\noindent
    \textbf{(b)} The transformation has bounded moments on any compact, i.e. $\forall\K\subset \R^d$ compact, $\forall m \in \N$, $\exists C_{\K, m}< + \infty$ such that $\forall x \in \K, \eE_{G \sim \pi}(\|  G(x) \|^m) \le C_{\K, m}$.
\end{assumption}
With Assumption~\ref{ass:finite_moment}, the behavior of the algorithm is controlled on each compact set. 
This assumption is verified for all examples presented in Section~\ref{sec:equivariant_ex}.

We now define $\mathbf{S}_{\sigma} = \{x \in \R^d | \nabla \fF_{\sigma}^{\pi}(x) = 0\}$, the set of critical points of $\fF_{\sigma}^{\pi}$ in $\R^d$, and $\Lambda_{\K}$, the set of random seeds for which the iterates of the algorithm are bounded in the compact $K$, by
\begin{equation*}
    \Lambda_{\K} = \bigcap_{k \in \N}{\{x_k \in \K\}}.
\end{equation*}
We finally denote the distance of a point to a set by $d(x, \mathbf{S}) = \inf_{y \in \mathbf{S}}{\|x - y \|}$, with $x \in \R^d$ and $\mathbf{S} \subset \R^d$.
The restriction to realizations of the algorithm bounded in $\Lambda_{K}$ will be referred to as the \textit{boundedness assumption}.

\begin{proposition}\label{prop:convergence_unbiased}
    Let $(x_k)_{k \in \N}$ be the iterates generated by Algorithm~\ref{alg:ERED} with the exact MMSE Denoiser $D^*_\sigma$. Then, under Assumptions~\ref{ass:step_size_decreas}-\ref{ass:finite_moment}, we have almost surely on $\Lambda_{\K}$
    \begin{align}
        &\lim_{k \to + \infty}{d(x_k, \mathbf{S_{\sigma}})} = 0,\label{eq1a}\\
        &\lim_{k \to + \infty}{\|\nabla \fF_{\sigma}^{\pi}(x_k)\|} = 0,\label{eq1b}
    \end{align}
and $(\fF_{\sigma}^{\pi}(x_k))_{k \in \N}$ converges to a value of $\fF_{\sigma}^{\pi}(\mathbf{S_{\sigma}})$.
\end{proposition}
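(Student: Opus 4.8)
The plan is to recognise the recursion \eqref{eq:theoretical_process} as a Robbins--Monro stochastic gradient scheme for the smooth objective $\fF_{\sigma}^{\pi}$ and to conclude via the supermartingale convergence theorem together with a gradient-oscillation argument. First I would write the iteration in the canonical form $x_{k+1}=x_k-\delta_k g_k$ with $g_k=\nabla\fF_{\sigma}^{\pi}(x_k)+\xi_k$, where $g_k$ is the (random) update direction and $\xi_k$ its stochastic error. Setting $\mathcal{F}_k=\sigma(x_0,G_0,\dots,G_{k-1})$, the seed $G=G_k$ drawn at step $k$ is independent of $\mathcal{F}_k$, so by the definition of the equivariant regularisation and score \eqref{eq:eq_reg}--\eqref{eq:equiv_score} and Tweedie's formula \eqref{eq:tweedie_formula}, the conditional mean of the update direction is exactly $\eE[g_k\mid\mathcal{F}_k]=\nabla\fF_{\sigma}^{\pi}(x_k)$; hence $\eE[\xi_k\mid\mathcal{F}_k]=0$ and $(\xi_k)$ is a martingale-difference sequence. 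This is precisely where the use of the \emph{exact} MMSE denoiser is essential: it renders the scheme unbiased.

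Because the conclusions are asserted only on $\Lambda_{\K}$, I would localise the analysis to the compact $\K$ through the stopping time $\tau=\inf\{k:x_k\notin\K\}$, working with the stopped process, which coincides with $(x_k)$ on $\Lambda_{\K}$ (where $\tau=+\infty$); thus I may assume all iterates lie in $\K$. On $\K$, Assumption~\ref{ass:data_fidelity_reg}(b) and the structure of $r_{\sigma}^{\pi}$ make $\fF_{\sigma}^{\pi}$ of class $\mathcal{C}^{\infty}$, so $\nabla\fF_{\sigma}^{\pi}$ is $L_{\K}$-Lipschitz and $\fF_{\sigma}^{\pi}$ is bounded below on $\K$. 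The key quantitative estimate is a uniform bound $\eE[\|g_k\|^2\mid\mathcal{F}_k]\le C_{\K}$: I would bound the stochastic term $\fJ_G^T(x_k)\nabla\log p_{\sigma}(G(x_k))$ in norm by $\normiii{\fJ_G(x_k)}\,B\sigma^{\beta}(1+\|G(x_k)\|^{n_1})$ using Assumption~\ref{ass:prior_score_approx}(c), and then take the conditional expectation via Hölder's inequality with exponents $\tfrac{2+\epsilon}{2}$ and $\tfrac{2+\epsilon}{\epsilon}$, controlling the two factors by the uniform $(2+\epsilon)$-moment of $\fJ_G$ (Assumption~\ref{ass:finite_moment}(a)) and the compact moment bound on $G$ (Assumption~\ref{ass:finite_moment}(b)). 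This is exactly why the moment hypotheses carry the $\epsilon$-margin.

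With these ingredients I would apply the descent lemma along the iteration and take $\eE[\cdot\mid\mathcal{F}_k]$; the martingale cross-term vanishes, leaving
\begin{align}\nonumber
\eE[\fF_{\sigma}^{\pi}(x_{k+1})\mid\mathcal{F}_k]\le \fF_{\sigma}^{\pi}(x_k)-\delta_k\|\nabla\fF_{\sigma}^{\pi}(x_k)\|^2+\tfrac{L_{\K}}{2}C_{\K}\,\delta_k^2 .
\end{align}
Since $\sum_k\delta_k^2<\infty$ by Assumption~\ref{ass:step_size_decreas}(a), the Robbins--Siegmund supermartingale theorem gives, almost surely on $\Lambda_{\K}$, that $(\fF_{\sigma}^{\pi}(x_k))$ converges and that $\sum_k\delta_k\|\nabla\fF_{\sigma}^{\pi}(x_k)\|^2<\infty$. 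Combined with $\sum_k\delta_k=+\infty$, this forces $\liminf_k\|\nabla\fF_{\sigma}^{\pi}(x_k)\|=0$, yielding \eqref{eq1b} along a subsequence and the convergence of the function values.

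The main obstacle is to upgrade this $\liminf$ into a genuine limit, i.e.\ to rule out persistent oscillations of the gradient norm. Here I would follow the Bertsekas--Tsitsiklis gradient-convergence argument: using $\|x_{k+1}-x_k\|\le\delta_k\|g_k\|$ together with the $L_{\K}$-Lipschitzness of $\nabla\fF_{\sigma}^{\pi}$ and $\delta_k\to0$, one controls the increments $\big|\,\|\nabla\fF_{\sigma}^{\pi}(x_{k+1})\|-\|\nabla\fF_{\sigma}^{\pi}(x_k)\|\,\big|$, and a contradiction argument (supposing $\limsup_k\|\nabla\fF_{\sigma}^{\pi}(x_k)\|>0$ while $\liminf_k=0$) shows the cumulative descent across the infinitely many excursions between small and large gradient would be infinite, contradicting the convergence of $(\fF_{\sigma}^{\pi}(x_k))$. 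This step also requires controlling the noise contribution to the increments, which is where the almost-sure summability of the martingale part (from $\sum_k\delta_k^2<\infty$ and the second-moment bound) enters. Once $\|\nabla\fF_{\sigma}^{\pi}(x_k)\|\to0$ is established, \eqref{eq1a} follows from compactness: any subsequential limit $x^{\star}\in\K$ of $(x_k)$ satisfies $\nabla\fF_{\sigma}^{\pi}(x^{\star})=0$ by continuity, hence lies in $\mathbf{S}_{\sigma}$, so $d(x_k,\mathbf{S}_{\sigma})\to0$; continuity of $\fF_{\sigma}^{\pi}$ then identifies the limit of $(\fF_{\sigma}^{\pi}(x_k))$ as a value in $\fF_{\sigma}^{\pi}(\mathbf{S}_{\sigma})$.
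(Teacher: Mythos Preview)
Your proposal is correct and follows a genuinely different route from the paper. The paper does not carry out the Robbins--Siegmund\,/\,Bertsekas--Tsitsiklis programme you sketch; instead it packages the whole argument into a single application of the biased stochastic-approximation theorem of Tadi\'c and Doucet~\cite{tadic2017asymptotic} (recalled as Theorem~\ref{theorem:tadic}). It checks the three hypotheses of that theorem: the step-size condition (Assumption~\ref{ass:step_size_decreas}(a)), the $\mathcal{C}^p$ regularity with $p>d$ (Assumption~\ref{ass:data_fidelity_reg}(b) together with the smoothing effect of $p_\sigma$), and a noise-fluctuation condition of Kushner--Clark type, namely $\sup_{n\le l}\big\|\sum_{k=n}^l\delta_k\xi_k\big\|\to0$ a.s., which it obtains from the same second-moment bound you derive (their Lemma~\ref{lemma:bounded_variance_1}, proved via Young rather than H\"older) combined with Doob's maximal inequality and $\sum_k\delta_k^2<\infty$. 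Your approach is more elementary and self-contained, and in fact only needs $\nabla\fF_{\sigma}^{\pi}$ to be locally Lipschitz rather than $\mathcal{C}^{p}$ with $p>d$; the paper's black-box route is shorter and, importantly, reuses the same Tadi\'c theorem for the biased case (Proposition~\ref{prop:convergence_biaised}), where the nonzero bias $\eta_k$ enters naturally---something your supermartingale argument would have to be modified to handle.
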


Proposition~\ref{prop:convergence_unbiased} is proved in Appendix~\ref{sec:proof_convergence_unbiased}.

\subsection{ERED biased convergence analysis}
In this section, we analyse the convergence of the ERED algorithm (Algorithm~\ref{alg:ERED}) run with a realistic denoiser $D_{\sigma} \neq D_{\sigma}^\ast$. In this case, ERED is a biased stochastic gradient descent for solving Problem~\eqref{eq:equivariant_opt_pblm}. At each iteration, the  algorithm writes

\begin{align}\nonumber
    x_{k+1} = &\, x_k - \delta_k \nabla f(x_k)  \\&\hspace{-2pt}- \frac{\delta_k \lambda}{\sigma^2} \fJ_{G}^T(x_k) \left( G(x_k)\hspace{-2pt}-\hspace{-2pt} D_{\sigma}(G(x_k))\right),\label{eq:algo_biased}
\end{align}
with $G \sim \pi$. Defining the gradient estimator 
\begin{align*}
\xi_{k} =&\; \nabla f(x_k) + \frac{\lambda}{\sigma^2} \fJ_{G}^T(x_k) \left( G(x_k) - D_{\sigma}(G(x_k))\right) \\&- \nabla \fF_{\sigma}^{\pi}(x_k),
\end{align*}the algorithm~\eqref{eq:algo_biased} can be reformulated as 
\begin{align}
    x_{k+1} = x_k - \delta_k \left( \nabla \fF_{\sigma}^{\pi}(x_k) + \xi_k \right).
\end{align}

\begin{assumption}\label{ass:denoiser_sub_polynomial}
    The realistic denoiser $D_{\sigma}$ is sub-polynomial, i.e. $\exists C > 0$ and $n_2 \in N$ such that $\forall x \in \R^d$, $\|D_{\sigma}(x)\| \le C (1 + \|x\|^{n_2})$.
\end{assumption}

Assumption~\ref{ass:denoiser_sub_polynomial} is similar to Assumption~\ref{ass:prior_score_approx}(c) but it applies on the denoiser instead of the score of the underlying prior distribution. Notice that in this case the constant $C$ might depend on the noise level $\sigma$. 
As an example, a bounded denoiser~\cite{chan2016plug} verifies Assumption~\ref{ass:denoiser_sub_polynomial} with $n_2 = 1$.

\begin{assumption}
    \label{ass:g_bounded}
    For every compact $\K$, there exists $C_{\K}$, such that 
    $\forall x \in \K$, $\forall g \in \G, \|g(x)\| \le C_{\K}$. 
\end{assumption}
Assumption~\ref{ass:g_bounded} is verified as soon as $(g, x) \in \G \times \R^d \to g(x) \in \R^d$ is continuous and $\G$ compact. It is verified in particular for $\G$ being a finite set of isometries.

\begin{proposition}\label{prop:convergence_biaised}
    Let  $(x_k)_{k \in \N}$ be the sequence provided by ERED (Algorithm~\ref{alg:ERED}) with an inexact denoiser $D_{\sigma}$. Then,  under Assumptions~\ref{ass:step_size_decreas}-\ref{ass:denoiser_sub_polynomial},  there exists $M_{\K}$ such that, almost surely on $\Lambda_{\K}$:
    \begin{align}
        \limsup_{k \to \infty} \|\nabla \fF_{\sigma}^{\pi}(x_k)\| &\le M_{\K} \eta^{\frac{1}{2}} \label{eq1}\\
        \limsup_{k \to \infty} \fF_{\sigma}^{\pi}(x_k) - \liminf_{k \to \infty} \fF_{\sigma}^{\pi}(x_k) &\le M_{\K} \eta,\label{eq2}
    \end{align}
    with the asymptotic bias $\eta = \limsup_{k \to \infty} \|\eE(\xi_k)\|$.

    Moreover, under Assumption~\ref{ass:g_bounded}, we have
    \begin{align}\label{ineq}
        \eta \le \frac{\lambda}{\sigma^2} \sup_{x \in \K}\eE \left( \vvvert \fJ_G(x)\vvvert\right) \|D_{\sigma} - D_{\sigma}^\ast\|_{\infty, \L},
    \end{align}
    with $\L = \mathcal{B}(0, C_{\K})$, where $C_{\K}$ is introduced in Assumption~\ref{ass:g_bounded}.
\end{proposition}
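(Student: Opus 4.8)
The plan is to read ERED (Algorithm~\ref{alg:ERED}) as a biased stochastic gradient descent $x_{k+1}=x_k-\delta_k(\nabla\fF_\sigma^\pi(x_k)+\xi_k)$ on the smooth objective $\fF_\sigma^\pi$, and to run a descent analysis localized on $\Lambda_\K$. All estimates are carried out on $\Lambda_\K$, where every iterate lies in $\K$; to legitimize the martingale arguments despite $\Lambda_\K$ being a tail event of the whole trajectory, I would first stop the chain at the exit time of a slightly enlarged compact $\K'\supset\K$ and argue on the stopped process, which stays in $\K'$ by construction and coincides with the original one on $\Lambda_\K$. Since $f$ is $\mathcal{C}^\infty$ (Assumption~\ref{ass:data_fidelity_reg}) and $r_\sigma^\pi$ is smooth through Tweedie's formula, $\fF_\sigma^\pi$ is $\mathcal{C}^1$ with an $L_\K$-Lipschitz gradient on $\K'$. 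Assumption~\ref{ass:finite_moment}, together with the sub-polynomial growth of the score (Assumption~\ref{ass:prior_score_approx}) and of the denoiser (Assumption~\ref{ass:denoiser_sub_polynomial}), provides a uniform bound $\eE[\|\nabla\fF_\sigma^\pi(x_k)+\xi_k\|^2\mid\mathcal{F}_k]\le V_\K$ on $\K'$.

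Next I would split $\xi_k=e_k+b_k$ into a martingale increment $e_k$ and the bias $b_k=\eE[\xi_k\mid\mathcal{F}_k]$, so that $\eta=\limsup_k\|b_k\|$ on $\Lambda_\K$. Applying the $L_\K$-descent lemma to the update and taking the conditional expectation gives
\begin{align*}
\eE[\fF_\sigma^\pi(x_{k+1})\mid\mathcal{F}_k]\le\fF_\sigma^\pi(x_k)-\delta_k\|\nabla\fF_\sigma^\pi(x_k)\|^2+\delta_k\|\nabla\fF_\sigma^\pi(x_k)\|\,\|b_k\|+L_\K V_\K\delta_k^2.
\end{align*}
Because $\sum_k\delta_k^2<\infty$ (Assumption~\ref{ass:step_size_decreas}) the quadratic remainder is summable and the martingale $\sum_k\delta_k\langle\nabla\fF_\sigma^\pi(x_k),e_k\rangle$ converges almost surely; a Robbins--Siegmund type argument, together with $\sum_k\delta_k=+\infty$, then forces $\liminf_k\|\nabla\fF_\sigma^\pi(x_k)\|$ to be of order $\eta$. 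The same recursion shows that the only sources of increase of $\fF_\sigma^\pi$ are the summable remainder, the convergent martingale and the bias term, whose accumulated contribution between two consecutive near-critical visits is of order $\eta$; tracking these upward increments yields the oscillation bound~\eqref{eq2}, $\limsup_k\fF_\sigma^\pi(x_k)-\liminf_k\fF_\sigma^\pi(x_k)\le M_\K\eta$.

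The step I expect to be the crux is turning this $\liminf$ control into the $\limsup$ bound~\eqref{eq1}, and in particular producing the square root. The mechanism I would exploit is the elementary consequence of $L_\K$-smoothness $\|\nabla\fF_\sigma^\pi(x)\|^2\le 2L_\K\bigl(\fF_\sigma^\pi(x)-\fF_\sigma^\pi(x-\tfrac{1}{L_\K}\nabla\fF_\sigma^\pi(x))\bigr)$, which converts a function-value budget into a \emph{squared} gradient: the right-hand side is at most $2L_\K$ times the amplitude of $\fF_\sigma^\pi$ along the trajectory, hence at most $2L_\K M_\K\eta$ by~\eqref{eq2}, whence $\|\nabla\fF_\sigma^\pi(x_k)\|\le\sqrt{2L_\K M_\K}\,\eta^{1/2}$. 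The delicate point, and the main obstacle, is to lower-bound the post-step value $\fF_\sigma^\pi(x-\tfrac{1}{L_\K}\nabla\fF_\sigma^\pi(x))$ by $\liminf_k\fF_\sigma^\pi(x_k)$ up to a vanishing error; this forces one to combine the confinement to $\K'$, the slow variation of the iterates ($\|x_{k+1}-x_k\|\to 0$ almost surely, which itself uses $\sum_k\delta_k^2<\infty$ and the moment bound $V_\K$) and the oscillation estimate~\eqref{eq2} carefully.

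Finally, for the bias bound~\eqref{ineq} I would compute $b_k$ explicitly. Inserting $\nabla\fF_\sigma^\pi=\nabla f+\lambda s_\sigma^\pi$ with $s_\sigma^\pi$ given by~\eqref{eq:equiv_score} into the definition of $\xi_k$, the terms $\nabla f(x_k)$ and $\frac{\lambda}{\sigma^2}\eE_{G\sim\pi}[\fJ_G^T(x_k)G(x_k)]$ cancel, leaving
\begin{align*}
b_k=\frac{\lambda}{\sigma^2}\,\eE_{G\sim\pi}\!\left[\fJ_G^T(x_k)\bigl(D_\sigma^\ast(G(x_k))-D_\sigma(G(x_k))\bigr)\right].
\end{align*}
Bounding by the operator norm gives $\|b_k\|\le\frac{\lambda}{\sigma^2}\eE_{G\sim\pi}\!\left[\vvvert\fJ_G(x_k)\vvvert\,\|(D_\sigma^\ast-D_\sigma)(G(x_k))\|\right]$. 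On $\Lambda_\K$ one has $x_k\in\K$, so Assumption~\ref{ass:g_bounded} confines $G(x_k)\in\L=\mathcal{B}(0,C_\K)$ and thus $\|(D_\sigma^\ast-D_\sigma)(G(x_k))\|\le\|D_\sigma-D_\sigma^\ast\|_{\infty,\L}$. Taking the supremum over $x\in\K$ and then the $\limsup$ over $k$ delivers~\eqref{ineq}.
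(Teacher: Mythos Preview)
Your computation of the bias bound~\eqref{ineq} is exactly the paper's argument. For~\eqref{eq1}--\eqref{eq2}, however, the paper does not run a descent analysis: it invokes a black-box result on biased stochastic approximation, Theorem~2.1(ii) of Tadi\'c~\cite{tadic2017asymptotic} (recalled in Appendix~\ref{app:thm} as Theorem~\ref{theorem:tadic}). Concretely, the paper splits $\xi_k=\gamma_k+\eta_k$ with $\eta_k=\eE(\xi_k)$, establishes a uniform second-moment bound on $\gamma_k$ over $\K$ (Lemma~\ref{lemma:bounded_variance_2}, using Assumptions~\ref{ass:finite_moment} and~\ref{ass:denoiser_sub_polynomial}), applies Doob's maximal inequality to get $\sup_{n\le l}\|\sum_{k=n}^l\delta_k\gamma_k\|\to0$ a.s., checks $\fF_\sigma^\pi\in\mathcal{C}^\infty$, and then reads off~\eqref{eq1}--\eqref{eq2} directly from Tadi\'c's theorem.

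Your direct route has a genuine gap at precisely the step you flag as the crux. The Robbins--Siegmund recursion yields at best $\liminf_k\|\nabla\fF_\sigma^\pi(x_k)\|\le\eta$, and your passage to the $\limsup$ is circular: you deduce~\eqref{eq1} from~\eqref{eq2}, but your argument for~\eqref{eq2} needs the upward drift $\delta_k\|\nabla\fF_\sigma^\pi(x_k)\|\,\|b_k\|$ to integrate to $O(\eta)$ between near-critical visits, which already presupposes a bound on $\|\nabla\fF_\sigma^\pi(x_k)\|$. In addition, the point $x_k-\tfrac{1}{L_\K}\nabla\fF_\sigma^\pi(x_k)$ is off the trajectory, so lower-bounding its value by $\liminf_k\fF_\sigma^\pi(x_k)$ is not justified by $\|x_{k+1}-x_k\|\to0$. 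These difficulties are real: in Tadi\'c's proof the square-root exponent is not obtained by an elementary descent argument but via the Yomdin--Gromov parametrization of level sets of $\mathcal{C}^p$ functions, which is why the theorem requires $p>d$ and produces the exponent $q=(p-d)/(p-1)$ (here $q\to1$ because $\fF_\sigma^\pi\in\mathcal{C}^\infty$). A direct $L$-smoothness argument does not recover this structure.
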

Proposition~\ref{prop:convergence_biaised} is proved in Appendix~\ref{sec:proof_convergence_biaised}.

\subsection{ERED critical points analysis - Geometrical invariant case}
A critical point analysis for the noising-denoising case, presented in equation~\eqref{eq:noising_denoising_denoiser}, is provided in~\cite{renaud2024plug}. 
However, to the best of our knowledge, no such analysis has been provided so far even in the case of geometrical invariance, i.e. $p \circ g = p$ for $g \in \G$. Here, we fill this gap by studying critical points under the relaxed $\pi$-equivariance property.
First, we study the approximation $- \nabla \log p \approx s_{\sigma}^{\pi}$ when $\sigma \to 0$. Next, we deduce that critical points of Problem~\eqref{eq:equivariant_opt_pblm} converge to critical points of Problem~\eqref{eq:ideal_opt_pblm} when $\sigma \to 0$.

\begin{assumption}
     \textbf{(a)} \label{ass:p_regularity} The prior distribution $p \in \mathrm{C}^1(\R^d, ]0, +\infty[)$ with $\|p\|_{\infty} + \|\nabla p\|_{\infty} < + \infty$.
\noindent
    \textbf{(b)} \label{ass:first_moment_finite}
    $\fJ_G$ has finite first moment, i.e. $\sup_{x \in \R^d}\eE_{G \sim \pi}(\vvvert \fJ_G(x)\vvvert) < +\infty$.
\end{assumption}
Assumption~\ref{ass:p_regularity}(a) is needed to ensure that $\nabla \log p$ is well defined.
Assumption~\ref{ass:first_moment_finite}(b) is verified in particular for a finite set of transformations, for a set of linear isometries or for the noising-denoising regularization.

\begin{proposition}\label{prop:uniform_cvg}
    Under Assumptions~\ref{ass:g_bounded}-\ref{ass:first_moment_finite}, for every compact $\K \subset \R^d$, if the prior $p$ is $\pi$-equivariant, we have, when $\sigma \to 0$,
    \begin{equation}
        \|s - s_{\sigma}^{\pi}\|_{\infty, \K} \to 0.
    \end{equation}
\end{proposition}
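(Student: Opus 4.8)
The plan is to exploit the $\pi$-equivariance of $p$ to rewrite the true score in the same averaged form as $s_\sigma^\pi$, and then reduce the whole statement to the uniform convergence of the smoothed score $\nabla \log p_\sigma$ toward $\nabla \log p$ on a fixed compact. By Remark~\ref{remark:score_invariance}, the $\pi$-equivariance of $p$ gives $s = \eE_{G \sim \pi}(\fJ_G^T (s \circ G))$, that is $s(x) = -\eE_{G \sim \pi}[\fJ_G^T(x)(\nabla \log p)(G(x))]$. Subtracting the definition~\eqref{eq:equivariant score} of $s_\sigma^\pi$ yields
\begin{align*}
    s(x) - s_\sigma^\pi(x) = -\eE_{G \sim \pi}\left[\fJ_G^T(x)\big((\nabla \log p)(G(x)) - (\nabla \log p_\sigma)(G(x))\big)\right],
\end{align*}
so that, by the triangle inequality and submultiplicativity of the operator norm,
\begin{align*}
    \|s(x) - s_\sigma^\pi(x)\| \le \eE_{G \sim \pi}\left[\vvvert \fJ_G(x) \vvvert \, \big\|(\nabla \log p - \nabla \log p_\sigma)(G(x))\big\|\right].
\end{align*}

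Next I would localize the estimate. Fixing a compact $\K$, Assumption~\ref{ass:g_bounded} forces every $G(x)$ with $x \in \K$ into $\L = \mathcal{B}(0, C_\K)$, hence $\|(\nabla \log p - \nabla \log p_\sigma)(G(x))\| \le \|\nabla \log p - \nabla \log p_\sigma\|_{\infty, \L}$. Bounding $\eE_{G \sim \pi}[\vvvert \fJ_G(x) \vvvert]$ uniformly in $x$ by a finite constant $M$ via Assumption~\ref{ass:first_moment_finite}(b), I obtain
\begin{align*}
    \|s - s_\sigma^\pi\|_{\infty, \K} \le M \, \|\nabla \log p - \nabla \log p_\sigma\|_{\infty, \L},
\end{align*}
so the proposition reduces to showing $\|\nabla \log p - \nabla \log p_\sigma\|_{\infty, \L} \to 0$ as $\sigma \to 0$.

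The main work, and the main obstacle, is this last uniform convergence on the compact $\L$. Since $p_\sigma = \nN_\sigma \ast p$, differentiation under the convolution gives $\nabla p_\sigma = \nN_\sigma \ast \nabla p$. Under Assumption~\ref{ass:p_regularity}(a) both $p$ and $\nabla p$ are bounded and continuous, so the standard approximate-identity argument, splitting the convolution integral into a near region handled by uniform continuity on a slightly enlarged compact and a far region controlled by the vanishing Gaussian mass $\int_{\|z\| > R} \nN_\sigma(z)\, dz \to 0$, yields $p_\sigma \to p$ and $\nabla p_\sigma \to \nabla p$ uniformly on $\L$. I would then write
\begin{align*}
    \nabla \log p_\sigma - \nabla \log p = \frac{p(\nabla p_\sigma - \nabla p) + (p - p_\sigma)\nabla p}{p_\sigma \, p}
\end{align*}
and conclude: the numerator tends to $0$ uniformly on $\L$ by the boundedness of $p$ and $\nabla p$ combined with the two uniform convergences just established, while the denominator stays bounded away from $0$ because $p$ is continuous and strictly positive, hence bounded below by some $c > 0$ on $\L$, with $p_\sigma \ge c/2$ on $\L$ for $\sigma$ small. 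The delicate point is that the convergence must be genuinely uniform rather than merely pointwise, which is precisely why the boundedness in Assumption~\ref{ass:p_regularity}(a) (to tame the Gaussian tails) and the strict positivity of $p$ (to control the denominator) are both indispensable.
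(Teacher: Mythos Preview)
Your proposal is correct and follows essentially the same route as the paper: write $s$ in averaged form via Remark~\ref{remark:score_invariance}, subtract $s_\sigma^\pi$, localize $G(x)$ into $\L = \mathcal{B}(0,C_\K)$ by Assumption~\ref{ass:g_bounded}, pull out the Jacobian moment via Assumption~\ref{ass:first_moment_finite}(b), and reduce to $\|\nabla \log p - \nabla \log p_\sigma\|_{\infty,\L} \to 0$. The only difference is that the paper invokes \cite[Proposition~1]{laumont2023maximum} for this last uniform convergence under Assumption~\ref{ass:p_regularity}(a), whereas you supply a self-contained approximate-identity argument; your sketch is sound and makes the proof independent of that external reference.
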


Proposition~\ref{prop:uniform_cvg} is proved in Appendix~\ref{sec:proof_uniform_cvg}.
From Proposition~\ref{prop:uniform_cvg}, we now deduce  a  critical point convergence when $\sigma \to 0$ in the sense of Kuratowski~\cite{chambolle20231d}.

\begin{assumption}\label{ass:data_fidelity_diff}
    The data-fidelity term in~\eqref{eq:ideal_opt_pblm} is continuously differentiable, i.e. $f \in \mathcal{C}^1(\R^d, \R)$.
\end{assumption}
Assumption~\ref{ass:data_fidelity_diff} is needed to define the critical points of Problem~\eqref{eq:ideal_opt_pblm}. It is verified for a large set of inverse problems, including linear inverse problems with Gaussian noise, phase retrieval or despeckling.

We denote by $\mathbf{S}^\ast$ the set of critical point of $\fF$, 
$\mathbf{S}_{\sigma}$ the set of critical points of $\fF_{\sigma}^{\pi}$,  and $\mathbf{S}$ the limit point of $\mathbf{S}_{\sigma}$ when $\sigma \to 0$, more precisely
\begin{align}
    \mathbf{S} = \{ &x \in \R^d | \exists \sigma_n>0 \text{ decreasing to } 0,\nonumber\\
    &x_n \in \mathbf{S}_{\sigma_n} \text{ such that } x_n \xrightarrow[n \to \infty]{} x\}.
\end{align}

\begin{proposition}\label{prop:critical_points_cvg}
    Under Assumptions~\ref{ass:g_bounded}-\ref{ass:data_fidelity_diff}, if the prior $p$ is $\pi$-equivariant, we have
    \begin{align}
        \mathbf{S} \subset \mathbf{S}^\ast.
    \end{align}
\end{proposition}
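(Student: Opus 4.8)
The plan is to show that any limit point in $\mathbf{S}$ inherits the stationarity condition of $\fF$ by passing to the limit in the stationarity condition of $\fF_{\sigma}^{\pi}$, using the uniform convergence of the equivariant score established in Proposition~\ref{prop:uniform_cvg}. First I would fix $x \in \mathbf{S}$ and, by definition of $\mathbf{S}$, pick a sequence $\sigma_n \to 0$ together with critical points $x_n \in \mathbf{S}_{\sigma_n}$ such that $x_n \to x$. Recalling that $s_{\sigma}^{\pi} = \nabla r_{\sigma}^{\pi}$, the condition $x_n \in \mathbf{S}_{\sigma_n}$ reads $\nabla \fF_{\sigma_n}^{\pi}(x_n) = \nabla f(x_n) + \lambda\, s_{\sigma_n}^{\pi}(x_n) = 0$. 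The goal is to let $n \to \infty$ and obtain $\nabla f(x) + \lambda\, s(x) = 0$, where $s = -\nabla \log p = \nabla r$, which is exactly the membership $x \in \mathbf{S}^\ast$.

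The data-fidelity part is immediate: since $f \in \mathcal{C}^1$ by Assumption~\ref{ass:data_fidelity_diff} and $x_n \to x$, we have $\nabla f(x_n) \to \nabla f(x)$. For the score part, I would introduce a fixed compact set, for instance a closed ball $\K$ containing the convergent sequence $(x_n)_{n}$ and its limit $x$, and split
\begin{equation*}
    s_{\sigma_n}^{\pi}(x_n) - s(x) = \bigl( s_{\sigma_n}^{\pi}(x_n) - s(x_n) \bigr) + \bigl( s(x_n) - s(x) \bigr).
\end{equation*}
The second bracket tends to $0$ because $s = -\nabla \log p$ is continuous on $\R^d$ (Assumption~\ref{ass:p_regularity}(a) guarantees $p \in \mathcal{C}^1(\R^d, ]0,+\infty[)$, so the score is well defined and continuous) and $x_n \to x$. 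The first bracket is controlled by $\|s_{\sigma_n}^{\pi} - s\|_{\infty, \K}$, which tends to $0$ as $\sigma_n \to 0$ by Proposition~\ref{prop:uniform_cvg} (whose hypotheses, Assumptions~\ref{ass:g_bounded}-\ref{ass:first_moment_finite} together with the $\pi$-equivariance of $p$, are in force here). Hence $s_{\sigma_n}^{\pi}(x_n) \to s(x)$, and passing to the limit in the stationarity condition yields $\nabla f(x) + \lambda\, s(x) = 0$, i.e. $x \in \mathbf{S}^\ast$.

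The delicate point is that we are simultaneously moving the evaluation point ($x_n \to x$) and the parameter ($\sigma_n \to 0$), so pointwise convergence of $s_{\sigma}^{\pi}$ to $s$ would not suffice to interchange the two limits. This is precisely why the uniform estimate of Proposition~\ref{prop:uniform_cvg} over a fixed compact is needed: it lets me bound $\|s_{\sigma_n}^{\pi}(x_n) - s(x_n)\|$ uniformly in $n$ by $\|s_{\sigma_n}^{\pi} - s\|_{\infty, \K} \to 0$. Once this obstacle is handled by invoking the uniform convergence, the remainder of the argument is a routine continuity passage to the limit, and the inclusion $\mathbf{S} \subset \mathbf{S}^\ast$ follows.
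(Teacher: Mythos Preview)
Your proof is correct and follows essentially the same approach as the paper: both pick a converging sequence $x_n \in \mathbf{S}_{\sigma_n}$, confine it to a compact $\K$, invoke Proposition~\ref{prop:uniform_cvg} to get $\|s_{\sigma_n}^{\pi}-s\|_{\infty,\K}\to 0$, and then pass to the limit using continuity of $\nabla f$ and $s$. Your explicit splitting of $s_{\sigma_n}^{\pi}(x_n)-s(x)$ and the remark on why uniform (not merely pointwise) convergence is needed make the argument slightly more detailed than the paper's version, but the strategy is the same.
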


Proposition~\ref{prop:critical_points_cvg} is proved in Appendix~\ref{sec:proof_critical_points_cvg}.
\vspace*{-0.25cm}

\section{Equivariant Plug-and-Play (EPnP)}\label{sec:epnp}

In the previous section, we have studied ERED (Algorithm~\ref{alg:ERED}) and proved its convergence. In this algorithm, the pretrained denoiser is used to approximate a gradient-step on the regularization. In another line of research, named Plug-and-Play~\cite{venkatakrishnan2013plug} or PnP, a Gaussian denoiser approximates a proximal-step on the regularization. In section~\ref{ssec:epnp}, we first introduce the PnP counterpart of ERED, that we call Equivariant Plug-and-Play (EPnP).
EPnP is then reformulated as a Perturbed proximal gradient descent algorithm in order to analyse its convergence in section~\ref{ssec: conv_epnp}.

\subsection{EPnP: a Perturbed Proximal Gradient Descent algorithm}\label{ssec:epnp}

We first propose to inject the equivariant denoiser $\tilde D_{\sigma}$ defined in equation~\eqref{eq:equiv_denoiser} into the Plug-and-Play algorithm, leading to Equivariant Plug-and-Play (EPnP, Algorithm~\ref{alg:EPnP}). This algorithm was previously proposed in~\cite{terris2024equivariant} in the particular case of $\G$ being a finite set of isometries.

\begin{algorithm}
\caption{EPnP}\label{alg:EPnP}
\begin{algorithmic}[1]
\STATE \textbf{Parameters:} $x_0 \in \R^d$, $\sigma > 0$, $\lambda > 0$, $N \in \N$
\STATE \textbf{Input:} degraded image $y$
\STATE \textbf{Output:} restored image $x_{N}$
\FOR{$k = 0, 1, \dots, N-1$}
    \STATE Sample $G \sim \pi$
    \STATE $y_k = x_k - \frac{1}{\lambda} \nabla f(x_k)$
    \STATE $x_{k+1} = \fJ_{G}^T(y_k) D_{\sigma}\left(G\left(y_k \right)\right)$
\ENDFOR
\end{algorithmic}
\end{algorithm}

We now study a setting where EPnP
can be reformulated as a Perturbed Proximal Gradient Descent (PGD). To do so, we first need the denoiser to induce a proximal step on a regularization.

\begin{assumption}\label{ass:denoiser_structure}
The denoiser is a gradient-step, i.e. there exists $h_{\sigma}: \R^d \to \R$ such that $D_{\sigma} = I_d - \nabla h_{\sigma}$. Moreover, $h_{\sigma}$ is $\mathcal{C}^2$ and $\nabla h_{\sigma}$ is $L_h$-Lipschitz with $L_h < 1$.
\end{assumption}
We then rely on the denoiser architecture  $D_{\sigma}$ verifying Assumption~\ref{ass:denoiser_structure} and proposed in ~\cite{hurault2024convergent}.

\begin{proposition}{\cite[Proposition 5]{hurault2024convergent}}\label{prop:denoiser_is_a_prox}
Under Assumption~\ref{ass:denoiser_structure}, there exists a  potential $g_{\sigma}$ such that
\begin{align}
    D_{\sigma} = \mathsf{Prox}_{g_{\sigma}},
\end{align}
where $g_{\sigma}$ is $\frac{L_h}{L_h+1}$-weakly convex.
\end{proposition}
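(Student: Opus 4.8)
The plan is to exhibit $g_\sigma$ explicitly through a conjugacy construction and then read off its weak-convexity modulus from the curvature bounds that Assumption~\ref{ass:denoiser_structure} imposes on the denoiser.

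First I would introduce the auxiliary potential $\psi_\sigma(x) = \tfrac12\|x\|^2 - h_\sigma(x)$, so that $\nabla\psi_\sigma = I_d - \nabla h_\sigma = D_\sigma$. Since $h_\sigma$ is $\mathcal{C}^2$ and $\nabla h_\sigma$ is $L_h$-Lipschitz, we have $-L_h I_d \preceq \nabla^2 h_\sigma \preceq L_h I_d$, hence $(1-L_h)I_d \preceq \nabla^2\psi_\sigma \preceq (1+L_h)I_d$. As $L_h<1$, this shows that $\psi_\sigma$ is $(1-L_h)$-strongly convex with $(1+L_h)$-Lipschitz gradient; in particular $\psi_\sigma$ is convex, lower semicontinuous, and $\nabla\psi_\sigma = D_\sigma$ is a bijection of $\R^d$.

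The key step is to define $g_\sigma(y) = \psi_\sigma^*(y) - \tfrac12\|y\|^2$, where $\psi_\sigma^*$ is the Fenchel conjugate (this is the classical characterization of proximity operators via conjugacy). I would then compute the proximal map directly: completing the square gives
\[
\mathsf{Prox}_{g_\sigma}(x) = \argmin_{z}\Big(\tfrac12\|x-z\|^2 + \psi_\sigma^*(z) - \tfrac12\|z\|^2\Big) = \argmax_z\big(\langle x,z\rangle - \psi_\sigma^*(z)\big),
\]
and the maximizer on the right is exactly $\nabla\psi_\sigma^{**}(x) = \nabla\psi_\sigma(x) = D_\sigma(x)$, using Fenchel--Moreau ($\psi_\sigma^{**}=\psi_\sigma$) and the fact that the maximizer of the conjugate problem is the gradient. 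Equivalently, since $g_\sigma$ is differentiable, the stationarity condition reads $z - x + \nabla g_\sigma(z) = \nabla\psi_\sigma^*(z) - x = 0$, which identifies $z = (\nabla\psi_\sigma^*)^{-1}(x) = \nabla\psi_\sigma(x) = D_\sigma(x)$. Uniqueness of the minimizer follows once $g_\sigma$ is shown to be weakly convex with modulus strictly below $1$, so that the prox objective is strongly convex.

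Finally, for the weak-convexity modulus I would pass the curvature bounds through the conjugate: by the inverse function theorem $\nabla^2\psi_\sigma^*(y) = \big(\nabla^2\psi_\sigma(\nabla\psi_\sigma^*(y))\big)^{-1}$, so inverting $(1-L_h)I_d \preceq \nabla^2\psi_\sigma \preceq (1+L_h)I_d$ yields $\tfrac{1}{1+L_h}I_d \preceq \nabla^2\psi_\sigma^* \preceq \tfrac{1}{1-L_h}I_d$. Consequently $\nabla^2 g_\sigma = \nabla^2\psi_\sigma^* - I_d \succeq \big(\tfrac{1}{1+L_h}-1\big)I_d = -\tfrac{L_h}{1+L_h}I_d$, which is exactly the assertion that $g_\sigma$ is $\tfrac{L_h}{L_h+1}$-weakly convex (and the modulus is $<1$, securing the uniqueness used above). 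The main obstacle is the careful handling of the conjugate duality, namely verifying that $\psi_\sigma^*$ is $\mathcal{C}^2$ with $\nabla\psi_\sigma^* = (\nabla\psi_\sigma)^{-1}$ and that $\psi_\sigma^{**}=\psi_\sigma$; these rely on $\psi_\sigma$ being simultaneously strongly convex and smooth, after which every remaining step is a direct computation.
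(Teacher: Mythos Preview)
Your argument is correct and self-contained. Note, however, that the paper does not prove Proposition~\ref{prop:denoiser_is_a_prox} itself: it is quoted verbatim from \cite[Proposition~5]{hurault2024convergent}, and the only trace of the construction in the present paper is the explicit formula recalled in Appendix~\ref{sec:def_subanalytic},
\[
g_{\sigma}(x) = h_{\sigma}(D_{\sigma}^{-1}(x)) - \tfrac{1}{2}\|D_{\sigma}^{-1}(x) - x\|^2 + K,
\]
which is exactly what one obtains by evaluating your $g_\sigma(y)=\psi_\sigma^*(y)-\tfrac12\|y\|^2$ at the maximizer $x^*=D_\sigma^{-1}(y)$ of the conjugate. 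So your Fenchel-conjugate route recovers the same potential; the added value of your write-up is that the weak-convexity constant $\tfrac{L_h}{1+L_h}$ drops out transparently from the Hessian inversion $\nabla^2\psi_\sigma^*=(\nabla^2\psi_\sigma)^{-1}$, whereas the cited reference derives it by a direct (and slightly longer) computation on the explicit formula above.
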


By definition, the iterates of EPnP $x_k$ belong to $\text{Im}(D_{\sigma})$ for $k \ge 1$. 
Note that $\text{Im}(D_{\sigma})$ is not necessary convex for a deep neural network $D_{\sigma}$ due to the non-linearity of activations. It is nevertheless shown in~\cite[Lemma 1(iv)]{renaud2025stability}  that $\text{Im}(D_{\sigma})$ is open and almost convex in the sense that $\text{Leb}\left( \text{Conv}(\text{Im}(D_{\sigma})) \setminus \text{Im}(D_{\sigma})\right) = 0$, where $\text{Leb}$ is the Lebesgue measure and $\text{Conv}(\text{Im}(D_{\sigma}))$ the convex hull of $\text{Im}(D_{\sigma})$.

Under Assumption~\ref{ass:denoiser_structure}, EPnP can be written as
\begin{align}
    x_{k+1} &=  \fJ_{G}^T(y_k) \left(I_d - \nabla h_{\sigma} \right) \left(G \left( y_k \right) \right) \nonumber \\
    &= \fJ_{G}^T(y_k) G(y_k) - \fJ_{G}^T(y_k) \nabla h_{\sigma}(G(y_k)) \nonumber \\
    &= \fJ_{G}^T(y_k) G(y_k) - \nabla (h_{\sigma} \circ G)(y_k)\label{eq:first_part_reformulation_epnp},
\end{align}
with $G \sim \pi$ a random transformation of $\R^d$ and $y_k = x_k - \frac{1}{\lambda}\nabla f(x_k)$.

\begin{assumption}\label{ass:sructure_on_transf}
All $g \in \G$ are affine transformations with isometric linear parts, i.e. $\forall g \in \G$, there exist $a : \R^d \to \R^d$ a linear isometry and $c \in \R^d$, such that $\forall x \in \R^d$, $g(x) = a x + c$. Moreover, the distribution $\pi$ is unbiased and has uniformly bounded variance, i.e. $\eE_{\pi}\left(J_G^T(x)G(x) \right) = x$ and there exists $\mu \ge 0$ such that $\forall x \in \R^d$, $\eE_{\pi}\left(\|G(x) - \eE_\pi(G(x))\|^2\right) \le \mu^2$.
\end{assumption}
Assumption~\ref{ass:sructure_on_transf} is verified for all sets of transformations described in Section~\ref{sec:equivariant_ex}.

Under Assumptions~\ref{ass:sructure_on_transf}, for all $g \in \G$, we have that $\fJ_g = a$ and $\fJ_g^T = a^{-1}$ and we can reformulate equation~\eqref{eq:first_part_reformulation_epnp} as
\begin{align*}
    x_{k+1} = y_k - A^{-1} \nabla h_{\sigma}(A y_k + C),
\end{align*}
where we decomposed the random variable $G(x) = Ax + C$, with $A$ a random linear isometry and $C$ a random constant.
The function $x \in \R^d \mapsto a^{-1}\nabla h_{\sigma}(a x +c)$ is $L_h$-Lipschitz: for $x, y \in \R^d$
\begin{align*}
    &\|a^{-1}\nabla h_{\sigma}(a x +c) -  a^{-1}\nabla h_{\sigma}(a y +c)\| \\\le& \normiii{a^{-1}} \|\nabla h_{\sigma}(a x +c) - \nabla h_{\sigma}(a y +c)\| \\
    \le& L_h \|a x  - a y\| = L_h \|x  - y\|,
\end{align*}
because $a$ is an isometry and thus $\normiii{a^{-1}} = 1$. Therefore, we can apply Proposition~\ref{prop:denoiser_is_a_prox} and there exists $\tilde g_{\sigma}$ such that 
\begin{align}\nonumber
    \tilde D_{\sigma} 
    &= \eE_{G \sim \pi}\left(D_{\sigma} \circ G\right)\\
    &= I_d - \eE_{G \sim \pi}\left(\nabla (h_{\sigma} \circ G ) \right) = \mathsf{Prox}_{\tilde g_{\sigma}},
\end{align}
with $\tilde D_{\sigma}$ the equivariant denoiser defined in equation~\eqref{eq:equiv_denoiser}. 
Note that the regularization $\tilde g_{\sigma}$  underlying $\tilde D_{\sigma}$ is different from the regularization $g_{\sigma}$ underlying $D_{\sigma}$.

However, at each iteration, we do not compute the equivariant denoiser but only a stochastic estimation of it, with one random transformation $G$. We define this estimator for $x \in \R^d$ by
\begin{align}\nonumber
    \widehat{\mathsf{Prox}}_{\tilde g_{\sigma}}(x) &= x - \nabla \left(h_{\sigma} \circ G\right) (x)\\& = x - A^{-1} \nabla h_{\sigma}(G(x)).
\end{align}
This estimator is unbiased by definition: $\eE_{G \sim \pi}\left(\widehat{\mathsf{Prox}}_{\tilde g_{\sigma}}(x)\right) = \Prox{\tilde g_{\sigma}}{x}$.

Therefore, under Assumptions~\ref{ass:denoiser_structure}-\ref{ass:sructure_on_transf}, EPnP can be reformulated as
\begin{align*}
    x_{k+1} = \widehat{\mathsf{Prox}}_{\tilde g_{\sigma}}\left(x_k - \frac{1}{\lambda} \nabla f(x_k)\right),
\end{align*}
or, equivalently,
\begin{align}\label{eq:perturbe_version_snopnp}
    x_{k+1} &=  \mathsf{Prox}_{\tilde g_{\sigma}}\left( x_k - \frac{1}{\lambda} \nabla f(x_{k}) \right) + \zeta_{k+1},
\end{align}
with $\zeta_{k+1} = \widehat{\mathsf{Prox}}_{\tilde g_{\sigma}}\left( x_k - \delta \nabla f(x_{k}) \right) - \mathsf{Prox}_{\tilde g_{\sigma}}\left( x_k - \delta \nabla f(x_{k}) \right)$. 
The previous algorithm is a Perturbed PGD (PPGD) in the sense that we only access to the proximal operator up to a random perturbation $\zeta_k$.

Notice that the perturbation $\zeta_{k+1}$ is centered, as  $\eE(\zeta_{k+1}|x_k) = 0$ and it has a finite variance
\begin{align}\nonumber
    &\eE(\|\zeta_{k+1}\|^2 | x_k)\\
    \nonumber =&\, \eE\left(\left\|y_k - A^{-1}\nabla h_{\sigma}(G(y_k)) - y_k\right.\right. \\
    \nonumber &\left.\left.+ \eE\left(A^{-1}\nabla h_{\sigma}(G(y_k))\right)\right\|^2\right) \\
    \le&\, \eE(\|\nabla h_{\sigma}(G(y_k)) - \eE\left(\nabla h_{\sigma}(G(y_k))\right)\|^2) \nonumber\\
   \le&\, L_h^2 \eE_{G_1, G_1 \sim \pi}(\|G_1(y_k) - G_2(y_k)\|^2) \nonumber\\
    \le&\,2 L_h^2 \mu^2.\label{eq:variance_zeta}
\end{align}

Hence equation~\eqref{eq:perturbe_version_snopnp} shows that EPnP can be seen as a PPGD algorithm to solve the following optimization problem
\begin{equation}\label{eq:opt_problem_perturbed}
    \argmin_{x \in \R^d} \tilde \fF(x) := f(x) + \lambda \tilde g_{\sigma}(x).
\end{equation}

\subsection{Convergence analysis of EPnP}\label{ssec: conv_epnp}

In the previous section, EPnP is formulated as a Perturbed PGD. To our knowledge such a perturbed evaluation of a proximal operator has not been studied in the literature. There only exist analyses of inexact proximal operators~\cite{salim2019snake,patrascu2021stochastic}
Therefore, in this section we provide the first proof of convergence of a Perturbed PGD algorithm, namely EPnP.

\begin{assumption}\label{ass:regularities_assumptions}
    \noindent
    \textbf{(a)} $f$ is differentiable and $\nabla f$ is $L_f$ Lipschitz.

    \noindent
    \textbf{(b)} $\tilde \fF$ admit a lower bound $\fF^\ast \in \R$, i.e. $\forall x \in \R^d, \tilde \fF(x) \ge \fF^{\ast}$.
\end{assumption}

Assumption~\ref{ass:regularities_assumptions}(a) is verified for a large class of inverse problems, including linear degradation with additive Gaussian noise. Assumption~\ref{ass:regularities_assumptions}(b) is necessary for problem~\eqref{eq:opt_problem_snopnp} to be well defined and is verified in our practical applications.

\begin{lemma}\label{lemma:residual_control_perturbed_pgd}
Under Assumptions~\ref{ass:denoiser_structure}-\ref{ass:sructure_on_transf}-\ref{ass:regularities_assumptions}, for $\lambda \ge 3 L_f$, there exist $C_1, C_2 \in \R_+$ such that there iterates of EPnP (Algorithm~\ref{alg:EPnP}) verifies
\begin{align*}
    \frac{1}{N} \sum_{k=0}^{N-1} \eE(\|x_{k+1} - x_k\|^2) &\le C_1\frac{\tilde \fF(x_0) - \fF^\ast}{N} + C_2 \mu^2.
\end{align*}
\end{lemma}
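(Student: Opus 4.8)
The plan is to establish a one-step sufficient-decrease inequality for the potential $\tilde\fF$ and telescope it, with the random perturbation $\zeta_{k+1}$ contributing the $O(\mu^2)$ bias term. Write $y_k = x_k - \frac1\lambda\nabla f(x_k)$ and let $\bar x_{k+1} = \mathsf{Prox}_{\tilde g_\sigma}(y_k)$ denote the noiseless proximal step, so that by equation~\eqref{eq:perturbe_version_snopnp} we have $x_{k+1} = \bar x_{k+1} + \zeta_{k+1}$, where $\eE(\zeta_{k+1}\mid x_k)=0$ and, by~\eqref{eq:variance_zeta}, $\eE(\|\zeta_{k+1}\|^2\mid x_k)\le 2L_h^2\mu^2$.

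First I would prove the deterministic descent for the noiseless step. By Proposition~\ref{prop:denoiser_is_a_prox} the potential $\tilde g_\sigma$ is $\rho$-weakly convex with $\rho = \frac{L_h}{L_h+1} < \frac12$, using $L_h<1$ from Assumption~\ref{ass:denoiser_structure}. The optimality condition of the prox gives $y_k - \bar x_{k+1}\in\partial\tilde g_\sigma(\bar x_{k+1})$, and the weak-convexity inequality evaluated at $x_k$ yields $\tilde g_\sigma(x_k)\ge\tilde g_\sigma(\bar x_{k+1}) + \langle y_k-\bar x_{k+1}, x_k-\bar x_{k+1}\rangle - \frac\rho2\|x_k-\bar x_{k+1}\|^2$. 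Combining this (multiplied by $\lambda$) with the descent lemma for the $L_f$-smooth $f$ (Assumption~\ref{ass:regularities_assumptions}(a)) and substituting $y_k = x_k - \frac1\lambda\nabla f(x_k)$ makes the two gradient cross-terms cancel exactly, leaving
\[
\tilde\fF(\bar x_{k+1}) \le \tilde\fF(x_k) - c\,\|\bar x_{k+1}-x_k\|^2, \qquad c := \lambda\Bigl(1-\tfrac\rho2\Bigr) - \tfrac{L_f}{2}.
\]
Since $\rho<\frac12$, one has $c>\frac34\lambda - \frac12 L_f$, which is strictly positive once $\lambda\ge 3L_f$.

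Next I would absorb the perturbation, for which the key regularity fact is that $\tilde g_\sigma$ has a Lipschitz gradient. Because $\tilde D_\sigma = I_d - \nabla\tilde h_\sigma$ with $\nabla\tilde h_\sigma$ being $L_h$-Lipschitz and $L_h<1$, the map $\tilde D_\sigma$ is a bijection with $\frac{1}{1-L_h}$-Lipschitz inverse, and $\nabla\tilde g_\sigma = \tilde D_\sigma^{-1} - I_d$, so $\nabla\tilde g_\sigma$ is $\frac{2-L_h}{1-L_h}$-Lipschitz; hence $\tilde\fF = f+\lambda\tilde g_\sigma$ is $L_{\tilde\fF}$-smooth with $L_{\tilde\fF}=L_f+\lambda\frac{2-L_h}{1-L_h}$. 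Applying the descent lemma for $\tilde\fF$ at $\bar x_{k+1}$ along $\zeta_{k+1}$, then taking conditional expectations, kills the first-order term (as $\bar x_{k+1}$ is $x_k$-measurable and $\eE(\zeta_{k+1}\mid x_k)=0$) and gives $\eE(\tilde\fF(x_{k+1})\mid x_k)\le\tilde\fF(\bar x_{k+1}) + L_{\tilde\fF}L_h^2\mu^2$. Moreover, since $\zeta_{k+1}$ is centered, $\eE(\|x_{k+1}-x_k\|^2\mid x_k) = \|\bar x_{k+1}-x_k\|^2 + \eE(\|\zeta_{k+1}\|^2\mid x_k)$, whence $\|\bar x_{k+1}-x_k\|^2\ge\eE(\|x_{k+1}-x_k\|^2\mid x_k) - 2L_h^2\mu^2$.

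Combining the three estimates gives the stochastic one-step inequality $\eE(\tilde\fF(x_{k+1})\mid x_k)\le\tilde\fF(x_k) - c\,\eE(\|x_{k+1}-x_k\|^2\mid x_k) + (2c+L_{\tilde\fF})L_h^2\mu^2$. Taking full expectations, summing over $k=0,\dots,N-1$ so that the $\tilde\fF$ terms telescope, and using the lower bound $\tilde\fF(x_N)\ge\fF^\ast$ (Assumption~\ref{ass:regularities_assumptions}(b)), I divide by $cN$ to obtain the claimed bound with $C_1 = 1/c$ and $C_2 = (2c+L_{\tilde\fF})L_h^2/c$. The step I expect to be the main obstacle is the regularity argument of the third paragraph: rigorously justifying that $\tilde g_\sigma$ is $\mathcal{C}^1$ with Lipschitz gradient so that the perturbation is absorbed into an $O(\mu^2)$ term. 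This hinges on the invertibility of $\tilde D_\sigma$ and on controlling the domain $\mathrm{Im}(\tilde D_\sigma)$ on which $\nabla\tilde g_\sigma$ is defined (the openness and almost-convexity of $\mathrm{Im}(D_\sigma)$ recalled after Proposition~\ref{prop:denoiser_is_a_prox}), since the perturbed iterate $x_{k+1}=\bar x_{k+1}+\zeta_{k+1}$ need not a priori remain in that set.
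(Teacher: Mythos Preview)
Your proof is correct and follows a genuinely different route from the paper's. The paper never evaluates $\tilde\fF$ at the perturbed iterates $x_k$: it tracks the potential along the \emph{noiseless} points $x_k-\zeta_k$ (your $\bar x_k$), applies the $L_f$-smoothness of $f$ between $x_k-\zeta_k$ and $x_{k+1}-\zeta_{k+1}$, substitutes the prox optimality condition to rewrite $\nabla f(x_k)$, and then uses only the $\rho$-weak convexity of $\tilde g_\sigma$ (via Lemma~\ref{lemma:inequality_weakly_cvx}) between $x_{k+1}-\zeta_{k+1}$ and $x_k-\zeta_k$. All $\zeta$-contributions appear as cross-terms bounded through~\eqref{eq:variance_zeta}; no smoothness of $\tilde g_\sigma$ is invoked anywhere. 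The resulting constants are $C_1=\frac{2}{\lambda(2-\rho)-3L_f}$ and $C_2=\frac{2(9L_f+8\rho\lambda+2\lambda)}{\lambda(2-\rho)-3L_f}L_h^2$.

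Your argument is more modular---a clean deterministic prox-descent from $x_k$ to $\bar x_{k+1}$, followed by a separate stochastic perturbation step---but you buy this at the price of the extra regularity input that $\nabla\tilde g_\sigma$ be globally Lipschitz. The domain concern you raise at the end is actually not an obstacle under Assumption~\ref{ass:denoiser_structure}: since $\nabla\tilde h_\sigma$ is an $L_h$-contraction with $L_h<1$, the equation $x-\nabla\tilde h_\sigma(x)=z$ has a unique solution for every $z\in\R^d$ by Banach's fixed-point theorem, so $\tilde D_\sigma$ is a global bijection, $\mathrm{Im}(\tilde D_\sigma)=\R^d$, and $\nabla\tilde g_\sigma=\nabla\tilde h_\sigma\circ\tilde D_\sigma^{-1}$ is $\frac{L_h}{1-L_h}$-Lipschitz everywhere. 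Hence your proof goes through without restriction (and with a slightly milder step-size requirement than $\lambda\ge 3L_f$). The paper's route has the advantage of surviving under weak convexity alone, without any smoothness of $\tilde g_\sigma$; this would matter, for instance, if one only had $L_h\le 1$. Under the stated strict inequality $L_h<1$, both arguments are valid.
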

Lemma~\ref{lemma:residual_control_perturbed_pgd} shows that we can control the residuals of the iterates of EPnP to solve problem~\eqref{eq:opt_problem_perturbed}. Its proof is provided in Appendix~\ref{sec:proox_residual_control_perturbed_pgd}

Now, we have all the tools to prove the main convergence result for EPnP.

\begin{proposition}\label{prop:cvg_critical_points_prox_perturbed}
 Under Assumption~\ref{ass:denoiser_structure} and~\ref{ass:regularities_assumptions}, for $\lambda \ge 3 L_f$, there exist $D_1, D_2 >0$ independent of $\sigma$ such that
 \begin{align*}
     \frac{1}{N}  \hspace{-1.5pt}\sum_{k=0}^{N-1} \eE(\|\nabla \tilde \fF(x_k - \zeta_k)\|^2) \le D_1 \frac{\tilde \fF(x_0)  \hspace{-1.5pt}- \hspace{-1.5pt}\fF^\ast}{N} + D_2 \mu^2 \hspace{-1.5pt}.
 \end{align*}
\end{proposition}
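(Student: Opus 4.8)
The plan is to convert the residual control of Lemma~\ref{lemma:residual_control_perturbed_pgd} into a gradient control by exploiting the first-order optimality of the (deterministic) proximal step underlying EPnP. Write $y_k = x_k - \frac{1}{\lambda}\nabla f(x_k)$ and $p_k = \mathsf{Prox}_{\tilde g_{\sigma}}(y_k)$, so that the perturbed reformulation~\eqref{eq:perturbe_version_snopnp} reads $x_{k+1} = p_k + \zeta_{k+1}$, i.e.\ $p_k = x_{k+1} - \zeta_{k+1}$. Since $p_k = \tilde D_{\sigma}(y_k) \in \mathrm{Im}(\tilde D_{\sigma})$ and, by Proposition~\ref{prop:denoiser_is_a_prox} (following~\cite{hurault2024convergent}), the potential $\tilde g_{\sigma}$ is differentiable on this image with $\nabla \tilde g_{\sigma}(p_k) = y_k - p_k$, the objective $\tilde\fF = f + \lambda \tilde g_{\sigma}$ is differentiable at $p_k$ and satisfies
\[
\nabla \tilde\fF(p_k) = \nabla f(p_k) + \lambda(y_k - p_k) = \big(\nabla f(p_k) - \nabla f(x_k)\big) + \lambda(x_k - p_k),
\]
where the last equality uses $\lambda y_k = \lambda x_k - \nabla f(x_k)$.

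Next I would bound this gradient by the iterate residual. The $L_f$-Lipschitzness of $\nabla f$ (Assumption~\ref{ass:regularities_assumptions}(a)) gives $\|\nabla f(p_k) - \nabla f(x_k)\| \le L_f \|x_k - p_k\|$, hence $\|\nabla \tilde\fF(p_k)\| \le (L_f + \lambda)\|x_k - p_k\|$. Writing $x_k - p_k = (x_k - x_{k+1}) + \zeta_{k+1}$ and using $\|a+b\|^2 \le 2\|a\|^2 + 2\|b\|^2$ yields
\[
\|\nabla \tilde\fF(p_k)\|^2 \le 2(L_f + \lambda)^2 \big( \|x_{k+1} - x_k\|^2 + \|\zeta_{k+1}\|^2 \big).
\]
Averaging over $k$, taking expectations, and inserting the two available controls finishes the argument: Lemma~\ref{lemma:residual_control_perturbed_pgd} bounds $\frac{1}{N}\sum_k \eE(\|x_{k+1} - x_k\|^2) \le C_1 \frac{\tilde\fF(x_0) - \fF^\ast}{N} + C_2 \mu^2$ (which is where the hypothesis $\lambda \ge 3 L_f$ enters), while the variance estimate~\eqref{eq:variance_zeta} gives $\eE(\|\zeta_{k+1}\|^2) \le 2 L_h^2 \mu^2$. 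Setting $D_1 = 2(L_f+\lambda)^2 C_1$ and $D_2 = 2(L_f+\lambda)^2(C_2 + 2 L_h^2)$ produces the claim; these constants are independent of $\sigma$ since $L_f, L_h, \lambda, C_1, C_2$ all are. Recognizing $p_k = x_{k+1} - \zeta_{k+1}$ and re-indexing the sum (with the convention $\zeta_0 = 0$, so the $k=0$ term reads $\nabla\tilde\fF(x_0)$ and a single boundary term is dropped) matches the indexing $x_k - \zeta_k$ of the statement.

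The main obstacle is the exact gradient identity $\nabla \tilde g_{\sigma}(p_k) = y_k - p_k$. Because $\tilde g_{\sigma}$ is only weakly convex and a priori nonsmooth, one must invoke the regularity of the potential constructed in Proposition~\ref{prop:denoiser_is_a_prox} and verify that each iterate $p_k$ genuinely lands in $\mathrm{Im}(\tilde D_{\sigma})$, on which $\tilde g_{\sigma}$ is differentiable and the first-order optimality of the prox sharpens into the stated gradient equality; this is the step where the argument could fail if the differentiability domain were not respected. A secondary, more bookkeeping subtlety is that the perturbed PGD reformulation itself — and hence the entire chain of equalities above — also relies on the transformation structure of Assumption~\ref{ass:sructure_on_transf} (affine isometric transformations with unbiased, bounded-variance law), in addition to the two assumptions explicitly cited in the statement.
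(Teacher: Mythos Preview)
Your proof is correct and follows essentially the same strategy as the paper: use the first-order optimality of the proximal step to express $\nabla\tilde\fF(x_{k+1}-\zeta_{k+1})$ in terms of the residual $x_{k+1}-x_k$ and the perturbation $\zeta_{k+1}$, then invoke Lemma~\ref{lemma:residual_control_perturbed_pgd} and the variance bound~\eqref{eq:variance_zeta}. Your execution is in fact slightly more direct than the paper's: you obtain the clean bound $\|\nabla\tilde\fF(p_k)\|\le (L_f+\lambda)\|x_k-p_k\|$ via a single triangle inequality, whereas the paper arrives at the same conclusion through an add-and-subtract decomposition that produces looser constants. Your remarks on the differentiability of $\tilde g_{\sigma}$ at $p_k$ and on the implicit reliance on Assumption~\ref{ass:sructure_on_transf} for the perturbed-PGD reformulation are accurate and more explicit than the paper's treatment.
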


Proposition~\ref{prop:cvg_critical_points_prox_perturbed} shows that, up to a random noise $\zeta_k$, the iterates get close to critical points of problem~\eqref{eq:opt_problem_perturbed}. Moreover, the precision of the algorithm is controlled by the random transformation fluctuations with $\mu > 0$ introduced in Assumption~\ref{ass:sructure_on_transf}.

\begin{proof}
By the optimal condition of the proximal operator, we have from~\eqref{eq:perturbe_version_snopnp}
\begin{align*}
    x_{k+1} - x_k = \zeta_{k+1} - \frac{1}{\lambda} \nabla f(x_k) - \nabla \tilde g_{\sigma}(x_{k+1} - \zeta_{k+1}).
\end{align*}

So, recalling that $\tilde \fF=f+\tilde g_\sigma$, we get
\begin{align*}
    &\|\nabla \tilde \fF(x_{k+1} - \zeta_{k+1})\|^2 = \lambda^2 \|\frac{1}{\lambda}\nabla \tilde \fF(x_{k+1} - \zeta_{k+1})\|^2 \\
    \le&\, 2\lambda^2 \|\frac{1}{\lambda}\nabla \tilde \fF(x_{k+1} - \zeta_{k+1}) - (x_{k+1} - x_k)\|^2 \\&+ 2 \lambda^2 \|x_{k+1} - x_k\|^2 \\
    \le&\,  2\lambda^2 \|\frac{1}{\lambda} \left(\nabla f(x_{k+1}- \zeta_{k+1}) - \nabla f(x_k) \right) + \zeta_{k+1} \|^2 \\&+ 2 \lambda^2 \|x_{k+1} - x_k\|^2 \\
    \le&\,  4 \|\nabla f(x_{k+1}- \zeta_{k+1}) - \nabla f(x_k) \|^2 + 4 \lambda^2 \|\zeta_{k+1}\|^2 \\&+ 2 \lambda^2 \|x_{k+1} - x_k\|^2 \\
    \le&\,  8 L_f \|x_{k+1} - x_k \|^2 + 8 L_f \|\zeta_{k+1} \|^2 + 4 \lambda^2 \|\zeta_{k+1}\|^2 \\&+ 2 \lambda^2 \|x_{k+1} - x_k\|^2 \\
   \le&\,  (8 L_f + 2 \lambda^2) \|x_{k+1} - x_k\|^2 + (8 L_f + 4 \lambda^2) \|\zeta_{k+1}\|^2.
\end{align*}

By taking the expectation in the previous inequality, averaging between 0 and $N-1$,  using relation~\eqref{eq:variance_zeta} and Lemma~\ref{lemma:residual_control_perturbed_pgd}, we get
\begin{align*}
    &\frac{1}{N} \sum_{k=0}^{N-1} \eE(\|\nabla \tilde \fF(x_k - \zeta_k)\|^2) \\\le&\, (4 L_f + 2 \lambda^2) \frac{1}{N} \sum_{k=0}^{N-1}\eE(\|x_{k+1} - x_k\|^2) \\
    &+ 2(8 L_f + 4 \lambda^2) L_h^2 \mu^2 \\
    \le&\, (4 L_f + 2 \lambda^2) C_1 \frac{\tilde \fF(x_0) - \fF^\ast}{N} \\&+ \left((4 L_f + 2 \lambda^2)C_2 + 2(8 L_f + 4 \lambda^2) L_h^2\right) \mu^2 \\
    \le&\, D_1 \frac{\tilde \fF(x_0) - \fF^\ast}{N} + D_2 \mu^2,
\end{align*}
with the constants $D_1 = \frac{2(4 L_f + 2 \lambda^2)}{(\lambda (2 - \rho) - 3L_f)}$ and $D_2 = (4 L_f + 2 \lambda^2)\frac{2(11 L_f + 8 \rho \lambda + 2 \lambda)}{\lambda (2 - \rho) - 3L_f} L_h^2 + 2(8 L_f + 4 \lambda^2) L_h^2$.
\end{proof}

\section{Stochastic denoising Plug-and-Play (SnoPnP)}\label{sec:snopnp}

A particularly interesting case of EPnP (Algorithm~\ref{alg:EPnP}) is the case where the transformation used is a random translation with a Gaussian-distributed shift (see \textit{Noising-denoising} paragraph in Section~\ref{sec:equivariant_ex}). In this case, the Equivariant denoiser consists in noising the iterate and denoising it.
We name this algorithm Stochastic denoising Plug-and-Play (SnoPnP, Algorithm~\ref{alg:SnoPnP}) which can be seen as a Plug-and-Play counterpart of SNORE~\cite{renaud2024plugandplayimagerestorationstochastic}, with an approximate proximal step, instead of a gradient step, realized on the regularization. 
\begin{algorithm}
\caption{SnoPnP}\label{alg:SnoPnP}
\begin{algorithmic}[1]
\STATE \textbf{Parameters:} $x_0 \in \R^d$, $\sigma > 0$, $\lambda > 0$, $N \in \N$
\STATE \textbf{Input:} degraded image $y$
\STATE \textbf{Output:} restored image $x_{N}$
\FOR{$k = 0, 1, \dots, N-1$}
    \STATE Sample $z_{k+1} \sim \mathcal{N}(0, I_d)$
    \STATE $x_{k+1} = D_{\sigma}\left(x_k - \frac{1}{\lambda} \nabla f(x_k) + \sigma z_{k+1}\right)$
\ENDFOR
\end{algorithmic}
\end{algorithm}

For SnoPnP, which is a particular case of EPnP, we can refine the convergence analysis presented in Section~\ref{sec:epnp}.
Under Assumption~\ref{ass:denoiser_structure}, SnoPnP can be written as
\begin{align}\label{eq:snopnp_with_denoiser_structure}
    x_{k+1} &=  \mathsf{Prox}_{g_{\sigma}}\left( x_k - \frac{1}{\lambda} \nabla f(x_k) + \sigma z_{k+1} \right).
\end{align}

By denoting $\tilde \nabla f(x_k) = \nabla f(x_k) - \sigma \lambda z_{k+1}$,
which is an unbiased stochastic approximation of $\nabla f(x_k)$,
SnoPnP appears to be a Stochastic Proximal Gradient Descent (SPGD) algorithm,
\begin{align}\label{eq:snopnp_iterates}
    x_{k+1} &=  \mathsf{Prox}_{g_{\sigma}}\left( x_k - \frac{1}{\lambda} \tilde \nabla f(x_{k}) \right),
\end{align}
to solve the optimization problem defined by
\begin{align}\label{eq:opt_problem_snopnp}
    \argmin_{x \in \R^d} F(x) := f(x)+ \lambda g_{\sigma}(x) 
\end{align}

Stochastic proximal gradient descent algorithms have been extensively analyzed with a convex regularization, i.e. $g_{\sigma}$  convex~\cite{xiao2014proximal,nitanda2014stochastic,j2016proximal,ghadimi2016mini,atchade2017perturbed,allen2018katyusha,ding2023nonconvex,xu2023momentum}, and more recently with weakly convex regularizations~\cite{li2022unified,renaud2024convergenceanalysisproximalstochastic}.
Thanks to Proposition~\ref{prop:denoiser_is_a_prox}, we work here in a weakly convex setting.

\begin{lemma}[\cite{renaud2024convergenceanalysisproximalstochastic}]\label{lemma:residual_control_snopnp}
Under Assumptions~\ref{ass:denoiser_structure} and~\ref{ass:regularities_assumptions}, for $\lambda \ge L_f$, there exist $A_1, A_2 \in \R_+$ such that
\begin{align}
    \frac{1}{N} \hspace{-1.5pt}\sum_{k=0}^{N-1} \eE_k\hspace{-1.5pt}\left(\|x_{k+1} - x_k\|^2\right) \hspace{-1.5pt}\le \frac{A_1(F(x_0)\hspace{-1.pt} -\hspace{-1.5pt} F^\ast)}{N}\hspace{-1.5pt} +\hspace{-1.5pt}  A_2 \sigma^2.
\end{align}
\end{lemma}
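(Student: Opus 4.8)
The plan is to exploit Proposition~\ref{prop:denoiser_is_a_prox}: the potential $g_{\sigma}$ is $\rho$-weakly convex with $\rho = \frac{L_h}{L_h+1} < 1$, so that iteration~\eqref{eq:snopnp_iterates} is exactly a stochastic proximal gradient step on $F = f + \lambda g_{\sigma}$ with step-size $\gamma = \frac{1}{\lambda}$ and the unbiased stochastic gradient $\tilde \nabla f(x_k) = \nabla f(x_k) - \sigma \lambda z_{k+1}$, whose noise $e_k := \tilde \nabla f(x_k) - \nabla f(x_k) = -\sigma \lambda z_{k+1}$ satisfies $\eE_k(e_k) = 0$ and $\eE_k(\|e_k\|^2) = \sigma^2 \lambda^2 d$. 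The target is a one-step descent inequality on $F$ of the form $\eE_k(F(x_{k+1})) \le F(x_k) - c\,\eE_k(\|x_{k+1}-x_k\|^2) + \bigO(\sigma^2)$, which then telescopes.

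First I would build this descent inequality. From the optimality condition of the proximal step, $x_k - \frac{1}{\lambda}\tilde \nabla f(x_k) - x_{k+1} \in \partial g_{\sigma}(x_{k+1})$; applying the $\rho$-weak-convexity subgradient inequality to $g_{\sigma}$ at $x_{k+1}$ (evaluated at $x_k$) and scaling by $\lambda$ bounds $\lambda g_{\sigma}(x_{k+1})$ by $\lambda g_{\sigma}(x_k)$, a term $-\lambda\|x_{k+1}-x_k\|^2$, the inner product $\langle \nabla f(x_k)+e_k, x_k-x_{k+1}\rangle$, and $\frac{\lambda\rho}{2}\|x_{k+1}-x_k\|^2$. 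Adding the descent lemma for the $L_f$-smooth $f$ cancels the $\nabla f(x_k)$ cross-terms and yields
\begin{align*}
F(x_{k+1}) \le F(x_k) - c\,\|x_{k+1}-x_k\|^2 + \langle e_k, x_k - x_{k+1}\rangle,
\end{align*}
with $c = \lambda\left(1 - \frac{\rho}{2}\right) - \frac{L_f}{2}$. Since $\rho < 1$, the hypothesis $\lambda \ge L_f$ gives $c > \frac{\lambda - L_f}{2} \ge 0$, so the quadratic coefficient is strictly positive.

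The hard part is the stochastic cross-term $\langle e_k, x_k - x_{k+1}\rangle$, which does not vanish in expectation because $x_{k+1}$ depends on $e_k$ (equivalently on $z_{k+1}$). To handle it I would introduce the noise-free iterate $\bar x_{k+1} := \Prox{g_{\sigma}}{x_k - \frac{1}{\lambda}\nabla f(x_k)}$, which is measurable with respect to the past, and split $\langle e_k, x_k - x_{k+1}\rangle = \langle e_k, x_k - \bar x_{k+1}\rangle + \langle e_k, \bar x_{k+1} - x_{k+1}\rangle$. The first summand is centered, so $\eE_k(\langle e_k, x_k - \bar x_{k+1}\rangle) = 0$. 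For the second, I would use that a $\rho$-weakly convex $g_{\sigma}$ with $\rho < 1$ has a $\frac{1}{1-\rho}$-Lipschitz proximal map (the proximal objective being $(1-\rho)$-strongly convex); since $x_{k+1}$ and $\bar x_{k+1}$ are proximal images of inputs differing by $\sigma z_{k+1}$, this gives $\|\bar x_{k+1} - x_{k+1}\| \le \frac{\sigma}{1-\rho}\|z_{k+1}\|$, whence $\eE_k(\langle e_k, \bar x_{k+1}-x_{k+1}\rangle) \le \frac{\sigma^2\lambda d}{1-\rho}$ by Cauchy--Schwarz together with $\|e_k\| = \sigma\lambda\|z_{k+1}\|$.

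Collecting these bounds gives $c\,\eE_k(\|x_{k+1}-x_k\|^2) \le F(x_k) - \eE_k(F(x_{k+1})) + \frac{\sigma^2\lambda d}{1-\rho}$. Summing over $k=0,\dots,N-1$, taking total expectation, telescoping the $F$-differences, and using that $F$ is bounded below by $F^\ast$ (Assumption~\ref{ass:regularities_assumptions}(b)) so that $\eE(F(x_N)) \ge F^\ast$, I obtain after dividing by $cN$ the claimed bound with $A_1 = \frac{1}{c}$ and $A_2 = \frac{\lambda d}{c(1-\rho)}$. I expect the delicate point to be the Lipschitz estimate for the weakly-convex prox and the decomposition ensuring the martingale part of the cross-term vanishes in conditional expectation; the remaining telescoping is routine.
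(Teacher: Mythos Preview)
Your proof is correct and follows essentially the same route as the paper's: the same one-step descent inequality with constant $c=\lambda(1-\rho/2)-L_f/2$, the same splitting of the stochastic cross-term via the noise-free iterate $\bar x_{k+1}$, and the same $\frac{1}{1-\rho}$-Lipschitz bound on the weakly convex prox (which the paper isolates as a separate lemma), leading to identical constants $A_1=1/c$ and $A_2=\lambda d/(c(1-\rho))$.
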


Lemma~\ref{lemma:residual_control_snopnp} provides a control on the residual of the iterates but does not give any convergence guarantees. For completeness, the proof is recalled in Appendix~\ref{sec:proof_residuals_snopnp}. With more regularity on the regularization $g_{\sigma}$, we can derive a convergence result on $\nabla F(x_k)$.

\begin{assumption}\label{ass:g_l_smooth}
     $g_{\sigma}$ is $L_g$-smooth on $\mathsf{Im}(D_{\sigma})$, i.e. $\nabla g_{\sigma}$ is $L_g$-Lipschitz on $\mathsf{Im}(D_{\sigma})$.
\end{assumption}

Assumption~\ref{ass:g_l_smooth} allows to control the behavior of $g_{\sigma}$ on the domain of interest $\mathsf{Im}(D_{\sigma})$. Let us recall here that $\mathsf{Im}(D_{\sigma})$ is an open set because of the image of the proximal operator of a weakly convex function, see~\cite[Proposition 4]{renaud2025moreau}. It is verified by the denoiser proposed in~\cite{hurault2024convergent}.

\begin{proposition}\label{prop:snopnp_convergence_critical_point}
Under Assumption~\ref{ass:denoiser_structure}, \ref{ass:regularities_assumptions} and~\ref{ass:g_l_smooth}, for $\lambda \ge L_f$, there exist $B_1, B_2 \in \R_+$ independent of $\sigma$, such that
\begin{align}\label{eq:prop_snopnp_cvg_critical_point}
    \frac{1}{N} \sum_{k=0}^{N-1}\eE\left(\|\nabla F(x_k)\|^2\right)
    \le \frac{B_1 (F(x_0) - F^\ast)}{N} + B_2 \sigma^2.
\end{align}
\end{proposition}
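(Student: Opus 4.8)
The plan is to exploit the Stochastic Proximal Gradient reformulation~\eqref{eq:snopnp_iterates} of SnoPnP, in which the randomness is carried by the gradient step rather than by the proximal evaluation. This is exactly what makes the guarantee sharper than for EPnP: here the first-order optimality condition of the prox produces $\nabla F$ evaluated at the \emph{genuine} iterate $x_{k+1}$, instead of at a perturbed point $x_{k+1}-\zeta_{k+1}$ as in Proposition~\ref{prop:cvg_critical_points_prox_perturbed}.

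First I would write the optimality condition of the proximal step in~\eqref{eq:snopnp_iterates}. Since $x_{k+1}=\Prox{g_{\sigma}}{x_k-\tfrac1\lambda\tilde\nabla f(x_k)}\in\mathsf{Im}(D_{\sigma})$ and, by Assumption~\ref{ass:g_l_smooth}, $g_{\sigma}$ is differentiable (indeed $L_g$-smooth) on $\mathsf{Im}(D_{\sigma})$, the subdifferential reduces to a genuine gradient and the optimality condition reads
\begin{equation*}
    \lambda(x_k - x_{k+1}) - \tilde\nabla f(x_k) = \lambda \nabla g_{\sigma}(x_{k+1}).
\end{equation*}
Substituting $\tilde\nabla f(x_k) = \nabla f(x_k) - \sigma\lambda z_{k+1}$ and adding $\nabla f(x_{k+1})$ to form $\nabla F = \nabla f + \lambda \nabla g_{\sigma}$, I obtain the key identity
\begin{equation*}
    \nabla F(x_{k+1}) = \bigl(\nabla f(x_{k+1}) - \nabla f(x_k)\bigr) + \lambda(x_k - x_{k+1}) + \sigma\lambda z_{k+1}.
\end{equation*}

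Next I would bound the three terms: applying $\|a+b+c\|^2 \le 3(\|a\|^2+\|b\|^2+\|c\|^2)$, using the $L_f$-Lipschitz continuity of $\nabla f$ (Assumption~\ref{ass:regularities_assumptions}(a)) on the first term and the trivial bound on the second yields
\begin{equation*}
    \|\nabla F(x_{k+1})\|^2 \le 3(L_f^2 + \lambda^2)\|x_{k+1} - x_k\|^2 + 3\sigma^2\lambda^2 \|z_{k+1}\|^2.
\end{equation*}
Taking expectations, using $\eE(\|z_{k+1}\|^2) = d$, and averaging over $k=0,\dots,N-1$, the residual average $\frac1N\sum_k \eE(\|x_{k+1}-x_k\|^2)$ is controlled by Lemma~\ref{lemma:residual_control_snopnp}, which furnishes $\frac{A_1(F(x_0)-F^\ast)}{N} + A_2\sigma^2$. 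Collecting terms gives the claim with $B_1 = 3(L_f^2+\lambda^2)A_1$ and $B_2 = 3(L_f^2+\lambda^2)A_2 + 3\lambda^2 d$, both independent of $\sigma$ since $A_1, A_2, L_f, \lambda, d$ are, after the harmless reindexing of $\sum_{k=0}^{N-1}\|\nabla F(x_{k+1})\|^2$ into $\sum_{k=1}^{N}\|\nabla F(x_k)\|^2$ (exactly as done in the proof of Proposition~\ref{prop:cvg_critical_points_prox_perturbed}).

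The computation is almost entirely routine; the only genuinely load-bearing point — and hence the main obstacle — is the reduction of the proximal optimality condition to an equation involving the true gradient $\nabla g_{\sigma}(x_{k+1})$. This is precisely why Assumption~\ref{ass:g_l_smooth} is imposed here but not in Lemma~\ref{lemma:residual_control_snopnp}: without smoothness of $g_{\sigma}$ on $\mathsf{Im}(D_{\sigma})$ one would obtain only a subgradient inclusion, and $\nabla F(x_k)$ would not even be defined. One should also trace all constants carefully to confirm their independence of $\sigma$, the explicit $\sigma^2$ dependence being entirely absorbed into the $B_2\sigma^2$ term through the Gaussian variance $d$ and the residual bound of Lemma~\ref{lemma:residual_control_snopnp}.
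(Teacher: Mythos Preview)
Your argument is correct and close in spirit to the paper's, but the decomposition you choose is genuinely different. The paper evaluates $\nabla F$ at $x_k$ and writes $\tfrac{1}{\lambda}\nabla F(x_k) = G_k + \tfrac{1}{\lambda}(\nabla f(x_k)-\tilde\nabla f(x_k)) + (\nabla g_{\sigma}(x_k)-\nabla g_{\sigma}(x_{k+1}))$, where $G_k=x_k-x_{k+1}$; it then bounds the last difference using the $L_g$-Lipschitz continuity of $\nabla g_{\sigma}$ on $\mathsf{Im}(D_{\sigma})$, obtaining $\|\tfrac{1}{\lambda}\nabla F(x_k)\|^2\le 3(1+L_g^2)\|G_k\|^2+\tfrac{3}{\lambda^2}\|\nabla f(x_k)-\tilde\nabla f(x_k)\|^2$, and concludes via Lemma~\ref{lemma:residual_control_snopnp} with constants depending on $L_g$. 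You instead evaluate at $x_{k+1}$, substitute $\lambda\nabla g_{\sigma}(x_{k+1})$ directly from the optimality condition, and shift the burden to the $L_f$-Lipschitz continuity of $\nabla f$; your constants involve $L_f$ rather than $L_g$. Your route is slightly more economical with Assumption~\ref{ass:g_l_smooth}: you only exploit differentiability of $g_{\sigma}$ at $x_{k+1}\in\mathsf{Im}(D_{\sigma})$, never the Lipschitz bound $L_g$ itself, whereas the paper's decomposition genuinely needs $L_g$ to control $\nabla g_{\sigma}(x_k)-\nabla g_{\sigma}(x_{k+1})$. The trade-off is that you must reindex from $\sum_{k=1}^{N}$ to $\sum_{k=0}^{N-1}$, which adds a $\tfrac{1}{N}\|\nabla F(x_0)\|^2$ term not of the advertised form $\tfrac{B_1(F(x_0)-F^\ast)}{N}$; this is the same liberty the paper takes in the proof of Proposition~\ref{prop:cvg_critical_points_prox_perturbed}, so it is harmless at the level of rigor adopted there.
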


Proposition~\ref{prop:snopnp_convergence_critical_point} is shown in Appendix~\ref{sec:appendix_proof_proposition_cvg_critical_points_spgd}.
We interpret this result as the iterates $x_k$ convergence to a level of $\nabla F$ parameterized by $\sigma$, the quantity of noise being added at each iteration. Note that Proposition~\ref{prop:snopnp_convergence_critical_point} refines Proposition~\ref{prop:cvg_critical_points_prox_perturbed} in two ways. First, we directly target the values of $\nabla F$ without additional noise. Secondly, the constraint on the regularization parameter $\lambda$ is refined by a factor $3$.

\begin{remark}
    Proposition~\ref{prop:snopnp_convergence_critical_point} proves that SnoPnP (Algorithm~\ref{alg:SnoPnP}) targets critical points of Problem~\ref{eq:opt_problem_snopnp}. Moreover, it is shown in~\cite{renaud2025stability}  that the algorithm samples the posterior distribution when the standard-deviation of the injected noise is $\sqrt{2} \sigma$ instead of $\sigma$ in SnoPnP. The series of works~\cite{Laumont_2022,laumont2023maximum} exhibit the same result when considering an explicit gradient step on the regularization instead of an implicit one as in SnoPnP.
\end{remark}

\begin{assumption}\label{ass:F_subanalytic}
     \textbf{(a)} $F$ is subanalytic (see definition in Appendix~\ref{sec:def_subanalytic}) on $\R^d$.
     
     \textbf{(b)} $\nabla F$ is coercive, i.e. $\lim_{\|x\|\to+\infty}\|\nabla F(x)\| = +\infty$.
\end{assumption}

Assumption~\ref{ass:F_subanalytic}(a) is verified for a large class of functions including all standard functions, i.e. compositions of polynomials, exponential or logarithmic functions. 
Assumption~\ref{ass:F_subanalytic}(b) seems hard to ensure in practice as it relies on a subtle interaction between the data-fidelity and the regularization.

We recall that we denote by $\sS^\ast = \{x \in \R^d | \nabla F(x) = 0\}$ the set of critical points of $F$ and by $d(x, \sS^\ast) = \inf_{y \in \sS^\ast}\|x-y\|$ the distance of $x$ to the set of critical points of $F$.

\begin{corollary}\label{cor:almost_sre_cvg_subsequence}
Under Assumptions~\ref{ass:denoiser_structure}, \ref{ass:regularities_assumptions}, \ref{ass:g_l_smooth} and~\ref{ass:F_subanalytic}, 
for any $\lambda \ge L_f$ and $\beta > 0$, with probability larger than $1 - \beta$ there exist a subsequence $x_{\psi(k)}$, $B_3 > 0$ and $r > 0$ such that
\begin{align}
    d(x_{\psi(k)}, \sS^\ast) \le B_3 \left(\frac{\sigma}{\beta}\right)^r.
\end{align}
\end{corollary}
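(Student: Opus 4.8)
The plan is to combine the quantitative gradient bound of Proposition~\ref{prop:snopnp_convergence_critical_point} with a \L{}ojasiewicz-type error bound drawn from the subanalyticity Assumption~\ref{ass:F_subanalytic}. First I would pass to the limit $N \to \infty$ in Proposition~\ref{prop:snopnp_convergence_critical_point}: writing $a_k = \eE(\|\nabla F(x_k)\|^2)$, the first term on the right-hand side vanishes, so $\limsup_N \frac{1}{N}\sum_{k=0}^{N-1} a_k \le B_2 \sigma^2$, which forces $\liminf_k a_k \le B_2\sigma^2$. Hence there is a deterministic subsequence $(x_{\psi(k)})$ along which $\eE(\|\nabla F(x_{\psi(k)})\|^2)$ converges to some $L \le B_2\sigma^2$.

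Next I would turn this bound in expectation into a high-probability pointwise statement. Applying Fatou's lemma to the nonnegative random variable $W = \liminf_k \|\nabla F(x_{\psi(k)})\|^2$ gives $\eE(W) \le \liminf_k \eE(\|\nabla F(x_{\psi(k)})\|^2) = L \le B_2\sigma^2$. Markov's inequality then yields $\Prb{W > B_2\sigma^2/\beta} \le \beta$, so with probability at least $1-\beta$ one has $\liminf_k \|\nabla F(x_{\psi(k)})\|^2 \le B_2\sigma^2/\beta$. On this event I can extract a further (realization-dependent) subsequence, still denoted $x_{\psi(k)}$, along which $\|\nabla F(x_{\psi(k)})\|^2 \le 2B_2\sigma^2/\beta$ for all large $k$, the slack factor being absorbed in the final constant.

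It then remains to convert gradient smallness into closeness to $\sS^\ast$. Here I would invoke Assumption~\ref{ass:F_subanalytic}(b): since $\nabla F$ is coercive and continuous, the set $\K := \{x \in \R^d : \|\nabla F(x)\| \le 1\}$ is compact and contains $\sS^\ast = \{\|\nabla F\| = 0\}$. For $\sigma$ small enough that $\sqrt{2B_2}\,\sigma/\sqrt{\beta} \le 1$, the points $x_{\psi(k)}$ lie in $\K$. On $\K$, the \L{}ojasiewicz inequality for the subanalytic function $\|\nabla F\|$ and its zero set $\sS^\ast$ provides constants $c, r > 0$ with $d(x, \sS^\ast) \le c\,\|\nabla F(x)\|^r$ for all $x \in \K$. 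Combining the two estimates, $d(x_{\psi(k)}, \sS^\ast) \le c\,(2B_2)^{r/2} (\sigma/\sqrt{\beta})^r$; since $\beta \in (0,1)$ we have $\sigma/\sqrt{\beta} \le \sigma/\beta$, which gives the claimed bound with $B_3 = c\,(2B_2)^{r/2}$.

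I expect the main obstacle to be the last step, namely justifying the error bound $d(x, \sS^\ast) \le c\,\|\nabla F(x)\|^r$ with constants uniform over the compact $\K$. This rests on the \L{}ojasiewicz inequality for the zero set of a subanalytic function (the subanalyticity of $\|\nabla F\|$ being inherited from Assumption~\ref{ass:F_subanalytic}(a)) together with a compactness argument patching the local \L{}ojasiewicz exponents into a single pair $(c,r)$ valid on all of $\K$. The coercivity of $\nabla F$ is precisely what confines the low-gradient region to a compact set, so that this uniform bound applies simultaneously to every iterate of the extracted subsequence rather than only in a fixed neighborhood of a single critical point.
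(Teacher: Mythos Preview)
Your proposal is correct and follows the same skeleton as the paper's proof: pass to the limit in Proposition~\ref{prop:snopnp_convergence_critical_point} to get a subsequence with small expected squared gradient, apply Fatou then Markov to obtain a high-probability pointwise bound, extract a further subsequence, confine it to a compact via the coercivity of $\nabla F$, and finish with a \L{}ojasiewicz inequality on that compact. The only differences are cosmetic: you apply Markov to $\|\nabla F\|^2$ (giving a $\sigma/\sqrt{\beta}$ dependence that you then relax to $\sigma/\beta$), whereas the paper first passes to $\|\nabla F\|$ via Jensen and applies Markov there; and you trap the subsequence in the deterministic sublevel set $\{\|\nabla F\|\le 1\}$, while the paper runs a separate Fatou/coercivity argument to show almost-sure non-divergence and lands in a realization-dependent compact. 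Your compactness step is arguably more direct, but note that the fixed threshold $1$ introduces an unnecessary smallness restriction on $\sigma$; simply take $\K=\{\|\nabla F\|\le \sqrt{2B_2}\,\sigma/\sqrt{\beta}\}$, which is still compact by coercivity and covers every $\sigma$.
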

Corollary~\ref{cor:almost_sre_cvg_subsequence} is proved in Appendix~\ref{sec:proof_cor_almost_sure_cvg_subsequence}.
It shows that under additional geometrical properties on $F$ we can deduce from Proposition~\ref{prop:snopnp_convergence_critical_point} that a subsequence of the SnoPnP algorithm gets close to critical points of $F$. This formalizes the statement that Proposition~\ref{prop:snopnp_convergence_critical_point} proves a convergence to a critical point of $F$.

\begin{figure*}[t]
    \centering
    \includegraphics[width=\textwidth]{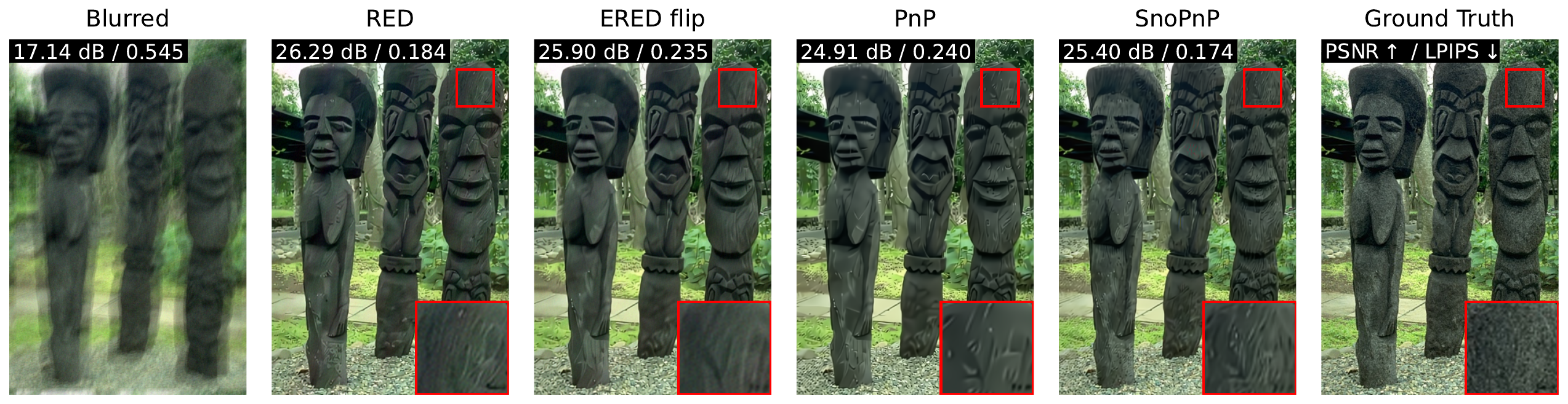}
    \caption{Deblurring (with a motion blur kernel and input noise level $\sigma_{y} = 5 / 255$) with various methods based on a GS-denoiser trained on natural images. Note that none of the methods does succeed to restore the grain of the texture.  
    \vspace*{-0.4cm}}
    \label{fig:deblurring_qualitative}
\end{figure*}

\section{Experiments}\label{sec:experimental}
In this section, we evaluate the practical gain of the three presented equivariant algorithms (Algorithm~\ref{alg:ERED}-\ref{alg:EPnP}-\ref{alg:SnoPnP}) for image restoration. Hyper parameters choices are provided in Appendix~\ref{sec:hyperparameter}.

\subsection{Deblurring}
First we evaluate various equivariance methods on image deblurring with $10$ blur kernels including fixed and motion kernels as proposed by~\cite{romano2017little,hurault2022gradient}. The denoiser used in these experiments is the Gradient-Step DRUNet denoiser (GS-DRUNet) proposed by~\cite{hurault2022gradient} with the provided weights, obtained with supervised training on natural color images. This denoiser reaches state-of-the-art denoising performance.

\begin{table}[ht]
\centering
\resizebox{0.9\linewidth}{!}{
\begin{tabular}{ c c c c c }
 Method & PSNR$\uparrow$ & SSIM$\uparrow$ & LPIPS$\downarrow$ & N$\downarrow$\\
\hline
RED~\cite{romano2017little} & 29.84 & \underline{0.84} & \textbf{0.13} & 400 \\
ERED rotation~\cite{terris2024equivariant} & 30.01 & \underline{0.84} & 0.16 & 400 \\
ERED translation & 29.95 & \underline{0.84} & 0.16 & 400 \\
ERED flip & 30.00 & \underline{0.84} & 0.16 & 400 \\
ERED subpixel rotation & 29.80 & \underline{0.84} & 0.16 & 400\\
ERED all transformations & 29.71 & 0.82 & \underline{0.15} & 400 \\
SNORE~\cite{renaud2024plug} & 29.55 & 0.83 & 0.19 & 1000 \\
Annealed SNORE~\cite{renaud2024plug} & 29.92 & \textbf{0.85} & 0.17 & 1500\\
PnP~\cite{venkatakrishnan2013plug} & 29.98 & 0.84 & 0.18 & 400\\
EPnP rotation~\cite{terris2024equivariant} & \textbf{30.22} & 0.84 & 0.18 & 400 \\
EPnP translation & 29.99 & 0.84 & 0.18 & 400 \\
EPnP flip & \underline{30.16} & 0.84 & 0.18 & 400\\
EPnP subpixel rotation & 30.03 & 0.84 & 0.18 & 400 \\
EPnP all transformations & 30.15 & 0.84 & 0.17 & 400 \\
SnoPnP & 29.67 & 0.83 & 0.16 & 100\\
 \hline
\end{tabular}
} 
\caption{Quantitative comparison of image deblurring methods on $10$ images from CBSD68 dataset with $10$ different blur kernels (fixed and motion kernel of blur) and a noise level $\sigma_{y}  = 5 / 255$.
Best and second best results are respectively displayed in bold and underlined.}
\label{table:quantitative_results}
\end{table}

In Table~\ref{table:quantitative_results}, we present the performance reached by Algorithm~\ref{alg:ERED} and Algorithm~\ref{alg:EPnP} with various sets of transformations including random rotation of angle $\theta \in \{0,1,2,3\} \times \pi / 2$ named \textit{rotation}, random subpixel rotation of angle $\theta \in [-\pi, \pi]$ named \textit{subpixel rotation} (implemented with raster rotation~\cite{paeth1990fast}), random flip of the two axes named \textit{flip}, random translation along both pixel axes named \textit{translation}, random Gaussian noising named \textit{SNORE}~\cite{renaud2024plug}, and a random transformation drawn among all the previous sets of transformations (including SNORE) named \textit{all transformations}. We also present performance of RED and Annealed SNORE, another version of SNORE~\cite{renaud2024plug} where the denoiser parameter $\sigma$ decreases through iterations. We notice that ERED and EPnP obtains better quantitative performance than RED and PnP respectively ($+0.2$dB) with the same computational cost. Moreover, the choice of the transformation impacts the restoration quality and our experiments suggest that flips and rotations are beneficial.

In Figure~\ref{fig:deblurring_qualitative}, we observe that restoration methods (in particular PnP) hallucinate structure in pure texture areas. Moreover, equivariance help to reduce these structures and generate a more realistic texture.

\subsection{Convergence of SnoPnP: GS-DRUNet vs Prox-DRUNet}

In Sections~\ref{sec:epnp}-\ref{sec:snopnp}, we prove the convergence of SnoPnP under the assumption that the denoiser is a proximal operator, see Assumption~\ref{ass:denoiser_structure}. However, this hypothesis is not verified by the Gradient Step denoiser~\cite{hurault2022gradient}. 
In~\cite{hurault2022proximal}, a penalization of the training loss is then  proposed to obtain a denoiser, named Prox-DRUNet, that verifies this hypothesis.
In Figure~\ref{fig:prox_vs_gs_drunet}, we compare the convergence of SnoPnP with both denoisers. It supports our theoretical findings and the tightness of assumption on the denoiser. SnoPnP with Prox-DRUNet indeed converges whereas GS-DRUNet, being a gradient-step operator and not a proximal one, makes SnoPnP diverge. 

\begin{figure}
    \centering
    \includegraphics[width=1.\linewidth]{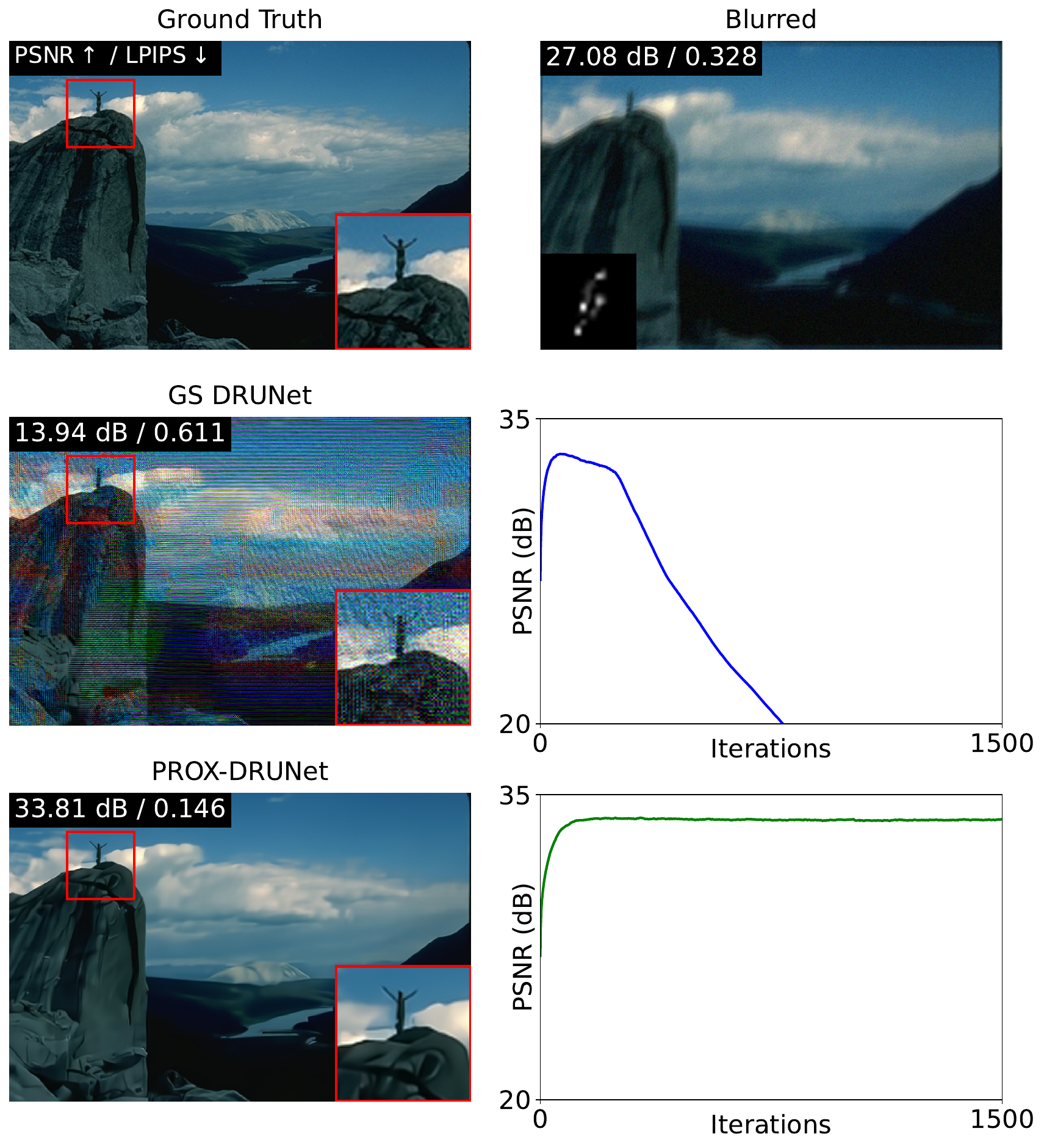}
    \caption{Deblurring (a motion blur kernel with input noise level $\sigma_y = 5/255$ with SnoPnP using a GS denoiser and a Prox Denoiser, both trained on natural images.}
    \label{fig:prox_vs_gs_drunet}
\end{figure}

\subsection{Single image super-resolution}
\begin{figure*}[!ht]
    \centering
    \includegraphics[width=\textwidth]{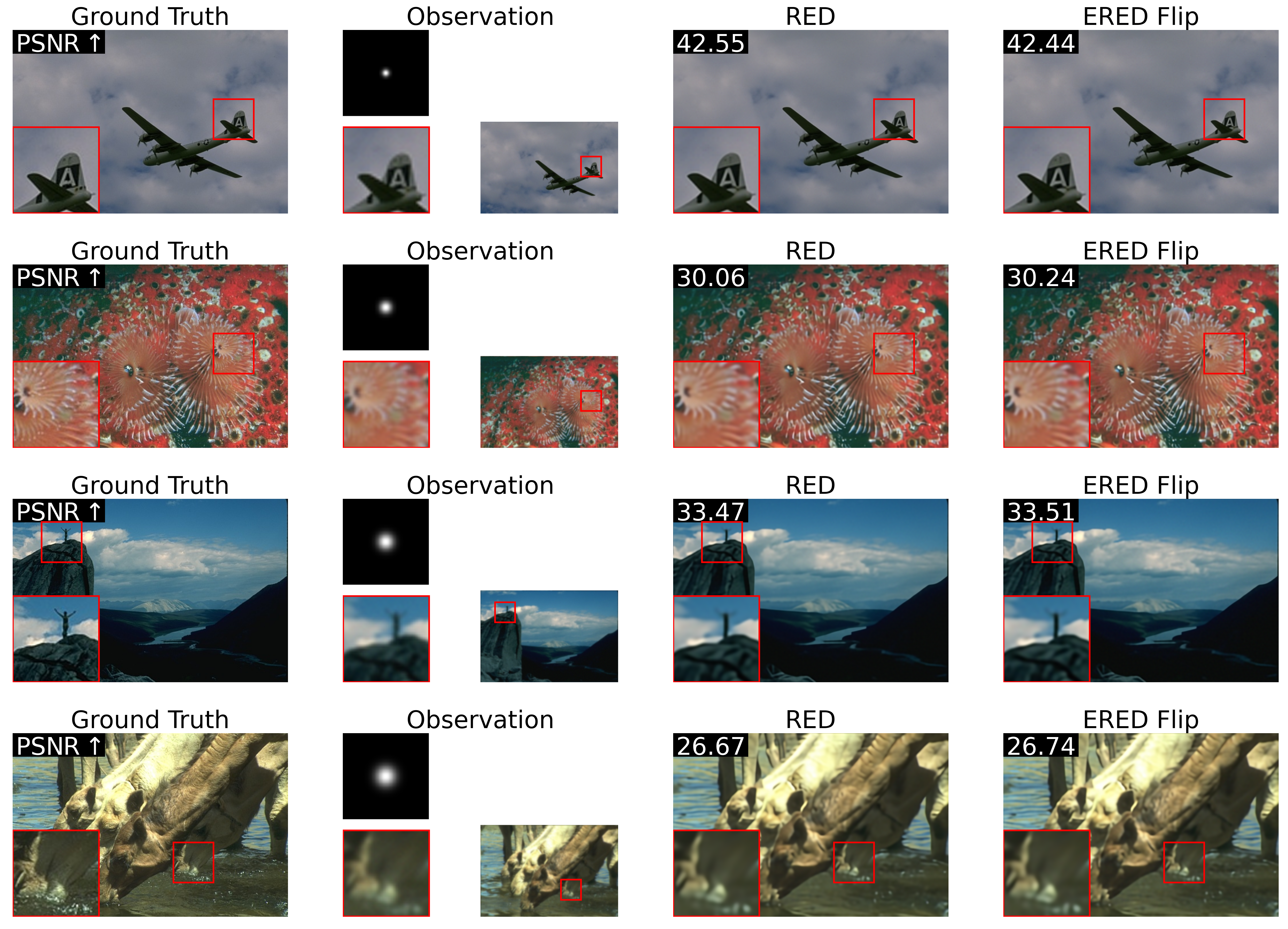}\vspace*{-0.2cm}
    \caption{Super-resolution with RED and ERED with a super-resolution factor of $2$ with a GS-denoiser trained on natural images. The set of transformation for ERED is random flip. Qualitative results of ERED and RED are very similar.  \vspace*{-0.2cm}
    }
    \label{fig:sr}
\end{figure*}

We now evaluate ERED on image super-resolution. 
In Table~\ref{tab:sr} and Figure~\ref{fig:sr}, we present the super-resolution results with a super-resolution factor of $2$ for RED and ERED on $10$ natural images extracted form the CBSD68 dataset. 
As suggested by the deblurring results of Table~\ref{table:quantitative_results}, we focus here on flip and rotation equivariance. 
We considered $8$ blur kernels including motion and fixed kernels, with a noise level of $\sigma_{y} = 1/255$.
In Table~\ref{tab:sr} and Figure~\ref{fig:sr}, we observe that the restoration performances are similar both quantitatively and qualitatively for RED and ERED.
This indicates that equivariance is not helpful to increase performance of image super-resolution.

\begin{table}[ht]
\centering
\begin{tabular}{c c c c}
Restoration method & PNSR $\uparrow$ & SSIM $\uparrow$ & N $\downarrow$\\
\hline
Bicubic & 25.47 & 0.72 & 200 \\
RED & 27.97 & 0.80 & 200\\
ERED Flip & 28.00 & 0.80 & 200\\
ERED Rotation & 28.01 & 0.80 & 200\\
\end{tabular}
\vspace*{0.2cm}
\caption{Super-resolution with a super-resolution factor of $2$ and $8$ different blur kernel (including fixed and motion blur) results on the CBSD10 ($10$ images from CBSD68) dataset with various restoration methods.\vspace*{-0.2cm}}
\label{tab:sr}
\end{table}

\subsection{Despeckling}

\begin{table}[ht]
\centering
\begin{tabular}{c c c c}
Restoration method & PNSR $\uparrow$ & SSIM $\uparrow$ & N $\downarrow$ \\
\hline
RED & 35.28 & 0.94 & 100\\
ERED Rotation & 35.19 & 0.94 & 100\\
ERED Flip & 35.59 & 0.94 & 100\\
\end{tabular}\vspace*{0.2cm}
\caption{Despeckling results on $60$ SAR images with various restoration methods. The number of looks is $L = 50$.}
\label{tab:despeckle}
\end{table}

In Table~\ref{tab:despeckle} and Figure~\ref{fig:despeckle}, we present the result of RED and ERED for Synthetic Aperture Radar (SAR) images despeckling. The speckle noise is multiplicative and implies that the data-fidelity is not $L$-smooth. Therefore, it is known to be more challenging than removing Gaussian noise. We use for this experiment the SAR dataset presented in~\cite{dalsasso2021if}. The test image (\textit{lely}) of this dataset has been cropped into $60$ images of size $256 \times 256$ to create a test dataset. The GS-DRUNet  has been retrained with the training images of this SAR dataset with the hyperparameters recommended in~\cite{hurault2022gradient} and flip data augmentation. 

The PSNR values provided in Table~\ref{tab:despeckle} show that random rotations degrade restoration performance, whereas flips are beneficial. This last observation is confirmed by the qualitative results of Figure~\ref{fig:despeckle}.
Therefore, flip equivariance appears to be advantageous for despeckling.
But more precisely, this confirms that it seems beneficial to use, at inference time, the same kind of equivariance that has been introduced by data augmentation at training time.

\begin{figure*}[!ht]
    \centering
    \includegraphics[width=\textwidth]{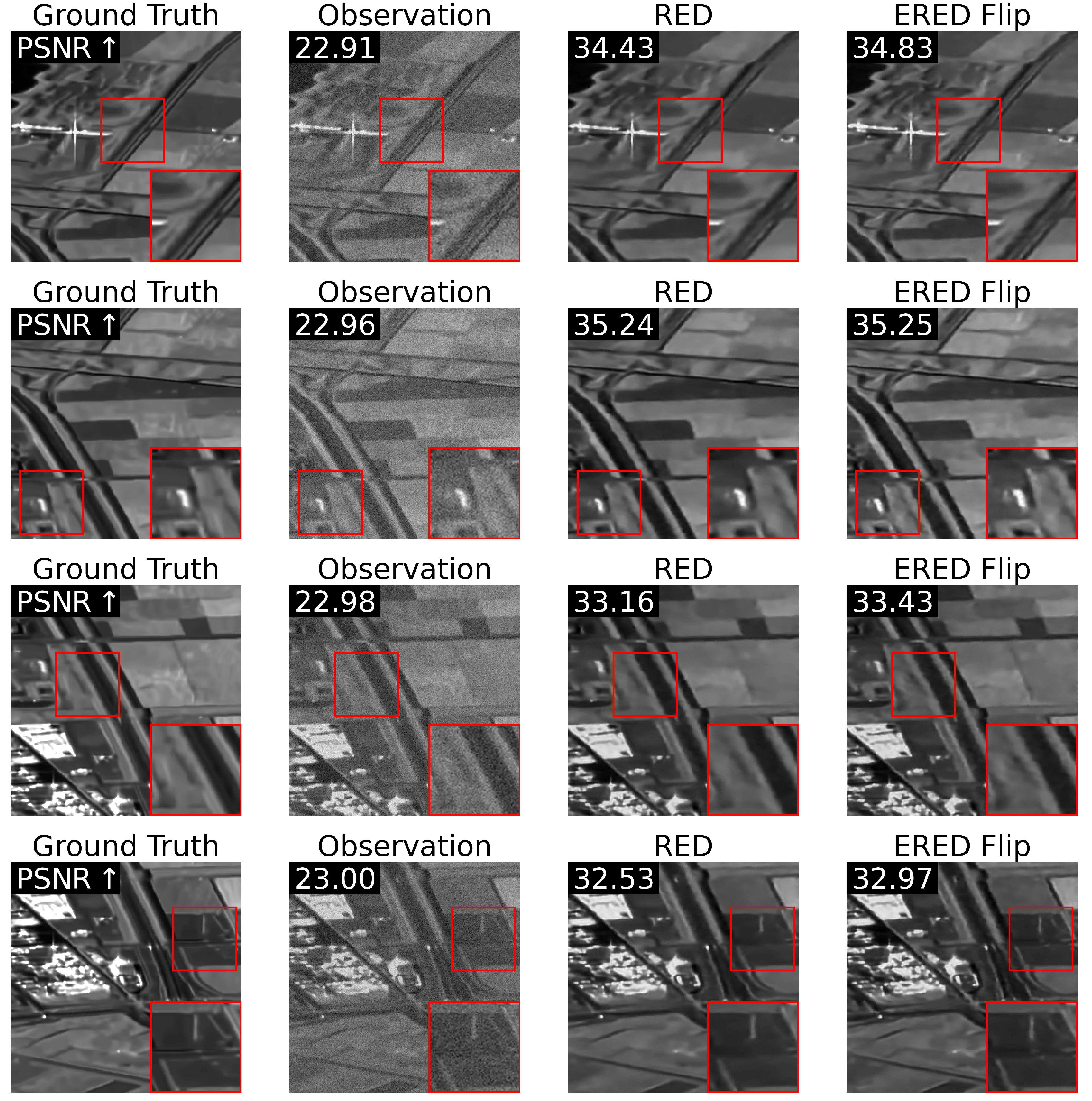}
    \caption{Despeckling with RED and ERED with a number of look of $L = 50$ with a GS-denoiser trained on SAR images. The set of transformation for ERED is random flip.  ERED produces a better qualitative result than RED.  \vspace*{-0.4cm}
    }
    \label{fig:despeckle}
\end{figure*}

\subsection{Denoising performance}
In order to understand the practical benefit of $\pi$-equivariance, we study the denoising performance of the equivariant denoiser as defined in Equation~\eqref{eq:equiv_denoiser}. 
In Table~\ref{table:denoising_performance}, we present the performance of each denoiser on natural images from the dataset CBSD68~\cite{martin2001} with various levels of noise. When the set of transformations is infinite, we take a Monte-Carlo approximation of Equation~\eqref{eq:equiv_denoiser} with $10$ random transformations. 
Surprisingly, the denoising performance with equivariance is similar to the one without equivariance. Performance with subpixel rotation denoiser is slightly lower, which suggests that the image distribution might not be $\pi$-equivariant to subpixel rotation. Observing that the denoising results are similar with all approaches suggests that the considered GS-DRUNet  has already learned these equivariance properties. 

\begin{table}[!ht]
\centering
\begin{tabular}{c c c c}
Denoising method & PNSR  & PNSR & PNSR\\
Noise level $\sigma$ &$5 / 255$&$10 / 255$&$20 / 255$\\
\hline
Simple & 40.54 & 36.46 & 32.73 \\
Rotation & 40.58 & 36.49 & 32.76 \\
Translation & 40.53 & 36.44 & 32.71\\ 
Subpixel Rotation & 40.34 & 36.26 & 32.56\\
Flip  & 40.58 & 36.49 & 32.76\\
\end{tabular}
\vspace*{0.2cm}
\caption{Denoising results on the CBSD68 dataset with various level of noise. \textit{Simple} denoising refers to an application of the GS-DRUNet denoiser~\cite{hurault2022gradient}, \textit{rotation} to the average of the denoising of the $4$ rotated images, \textit{flip} to the average of the denoising of the $4$ flip images, \textit{translation} to the average of the denoising of the $10$ random translated images and \textit{subpixel rotation} to the average of the denoising of the $10$ random subpixelic rotated images.}\label{table:denoising_performance}
\end{table}

\subsection{Equivariant RED with different denoisers}
Another interpretation of the improvement of ERED compared to RED is that the denoiser is applied on images close to its training domain, which consists of noisy images. In practice, data augmentation is one way to enforce favor the denoiser equivariance. For the GS-DRUNet and DnCNN architectures, training includes a random flip data augmentation but no random rotations. However, these two set of transformations are very close, e.g. a flip in both direction is equivalent to a rotation of angle $180^\circ$.

In Table~\ref{tab:deblurring_deblurring_various_denoisers}, we present the deblurring performance obtained with different denoisers. For GS-DRUNet, we use the pretrained weights released by~\cite{hurault2022gradient}. For DRUNet, we use the pretrained weights given in the Python librairie \textit{Deepinv}~\cite{tachella2025deepinverse}.  
For DnCNN, the noise of the observation is set to $\sigma_y = 1/255$ because the weights shared by~\cite{pesquet2021learning} were obtained after training with a single noise level $\sigma = 2/255$. 
Equivariance is slightly beneficial, with a gain of $+0.1$ to $+0.2$ dB. However, this interpretation in terms of being closer to the training domain does not explain why translation equivariance is also beneficial in the deblurring case, as observed in Table~\ref{table:quantitative_results}.

\begin{table}[!ht]
\centering
\begin{tabular}{c c c c c}
Denoiser & $\sigma_y$ & Method & PNSR $\uparrow$ & SSIM $\uparrow$ \\
\hline
\multirow{3}{*}{GS-DRUNet} & \multirow{3}{*}{$\frac{5}{255}$}  & RED & 29.84 & 0.84  \\
& & ERED rot. & 30.01 & 0.85 \\
& & ERED flip & 30.00 & 0.85 \\
\hline
\multirow{3}{*}{DRUNet} &\multirow{3}{*}{$\frac{5}{255}$} & RED & 29.24 & 0.81 \\
& & ERED rot. & 29.48 & 0.83  \\
& & ERED flip & 29.44 & 0.82 \\
\hline
\multirow{3}{*}{DnCNN} &\multirow{3}{*}{$\frac{1}{255}$} & RED & 35.26 & 0.94 \\
& & ERED rot. & 35.34 & 0.94 \\
& & ERED flip & 35.32 & 0.94 \\
\end{tabular}
\vspace*{0.2cm}
\caption{Deblurring results on CBSD10 ($10$ images extracted from CBSD68 dataset) with $10$ kernels of blur (including fixed and motion blur) with different pretrained denoisers. It is worth noting that the quantitative improvement with equivariance (ERED) is approximately $+0.2$ dB for each type of denoiser (rotation and flip).}
\label{tab:deblurring_deblurring_various_denoisers}
\end{table}

\section{Conclusion}
In this paper, we propose ERED, EPnP and SnoPnP, equivariant versions of RED and PnP. We provide an interpretation of the ERED algorithm as an equivariant property of the underlying prior. We give theoretical convergence results for all the proposed algorithms and a critical point convergence with an equivariant prior $p$ for ERED (Proposition~\ref{prop:critical_points_cvg}). We refine the convergence results obtain for EPnP in the particular case of SnoPnP.  Experimental results illustrate the modest improvement brought by such methods. Our experiments support the heuristic that using the same equivariance at inference time than at training time with data augmentation is beneficial.

\section*{Acknowledgements}
This study has been carried out with financial support from the French Direction G\'en\'erale de l’Armement. Experiments presented in this paper were carried out using the PlaFRIM experimental testbed, 
supported by Inria, CNRS (LABRI and IMB), Universite de Bordeaux, Bordeaux INP and Conseil Regional d’Aquitaine (see https://www.plafrim.fr).

\bibliography{ref}

\clearpage
\onecolumn
\appendix

\section{Hyper parameters for the experiments}\label{sec:hyperparameter}
In this section, we detail the hyper parameters choice for our experiments. Grid searches have been made to find the optimal parameters in term on PSNR for each method.
For image deblurring, Annealed SNORE parameters have been chosen according to the recommendation in~\cite{renaud2024plug}.

\begin{table}[h]
    \centering
    \begin{tabular}{c c c c c c c c}
        Problem & Noise level & Denoiser & Method & $\delta$ & $\sigma$ & $\lambda$ &  $N$ \\
        \hline
        \multirow{8}{*}{Deblurring} & \multirow{3}{*}{GS-DRUNet} & \multirow{3}{*}{$\sigma_y = 5/255$} & RED & $1.5$ & $7/255$ & $0.15$ & $400$ \\
        & & & ERED & $1.5$ & $8/255$ & $0.17$ & $400$ \\
        & & & SNORE & $1.5$ & $5/255$ & $0.5$ & $1000$ \\
        & & & PnP & $1$ & $4/255$ & $0.53$ & $400$ \\
        & & & EPnP & $1$ & $4/255$ & $0.53$ & $400$ \\
        & & & SnoPnP & $1$ & $5/255$ &$0.53$ & $100$ \\
        \cline{2-8}
        & \multirow{3}{*}{DRUNet} & \multirow{3}{*}{$\sigma_y = 5/255$}& RED & $1.5$ & $9/255$ & $0.12$ & $400$ \\
        & & & ERED flip & $1.5$ & $8/255$ & $0.14$ & $400$ \\
        & & & ERED rotation & $1.5$ & $8/255$ & $0.15$ & $400$ \\
        \cline{2-8}
        & \multirow{2}{*}{DnCNN} & \multirow{2}{*}{$\sigma_y = 1/255$} & RED & $2.0$ & $2/255$ & $0.11$ & $400$ \\
        & & & ERED & $2.0$ & $2/255$ & $0.11$ & $400$ \\
        \cline{2-8}
        & \multirow{1}{*}{Prox-DRUNet} & \multirow{1}{*}{$\sigma_y = 5/255$} & SnoPnP & $1$ & $5/255$ & $1.05$ & $1500$ \\

        \hline
        \multirow{2}{*}{Super-resolution} & \multirow{2}{*}{GS-DRUNet} & \multirow{2}{*}{$\sigma_y = 1/255$} & RED & $2.0$ & $11/255$ & $0.07$ & $200$ \\
        & & & ERED & $2.0$ & $13/255$ & $0.05$ & $200$ \\
        \hline
        \multirow{2}{*}{Despeckling} & \multirow{2}{*}{GS-DRUNet} & \multirow{2}{*}{$L = 50$} & RED & $0.01$ & $8/255$ & $100$ & $100$ \\
        & & & ERED & $0.01$ & $8/255$ & $100$ & $100$ \\
    \end{tabular}
    \caption{Hyper parameters setting for various experiments and methods.}
\end{table}

\section{Definition of subanalytic functions}\label{sec:def_subanalytic}
In this part, we recall the geometrical definition for the notion of subanalytic functions.
\begin{definition}(\cite{bierstone1988semianalytic, bolte2007lojasiewicz})
\begin{itemize}
    \item A subset $S$ of $\R^d$ is a semianalytic set if each point of $\R^d$ admits a neighborhood $V$ for which there exists a finite number of real analytic functions, i.e. equal locally to a power series, $f_{i,j}, g_{i,j} : \R^d \to \R$ such that 
    \begin{align}
        S \cap V = \bigcup_{j=1}^p \bigcap_{i=1}^q \{x \in \R^d | f_{i,j}(x) = 0, g_{i,j}(x) < 0\}.
    \end{align}
    
    \item A subset $S$ of $\R^d$ is a subanalytic set if each point of $\R^d$ amits a neighborhood $V$ for which
    \begin{align}
        S \cap V = \{x \in \R^d | (x,y) \in U\},
    \end{align}
    where $U$ is a bounded semianalytic subset of $\R^d \times \R^m$ for some $m \ge 1$.
    
    \item A function $f:\R^d \to \R \cup \{+\infty\}$ is called subanalytic if its graph $\{(x,y) \in \R^d \times \R | y = f(x)\}$ is a subanalytic subset of $\R^d \times \R$.
\end{itemize}
\end{definition}
    
All compositions of standard functions (polynomial, exponential, logarithmic) are subanalytic. Therefore, if the forward model is expressed with standard functions, then the data-fidelity $f$ is subanalytic. This is the case for linear forward model with a model of noise expressed with standard functions, e.g. Gaussian, Poisson, Fisher-Tippett.
    
By \cite[Proposition 5]{hurault2024convergent}, in the case of a gradient step denoiser $D_{\sigma} = Id - \nabla h_{\sigma}$ with $\nabla h_{\sigma}$ $L_g$-Lipschitz and $L_g < 1$, we have $D_{\sigma} = \mathsf{Prox}_{g_{\sigma}}$ and an expressed of $g_{\sigma}$ on $\mathsf{Im}(D_{\sigma})$, there exists $K \in \R$ such that $\forall x \in \mathsf{Im}(D_{\sigma})$, we have
\begin{align}
    g_{\sigma}(x) = h_{\sigma}(D_{\sigma}^{-1}(x)) - \frac{1}{2} \|D_{\sigma}^{-1}(x) - x\|^2 + K.
\end{align}
Moreover, $\nabla D_{\sigma} = I_d - \nabla^2 h_{\sigma} > 0$ is non-singular, because $\nabla h_{\sigma}$ is $L_g$-Lipschitz with $L_g < 1$. Therefore, if $D_{\sigma}$ is real analytic, i.e. its activation functions are real analytic, then by the real analytic inverse function theorem, the function $D_{\sigma}^{-1}$ is then real analytic on $\mathsf{Im}(D_{\sigma})$. Then by sum and composition of real analytic functions, $g_{\sigma}$ is real analytic on $\mathsf{Im}(D_{\sigma})$.
    
Therefore, in this context, $F = f + \lambda g_{\sigma}$ is real analytic on $\mathsf{Im}(D_{\sigma})$.
However, it is not possible to verify that $\fF$ is subanalytic on all $\R^d$ because $h_{\sigma}$ is not defined explicitly outside $\mathsf{Im}(D_{\sigma})$.

\section{Technical proofs}

\subsection{Preliminary technical results}
\subsubsection{An inequality for weakly convex function}
First, we state a simple technical result that will be useful in our proofs.
\begin{lemma}\label{lemma:inequality_weakly_cvx}
For $g : \R^d \to \R$ $\rho$-weakly convex and differentiable, we have that $\forall x, y \in \R^d$
\begin{align*}
    \langle \nabla g(x), y - x \rangle \le g(y) - g(x) +\frac{\rho}{2} \|y - x\|^2. 
\end{align*}
\end{lemma}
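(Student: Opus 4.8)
The plan is to reduce the statement to the standard first-order characterization of convexity applied to an auxiliary function. Recall that $g$ being $\rho$-weakly convex means precisely that the function $\phi(x) := g(x) + \frac{\rho}{2}\|x\|^2$ is convex. Since $g$ is assumed differentiable, $\phi$ is differentiable as well, with $\nabla \phi(x) = \nabla g(x) + \rho x$.

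Next, I would invoke the elementary fact that a differentiable convex function $\phi$ lies above its tangent planes, i.e. for all $x, y \in \R^d$,
\begin{align*}
    \langle \nabla \phi(x), y - x \rangle \le \phi(y) - \phi(x).
\end{align*}
Substituting $\nabla \phi(x) = \nabla g(x) + \rho x$ and $\phi = g + \frac{\rho}{2}\|\cdot\|^2$ into this inequality yields
\begin{align*}
    \langle \nabla g(x), y - x \rangle + \rho \langle x, y - x \rangle \le g(y) - g(x) + \frac{\rho}{2}\left(\|y\|^2 - \|x\|^2\right).
\end{align*}

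The final step is the routine algebraic rearrangement of the quadratic terms. Moving the inner product $\rho \langle x, y - x \rangle$ to the right-hand side and completing the square gives
\begin{align*}
    \frac{\rho}{2}\left(\|y\|^2 - \|x\|^2\right) - \rho \langle x, y - x \rangle = \frac{\rho}{2}\left(\|y\|^2 - 2\langle x, y \rangle + \|x\|^2\right) = \frac{\rho}{2}\|y - x\|^2,
\end{align*}
which produces exactly the claimed bound. There is no genuine obstacle here: the only thing to be careful about is the sign bookkeeping in the completing-the-square step, and the one conceptual ingredient is recognizing that weak convexity is most cleanly exploited by passing to the convex function $\phi$ rather than manipulating $g$ directly.
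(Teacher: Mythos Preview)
Your proposal is correct and follows exactly the same approach as the paper: pass to the convex function $\phi = g + \tfrac{\rho}{2}\|\cdot\|^2$, apply the first-order convexity inequality, and rearrange the quadratic terms. If anything, you spell out the completing-the-square step more carefully than the paper does.
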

\begin{proof}
By definition of the weak convexity, $g +\frac{\rho}{2}\|\cdot\|^2$ is convex. Therefore, we have
\begin{align*}
    g(x) +\frac{\rho}{2}\|x\|^2 + \langle \nabla g(x) + \rho x, y - x \rangle \le g(y) +\frac{\rho}{2}\|y\|^2 \\
    \langle \nabla g(x) + \rho x, y - x \rangle \le g(y) - g(x) +\frac{\rho}{2} \|y - x\|^2.
\end{align*}
\end{proof}

\subsection{Useful biased optimization result}\label{app:thm}
For completeness of the paper, here we recall Theorem 2.1 (ii) from~\cite{tadic2017asymptotic}. This theorem tackles the convergence of a sequence $x_k$ defined by a biased stochastic gradient descent algorithm, i.e. there exists $f:\R^d \to \R^d$ differentiable, such that 
$$x_{k+1} = x_k - \delta_k (\nabla f(x_k) + \xi_k), $$
with $\delta_k > 0$ the step-size and $\xi_k$ the bias noise.

\begin{assumption}\label{ass:tadic}
    \textbf{(a)} $\lim_{k\to +\infty}\delta_k = 0$ and $\sum_{k=0}^{+\infty} \delta_k = +\infty$. 
    
    \textbf{(b)} $\xi_k$ admits the decomposition $\xi_k = \zeta_k + \eta_k$  that satisfies, for all $k \ge 0$ and almost surely on $\{\sup_{k \in \N}{\|x_k\|} < + \infty\}$:
    $$ \lim_{k\to +\infty} \max_{k\le n < a(k,t)} \| \sum_{i=k}^n \delta_k \zeta_k \| = 0, \limsup_{k \to +\infty}{\|\eta_k\|} < + \infty,$$
    where $a(k,t)$ is defined for $t> 0$ by $a(k,t) = \max \{  n \le k | \sum_{i=k}^{n-1} \delta_k \le t \}$.

    \textbf{(c)} $f$ is $p$-times differentiable on $\R^d$ with $p > d$.
\end{assumption}
    
\begin{theorem}{\cite{tadic2017asymptotic}}\label{theorem:tadic}
    Under Assumption~\ref{ass:tadic}, for a compact $Q \subset \R^d$ there exists a real number $K_{Q} > 0$ (depending only of $f$) such that it holds almost surely on $\lambda_Q = \{x_k \in Q | \forall k \in \N\}$  that
    \begin{equation}
        \limsup_{k \to + \infty} \|\nabla f(x_k)\| \le K \eta^{\frac{q}{2}},~~\limsup_{k \to + \infty} f(x_k) - \liminf_{k \to + \infty} f(x_k) \le K \eta^{q},
    \end{equation}
     with $q = \frac{p-d}{p-1}$ and $\eta = \limsup_{k \to +\infty} \|\eta_k\|$.
\end{theorem}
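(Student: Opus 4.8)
The plan is to combine the ODE (dynamical-systems) method for stochastic approximation with a quantitative Łojasiewicz-type inequality adapted to merely $C^p$ functions. First I would introduce the continuous-time interpolation $\bar x(\cdot)$ of the iterates on the rescaled time grid $t_k = \sum_{i=0}^{k-1}\delta_i$, which tends to $+\infty$ by Assumption~\ref{ass:tadic}(a), and restrict attention to the event $\lambda_Q$ on which the whole trajectory stays in the compact $Q$. On this event I would show that $\bar x$ is an asymptotic pseudo-trajectory of the gradient flow $\dot x = -\nabla f(x)$ up to an additive drift of asymptotic size $\eta$. The decomposition $\xi_k = \zeta_k + \eta_k$ is designed precisely so that the weighted partial sums of $\zeta_k$ over windows of bounded time-length $t$ vanish (the first condition in Assumption~\ref{ass:tadic}(b)), hence $\zeta_k$ contributes nothing to the limiting flow, while $\eta_k$ injects a bounded perturbation whose magnitude is controlled by $\eta = \limsup_k \|\eta_k\|$.

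The analytic heart of the argument is a Łojasiewicz-type inequality valid for functions that are only $p$-times differentiable, with $p>d$. The plan is to establish, on $Q$, that there are a constant and the exponent $q=(p-d)/(p-1)$ for which the oscillation of $f$ along any near-stationary segment of the flow is comparable to a power of the residual gradient; equivalently, that the set of values of $f$ at which $\|\nabla f\|$ is small has small measure, quantified through Sard/Yomdin-type bounds on the sublevel geometry of $C^p$ maps. This is where the interplay between the smoothness order $p$ and the ambient dimension $d$ enters, and it is what forces the precise exponent $q=(p-d)/(p-1)$: higher smoothness (larger $p$) sharpens the inequality, while larger dimension degrades it.

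With the pseudo-trajectory property and the Łojasiewicz inequality in hand, I would close the argument by a descent/contradiction scheme. Along the gradient flow one has $\tfrac{d}{dt} f(\bar x(t)) \approx -\|\nabla f(\bar x(t))\|^2$, so if $\limsup_k\|\nabla f(x_k)\|$ exceeded $K\eta^{q/2}$ for a suitable constant $K=K_Q$, the interpolated trajectory would, over infinitely many bounded time windows, lose a fixed amount of $f$-value that dominates the bias budget of order $\eta$; summed up, this would contradict the boundedness of $f$ on the compact $Q$. The same balance, read through the Łojasiewicz inequality linking the oscillation to the squared residual gradient, then upgrades the $\eta^{q/2}$ gradient bound to the $\eta^{q}$ bound on $\limsup_k f(x_k)-\liminf_k f(x_k)$.

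The hardest part will be the quantitative Łojasiewicz inequality for $C^p$ functions with the sharp exponent $q=(p-d)/(p-1)$; unlike the analytic case, where the classical Łojasiewicz inequality is available off the shelf, here one must extract a gradient–value inequality from finite smoothness alone, controlling the measure of near-critical values uniformly on $Q$. The second delicate point is the measurability and almost-sure bookkeeping needed to pass from the window-sum condition on $\zeta_k$ to a genuine asymptotic pseudo-trajectory statement on the event $\lambda_Q$. Since both ingredients are established in full in~\cite{tadic2017asymptotic}, I would cite that analysis for the two technical lemmas and assemble them as above.
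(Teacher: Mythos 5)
The paper does not actually prove this statement: Theorem~\ref{theorem:tadic} is quoted verbatim from \cite{tadic2017asymptotic} (Theorem 2.1(ii)) purely for completeness, so there is no internal proof to compare against, and your final move of citing that reference for the two technical lemmas is exactly the paper's own treatment. Your sketch moreover faithfully reconstructs the strategy of the cited source --- the ODE/asymptotic-pseudo-trajectory method for the interpolated iterates under the $\xi_k = \zeta_k + \eta_k$ decomposition, combined with a Yomdin-type quantitative Sard/Lojasiewicz inequality for functions that are only $p$-times differentiable with $p > d$, which is precisely where the exponent $q = \frac{p-d}{p-1}$ originates --- so the proposal is consistent with, and more informative than, the paper's pure citation.
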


\subsection{Proof of Proposition~\ref{prop:haar}}\label{sec:proof_haar}
For $x \in \R^d$, by the right-invariance of $\pi$, we get
\begin{align*}
    &\eE_{G' \sim \pi}\left( \log(r_{\sigma}^{\pi}(G'(x)) \right)= \int_{\G} \log(r_{\sigma}^{\pi}(G'(x)) d\pi(G') \\
    =& \int_{\G} \log\left( -\int_{\G}  \log (p_{\sigma} (G \circ G'(x)) d\pi(G) \right) d\pi(G') \\
    = &\int_{\G} \log\left( -\int_{\G}  \log (p_{\sigma} (G(x)) d\pi(G) \right) d\pi(G') 
     \\= &\log\left( r_{\sigma}^{\pi}(x) \right).
\end{align*}

\subsection{Proof of Lemma~\ref{lemma:bounded_variance_1}}\label{sec:proof_bounded_variance_1}

By Assumption~\ref{ass:prior_score_approx}-\ref{ass:finite_moment}, and the inequality $\forall x,y \in \R^+, (x+y)^2 \le 2(x^2+y^2)$, we have
\begin{align*}
  &\eE(\|\xi_k\|^2 | x_k)\\
   =& \lambda^2 \eE(\|  \fJ_{G}^T(x_k) \nabla \log p_{\sigma}(G(x_k)) - \eE_{G \sim \pi}\left(  \fJ_{G}^T(x_k) \nabla \log p_{\sigma}(G(x_k)) \right)\|^2 | x_k) \\
   =& \lambda^2 \eE\left(\|  \fJ_{G}^T(x_k) \nabla \log p_{\sigma}(G(x_k))\|^2| x_k\right) - \lambda^2 \|\eE_{G \sim \pi}\left(  \fJ_{G}^T(x_k) \nabla \log p_{\sigma}(G(x_k)) | x_k \right)\|^2  \\
    \le &\lambda^2 \eE(\|  \fJ_{G}^T(x_k) \nabla \log p_{\sigma}(G(x_k))\|^2 | x_k) \\
    \le& \lambda^2 \eE(\vvvert  \fJ_{G}^T(x_k) \vvvert^2 \| \nabla \log p_{\sigma}(G(x_k))\|^2 | x_k) \\
    \le& \lambda^2 \eE(\vvvert  \fJ_{G}(x_k) \vvvert^2 B^2 \sigma^{2 \beta} \left( 1 + \|G(x_k)\|^{n_1} \right)^2 | x_k) \\
    \le& 2 \lambda^2 B^2 \sigma^{2 \beta} \eE(\vvvert  \fJ_{G}(x_k)\vvvert^2 \left( 1 + \|G(x_k)\|^{2n_1} \right) | x_k) \\
    \le& 2 \lambda^2 B^2 \sigma^{2 \beta} \left( \eE(\vvvert  \fJ_{G}(x_k)\vvvert^2| x_k) + \eE(\vvvert  \fJ_{G}(x_k)\vvvert^2 \|G(x_k)\|^{2n_1} | x_k)  \right),
\end{align*}
where $n_1$ is defined in Assumption~\ref{ass:prior_score_approx}(c).
By using Young inequality, i.e. $\forall x, y \in\R^+, p, q > 1$ such that $\frac{1}{p}+ \frac{1}{q} = 1$, $|xy|\le \frac{x^p}{p} + \frac{y^q}{q}$, with $p= 1+\frac{\epsilon}{2}$, we get for $p = \frac{m_{n_1, \epsilon}}{\frac{2n_1(2+\epsilon)}{\epsilon}}$ with $m_{n_1, \epsilon} = \lceil \frac{2n_1(2+\epsilon)}{\epsilon} \rceil$
\begin{align*}
&\eE(\|\xi_k\|^2 | x_k) \\
    \le& 2 \lambda^2 B^2 \sigma^{2 \beta} \left( \frac{4}{2+\epsilon} \eE(\vvvert  \fJ_{G}(x_k)\vvvert^{2+\epsilon}| x_k) + \frac{\epsilon}{2+\epsilon}  + \frac{\epsilon}{2+\epsilon}\eE( \|G(x_k)\|^{\frac{2n_1(2+\epsilon)}{\epsilon}} | x_k)  \right)\\
    \le& 2 \lambda^2 B^2 \sigma^{2 \beta} \Bigl( \frac{4M_{2+\epsilon}+\epsilon}{2+\epsilon}   + \frac{1}{m_{n_1,\epsilon}} \Bigl( 2n_1 \eE( \|G(x_k)\|^{m_{n_1,\epsilon}} | x_k) + m_{n_1,\epsilon}- \frac{2n_1(2+\epsilon)}{\epsilon} \Bigr) \Bigr)\\
    \le& 2 \lambda^2 B^2 \sigma^{2 \beta} \left( \frac{4M_{2+\epsilon}+\epsilon}{2+\epsilon}   + \frac{1}{m_{n_1,\epsilon}} \left( 2n_1 C_{\K, m_{n_1, \epsilon}} + m_{n_1,\epsilon} - \frac{2n_1(2+\epsilon)}{\epsilon} \right) \right) := C ,
\end{align*}
with $C< +\infty$ a constant independent of $k$ and $x_k$. This proves Lemma~\ref{lemma:bounded_variance_1} by taking the expectation on $x_k$ and using the law of total expectation.

\subsection{Proof of Proposition~\ref{prop:convergence_unbiased}}\label{sec:proof_convergence_unbiased}
The proof is obtained by applying  \cite[Theorem 2.1, (ii)] {tadic2017asymptotic}, which is recalled in Appendix~\ref{app:thm} (Theorem~\ref{theorem:tadic}) for completeness. To do so, we have to verify the different assumptions (Assumption~\ref{ass:tadic} in Appendix~\ref{app:thm}) of this theorem. 
First, Assumption~\ref{ass:tadic}(a) is verified by Assumption~\ref{ass:step_size_decreas}(a). 
Next $p_{\sigma}$ is $\mathcal{C}^{\infty}$ by convolution with a Gaussian. Then $\log p_{\sigma}$ is also $\mathcal{C}^{\infty}$~\cite{laumont2023maximum} and so is $r_{\sigma}^{\pi}$ defined in relation~\eqref{eq:eq_reg}. By Assumption~\ref{ass:data_fidelity_reg}(b), $\mathcal{F}_{\sigma}^{\pi} = f + \lambda r_{\sigma}^{\pi}$ is $\mathcal{C}^{\infty}$. So Assumption~\ref{ass:tadic}(c) is verified.

Now, we are going to prove that Assumption~\ref{ass:tadic}(b) is verified, i.e. the noise fluctuation can be controlled.
To that end, we define 
\begin{align*}
\xi_k &= \nabla f(x_k) + \lambda \fJ_{G}^T(x_k) \nabla \log p_{\sigma}(G (x_k)) - \nabla \fF_{\sigma}^{\pi} (x_k)\\ &= \lambda \left( \fJ_{G}^T(x_k) \nabla \log p_{\sigma}(G (x_k)) \right.\\
&\left.\;\;\;\;\;\;\;\;- \eE_{G \sim \pi}\left(\fJ_{G}^T(x_k) \nabla \log p_{\sigma}(G(x_k)) \right) \right),\end{align*} 
with $\lambda >0, G \sim \pi$. By definition we have $\eE(\xi_k) = 0$ and, from~\eqref{eq:theoretical_process}, we get
\begin{align*}
    x_{k+1} = x_k - \delta_k (\nabla \fF_{\sigma}^{\pi}(x_k) + \xi_k).
\end{align*}

\begin{lemma}\label{lemma:bounded_variance_1}
    Under Assumptions~\ref{ass:prior_score_approx}-\ref{ass:finite_moment}, almost surely on $\Lambda_{\K}$, there exists $C \hspace{-1.5pt}> \hspace{-1.5pt}0$ such that $\forall k\hspace{-1.5pt} \in \hspace{-1.5pt}\N$, \hspace{-1.5pt}$\eE(\|\xi_k\|^2) \hspace{-1.5pt}\le \hspace{-1.5pt}C$.
\end{lemma}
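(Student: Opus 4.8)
The plan is to use that $\xi_k$ is, conditionally on $x_k$, a centered random variable, so its conditional second moment is dominated by the conditional second moment of the uncentered quantity. Writing $Z = \fJ_G^T(x_k)\nabla\log p_\sigma(G(x_k))$, we have $\xi_k = \lambda(Z - \eE(Z\mid x_k))$, hence $\eE(\|\xi_k\|^2\mid x_k) = \lambda^2(\eE(\|Z\|^2\mid x_k) - \|\eE(Z\mid x_k)\|^2) \le \lambda^2\,\eE(\|Z\|^2\mid x_k)$. Using submultiplicativity of the operator norm together with $\vvvert \fJ_G^T\vvvert = \vvvert \fJ_G\vvvert$, this gives
\[
\eE(\|\xi_k\|^2 \mid x_k) \le \lambda^2 \, \eE\!\left( \vvvert \fJ_G(x_k)\vvvert^2 \, \|\nabla \log p_\sigma(G(x_k))\|^2 \mid x_k \right).
\]
The sub-polynomial control of the score (Assumption~\ref{ass:prior_score_approx}(c)) then bounds $\|\nabla \log p_\sigma(G(x_k))\|^2$ by $B^2\sigma^{2\beta}(1+\|G(x_k)\|^{n_1})^2 \le 2B^2\sigma^{2\beta}(1+\|G(x_k)\|^{2n_1})$, which reduces the problem to bounding the two quantities
\[
\eE(\vvvert \fJ_G(x_k)\vvvert^2 \mid x_k) \quad\text{and}\quad \eE\!\left(\vvvert \fJ_G(x_k)\vvvert^2 \, \|G(x_k)\|^{2n_1} \mid x_k\right).
\]

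The first term is immediately controlled by Assumption~\ref{ass:finite_moment}(a), since by Lyapunov's (Jensen's) inequality a finite $(2+\epsilon)$-moment dominates the second moment: $\eE(\vvvert \fJ_G\vvvert^2) \le (\eE\,\vvvert \fJ_G\vvvert^{2+\epsilon})^{2/(2+\epsilon)}$. The main obstacle is the cross term, where $\vvvert \fJ_G(x_k)\vvvert$ and $\|G(x_k)\|$ are correlated because both are functions of the same random $G$, so one cannot simply factorize the expectation. I would decouple them with Young's inequality $ab \le a^p/p + b^q/q$ applied to $a = \vvvert \fJ_G(x_k)\vvvert^2$ and $b = \|G(x_k)\|^{2n_1}$, choosing $p = 1 + \epsilon/2$ so that $a^p = \vvvert \fJ_G(x_k)\vvvert^{2+\epsilon}$ matches exactly the moment controlled by Assumption~\ref{ass:finite_moment}(a); the conjugate exponent $q = (2+\epsilon)/\epsilon$ then produces a pure power $\|G(x_k)\|^{2n_1(2+\epsilon)/\epsilon}$. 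This is precisely where the strictly-greater-than-two moment $2+\epsilon$ is indispensable: it provides the integrability margin that absorbs the polynomial growth of $G$.

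It then remains to control $\eE(\|G(x_k)\|^m \mid x_k)$ with $m = \lceil 2n_1(2+\epsilon)/\epsilon\rceil$ (the ceiling ensures an integer exponent so the hypothesis applies directly, the extra power being handled by $\|G(x_k)\|^{2n_1(2+\epsilon)/\epsilon} \le 1 + \|G(x_k)\|^m$). Restricting to the event $\Lambda_{\K}$ guarantees $x_k \in \K$ for every $k$, so Assumption~\ref{ass:finite_moment}(b) yields $\eE(\|G(x_k)\|^m \mid x_k) \le C_{\K,m}$ uniformly in $k$ on $\Lambda_{\K}$. Collecting all constants produces a deterministic bound $\eE(\|\xi_k\|^2\mid x_k) \le C$ that depends neither on $k$ nor on the particular point $x_k \in \K$; taking expectations and applying the tower property concludes. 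The only delicate points are the bookkeeping of the Young exponents and the observation that uniformity in $k$ is exactly what the boundedness event $\Lambda_{\K}$ buys us.
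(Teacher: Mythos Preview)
Your proof is correct and follows essentially the same approach as the paper: reduce to the conditional second moment of the uncentered term, apply the operator-norm bound, insert the sub-polynomial score estimate, and decouple the cross term $\vvvert \fJ_G(x_k)\vvvert^2\|G(x_k)\|^{2n_1}$ via Young's inequality with $p=1+\epsilon/2$, finishing with Assumption~\ref{ass:finite_moment}(a)--(b) on $\Lambda_{\K}$. The only cosmetic differences are that the paper also uses Young (rather than Lyapunov/Jensen) to control $\eE(\vvvert \fJ_G\vvvert^2)$, and a second Young inequality (rather than your $t^\alpha\le 1+t^{\lceil\alpha\rceil}$) to pass from the fractional exponent $2n_1(2+\epsilon)/\epsilon$ to the integer ceiling; neither choice changes the argument.
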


Lemma~\ref{lemma:bounded_variance_1} is demonstrated in Appendix~\ref{sec:proof_bounded_variance_1}. By using Lemma~\ref{lemma:bounded_variance_1}, we get $\sum_{k \in \N}{\delta_k^2 \eE(\|\xi_k\|^2)} \le C \sum_{k \in \N}{\delta_k^2} < + \infty$, by Assumption~\ref{ass:step_size_decreas}. Then, we deduce from the Doob inequality (which holds because $\sum_{k=n}^l\xi_k$ is a martingale) that
\begin{align}\label{eq:doob_inequality}
    \eE\left(\sup_{n \le l \le m}{\| \sum_{k=n}^l{\delta_k \xi_k} \|^2}\right) \le 4 \sum_{k = n}^m {\delta_k^2 \eE\left(\|\xi_k\|^2\right)}.
\end{align}
By the monotone convergence theorem, this implies
\begin{align*}
    \eE\left(\sup_{n \le l}{\| \sum_{k=n}^l{\delta_k \xi_k} \|^2}\right) &\le 4 \sum_{k = n}^{\infty} {\delta_k^2 \eE\left(\|\xi_k\|^2\right)} \\&
    \leq 4 C \sum_{k = n}^{\infty} \delta_k^2 .
\end{align*}
Thus the sequence $(\sup_{n \le l}{\| \sum_{k=n}^l{\delta_k \xi_k} \|^2})_n$ tends to zero in $L^1$ and also almost surely (because it is non-increasing). The square function being non-decreasing, it implies that $(\sup_{n \le l}{\| \sum_{k=n}^l{\delta_k \xi_k} \|})_n$ tends to zero almost surely.
The process~\eqref{eq:theoretical_process} thus verifies Assumption~\ref{ass:tadic}(b) almost surely. 
We can apply Theorem~\ref{theorem:tadic}, which concludes the proof.

\subsection{Proof of Lemma~\ref{lemma:bounded_variance_2}}\label{sec:proof_bounded_variance_2}
Using the definition of $\xi_k$, Assumptions~\ref{ass:finite_moment}-\ref{ass:denoiser_sub_polynomial}, and Young inequality, we have, almost surely on $\Lambda_{\K}$,
\begin{align*}
    &\eE(\|\gamma_k\|^2|x_k) = \eE(\|\xi_k - \eE(\xi_k)\|^2|x_k) \\
    \le& \frac{\lambda^2}{\sigma^4} \eE\left( \| \fJ_{G}^T(x_k) \left( G(x_k) - D_{\sigma}(G(x_k))\right) \|^2|x_k \right) \\
    \le& \frac{\lambda^2}{\sigma^4} \eE\left( \vvvert\fJ_{G}(x_k) \vvvert^2 \|  G(x_k) - D_{\sigma}(G(x_k)) \|^2|x_k \right) \\
    \le &\frac{2\lambda^2}{\sigma^4} \eE\left( \vvvert\fJ_{G}(x_k) \vvvert^2 \left( \|G(x_k)\|^2 + \|D_{\sigma}(G(x_k)) \|^2 \right)|x_k \right) 
    \\\le &\frac{2\lambda^2}{\sigma^4} \eE\left( \vvvert\fJ_{G}(x_k) \vvvert^2 \left( \|G(x_k)\|^2 + 2 C^2 (1 + \|G(x_k) \|^{2n_2} \right)|x_k \right) \\
    \le& \frac{2\lambda^2}{\sigma^4} \Bigl( 2 C^2 \eE\left(\vvvert\fJ_{G}(x_k) \vvvert^2\right) + \eE\left(\vvvert\fJ_{G}(x_k) \vvvert^2 \|G(x_k)\|^2 \right)\\&+ 2 C^2 \eE\left(\vvvert\fJ_{G}(x_k) \vvvert^2 \|G(x_k)\|^{2n_2} \right)  \Bigr) \\
    \le &\frac{2\lambda^2}{\sigma^4} \bigg( \frac{4 C^2}{2+\epsilon} \eE\left(\vvvert\fJ_{G}(x_k)\vvvert^{2+\epsilon}\right) +\frac{\epsilon}{2+\epsilon} + \frac{2}{2+\epsilon} \eE\left(\vvvert\fJ_{G}(x_k) \vvvert^{2+\epsilon}\right)  \\
    & +\frac{\epsilon}{2+\epsilon}\eE\left( \|G(x_k)\|^{\frac{2(2+\epsilon)}{\epsilon}} \right)+ \frac{4 C^2}{2+\epsilon} \eE\left(\vvvert\fJ_{G}(x_k) \vvvert^{2+\epsilon} \right)\\& + \frac{2C^2\epsilon}{2+\epsilon} \eE\left( \|G(x_k)\|^{\frac{2n_2(2+\epsilon)}{\epsilon}} \right)\hspace{-0.1cm}\bigg) \\
    \le &\frac{2\lambda^2}{\sigma^4} \bigg( \frac{(8 C^2 + 2)M_{2+\epsilon}+\epsilon}{2+\epsilon}  +\frac{2}{\lceil c_{\epsilon} \rceil}\eE\left( \|G(x_k)\|^{\lceil c_{\epsilon} \rceil} \right) + \frac{\lceil c_{\epsilon} \rceil - c_{\epsilon}}{\lceil c_{\epsilon} \rceil} \\&+ \frac{4 n_2 C^2}{\lceil n_2 c_{\epsilon} \rceil} \eE\left( \|G(x_k)\|^{\lceil n_2 c_{\epsilon} \rceil} \right) + \frac{\lceil n_2 c_{\epsilon} \rceil - n_2 c_{\epsilon}}{\lceil n_2 c_{\epsilon} \rceil}  \bigg) \\
    \le &\frac{2\lambda^2}{\sigma^4} \hspace{-1pt}\bigg( \hspace{-1pt}\frac{(8 C^2 + 2)M_{2+\epsilon}+\epsilon}{2+\epsilon} \hspace{-1pt}  +\hspace{-1pt}\left(\frac{2}{\lceil c_{\epsilon} \rceil}\hspace{-1pt}+ \hspace{-1pt}\frac{4 n_2 C^2}{\lceil n_2 c_{\epsilon} \rceil}\right)\hspace{-1pt}C_{\K, \lceil c_{\epsilon} \rceil}\hspace{-1pt}+ \hspace{-1pt}\frac{\lceil c_{\epsilon} \rceil - c_{\epsilon}}{\lceil c_{\epsilon} \rceil} \hspace{-1pt}\\&+ \hspace{-1pt}\frac{\lceil n_2 c_{\epsilon} \rceil - n_2 c_{\epsilon}}{\lceil n_2 c_{\epsilon} \rceil}  \bigg) \\:= &C_2 < +\infty,
\end{align*}
with $c_{\epsilon} = \frac{2(2+\epsilon)}{\epsilon}$ and $n_2$ defined in Assumption~\ref{ass:denoiser_sub_polynomial}. This proves Lemma~\ref{lemma:bounded_variance_2} by taking the expectation on $x_k$.

\subsection{Proof of Proposition~\ref{prop:convergence_biaised}}\label{sec:proof_convergence_biaised}
First the bias is denoted by $\eta_k = \eE(\xi_k)$ and the noise by ${\gamma_k = \xi_k - \eE(\xi_k)}$. So we have $\xi_k = \gamma_k + \eta_k$ and $\eE(\gamma_k) = 0$.
We apply again~Theorem~\ref{theorem:tadic}. Assumptions~\ref{ass:tadic}(a)-\ref{ass:tadic}(c) are verified thanks to Assumptions~\ref{ass:step_size_decreas}(a) and~\ref{ass:data_fidelity_reg}(b).

\begin{lemma}\label{lemma:bounded_variance_2}
    Under Assumptions~\ref{ass:finite_moment}-\ref{ass:denoiser_sub_polynomial}, almost surely on $\Lambda_{\K}$, there exists $C_2 \hspace{-1.5pt}> \hspace{-1.5pt}0$ such that $\forall k \hspace{-1.5pt}\in \hspace{-1.5pt}\N,$ \hspace{-1.5pt}$\eE(\|\xi_k\|^2) \hspace{-1.5pt}\le\hspace{-1.5pt} C_2$.
\end{lemma}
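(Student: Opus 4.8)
The plan is to work almost surely on $\Lambda_{\K}$, where every iterate satisfies $x_k \in \K$, and to bound the conditional second moment $\eE(\|\xi_k\|^2 \mid x_k)$ by a constant that depends neither on $k$ nor on the particular point $x_k \in \K$; taking expectation over $x_k$ via the tower property then yields the claim. First I would observe that, conditionally on $x_k$, the only source of randomness in $\xi_k$ is the draw $G \sim \pi$, entering solely through the term $\frac{\lambda}{\sigma^2}\fJ_{G}^T(x_k)\left(G(x_k) - D_{\sigma}(G(x_k))\right)$, whereas $\nabla f(x_k)$ and $\nabla \fF_{\sigma}^{\pi}(x_k)$ are deterministic and, being continuous, uniformly bounded on the compact $\K$. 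Splitting $\|\xi_k\|^2$ with $(a+b)^2 \le 2(a^2+b^2)$ reduces the task to controlling the second moment of this random term. Using submultiplicativity of the operator norm, $\|\fJ_{G}^T(x_k)\left(G(x_k)-D_{\sigma}(G(x_k))\right)\| \le \vvvert \fJ_{G}(x_k)\vvvert\,\|G(x_k)-D_{\sigma}(G(x_k))\|$, then the sub-polynomial bound $\|D_{\sigma}(G(x_k))\| \le C(1+\|G(x_k)\|^{n_2})$ from Assumption~\ref{ass:denoiser_sub_polynomial}, I would reduce the estimate to three expectations: $\eE_{G}(\vvvert \fJ_{G}(x_k)\vvvert^2)$, $\eE_{G}(\vvvert \fJ_{G}(x_k)\vvvert^2\|G(x_k)\|^2)$ and $\eE_{G}(\vvvert \fJ_{G}(x_k)\vvvert^2\|G(x_k)\|^{2n_2})$.

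The main obstacle, and the crux of the argument, lies in the two cross terms, which couple the Jacobian norm with the transformed point $G(x_k)$, while Assumption~\ref{ass:finite_moment}(a) supplies only a uniform $(2+\epsilon)$-moment $\eE_{G}(\vvvert \fJ_{G}(x_k)\vvvert^{2+\epsilon}) \le M_{2+\epsilon}$ rather than a higher one. The decoupling is achieved by Young's inequality $|uv|\le \frac{u^p}{p}+\frac{v^q}{q}$ with $\frac{1}{p}+\frac{1}{q}=1$: the choice $p = 1+\frac{\epsilon}{2}$ makes $\vvvert \fJ_{G}(x_k)\vvvert^{2p} = \vvvert \fJ_{G}(x_k)\vvvert^{2+\epsilon}$ match exactly the available moment, while the conjugate exponent $q=\frac{2+\epsilon}{\epsilon}$ produces a power $\|G(x_k)\|^{2n_2 q}$. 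Since this exponent is generally non-integer, I would bound it by the nearest integer at the cost of a harmless additive constant (using $t^a \le 1 + t^{\lceil a\rceil}$), so that Assumption~\ref{ass:finite_moment}(b) applies: as $x_k\in\K$ on $\Lambda_{\K}$, it gives $\eE_{G}(\|G(x_k)\|^m) \le C_{\K,m}$ for the relevant integer orders $m$, uniformly in $x_k \in \K$.

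Collecting these estimates produces a finite constant $C_2$, depending on $\lambda$, $\sigma$, $\epsilon$, $n_2$, $M_{2+\epsilon}$ and the compact $\K$ but independent of the iteration index $k$, such that $\eE(\|\xi_k\|^2 \mid x_k) \le C_2$ almost surely on $\Lambda_{\K}$. Taking expectation over $x_k$ then gives $\eE(\|\xi_k\|^2)\le C_2$ for all $k$, which is the assertion. I expect the only delicate bookkeeping to be the selection of conjugate exponents and the integer rounding in the Young step; the remainder is a routine chain of norm inequalities, and the argument parallels the proof of Lemma~\ref{lemma:bounded_variance_1}, with the sub-polynomial bound on $D_{\sigma}$ (Assumption~\ref{ass:denoiser_sub_polynomial}) now playing the role previously taken by the sub-polynomial bound on the score.
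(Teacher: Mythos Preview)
Your proposal is correct and follows essentially the same argument as the paper: reduce to the second moment of $\frac{\lambda}{\sigma^2}\fJ_G^T(x_k)\bigl(G(x_k)-D_\sigma(G(x_k))\bigr)$, apply the sub-polynomial bound on $D_\sigma$, decouple the Jacobian factor from the powers of $\|G(x_k)\|$ via Young's inequality with $p=1+\tfrac{\epsilon}{2}$ (so that the Jacobian exponent matches the $(2+\epsilon)$-moment in Assumption~\ref{ass:finite_moment}(a)), round the resulting exponent on $\|G(x_k)\|$ up to an integer, and invoke Assumption~\ref{ass:finite_moment}(b) on the compact $\K$. The only cosmetic difference is that the paper first centers, using $\eE(\|\xi_k-\eE(\xi_k)\|^2\mid x_k)\le \frac{\lambda^2}{\sigma^4}\eE_G\bigl(\|\fJ_G^T(x_k)(G(x_k)-D_\sigma(G(x_k)))\|^2\bigr)$ to discard the deterministic terms, whereas you handle $\nabla f(x_k)-\nabla\fF_\sigma^\pi(x_k)$ separately by continuity on $\K$.
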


Lemma~\ref{lemma:bounded_variance_2} is proved in Section~\ref{sec:proof_bounded_variance_2}. Then, by using Lemma~\ref{lemma:bounded_variance_2} and the Doob inequality as in relation~\eqref{eq:doob_inequality}, we obtain Assumption~\ref{ass:tadic}(b) of \cite{tadic2017asymptotic} (i.e. the noise fluctuation is controlled) and we can apply Theorem~\ref{theorem:tadic} to obtain equations~\eqref{eq1}-\eqref{eq2}.
Under Assumption~\ref{ass:g_bounded}, we study the asymptotic behavior of $\eta_k$,
\begin{align*}
    &\|\eta_k\| \\
    =&\; \|\eE(\xi_k)\| = \frac{\lambda}{\sigma^2}\left\|\eE\left( \fJ_G^T(x_k) \left(D_{\sigma} - D^{\ast}_{\sigma}\right)(G(x_k)) \right)\right\| \\
    \le&\; \frac{\lambda}{\sigma^2} \eE\left( \vvvert \fJ_G(x_k) \vvvert \|\left(D_{\sigma} - D^{\ast}_{\sigma}\right)(G(x_k))\| \right).
\end{align*}
By Assumption~\ref{ass:g_bounded}, because $x_k \in \K$, we know that $G(x_k) \in \L = \mathcal{B}(0, C_{\K})$. So we have $\|\left(D_{\sigma} - D^{\ast}_{\sigma}\right)(G(x_k))\| \le \|D_{\sigma} - D^{\ast}_{\sigma}\|_{\infty, \L}$ and the desired inequality~\eqref{ineq}.

\subsection{Proof of Proposition~\ref{prop:uniform_cvg}}\label{sec:proof_uniform_cvg}
Due to the $\pi$-equivariance of $p$, we have $s = \eE_{G \sim \pi}\left(\fJ_{G}^T (s \circ G)  \right)$ with a random variable $G \sim \pi$. With the definition of $s_{\sigma}^{\pi}$~\eqref{eq:equiv_score}, we get for $x \in \R^d$
\begin{align}\nonumber
    &s_{\sigma}^{\pi}(x) - \nabla \log p(x) \\=& \eE_{G \sim \pi} \left(\fJ_G^T(x) \left(\nabla \log p_{\sigma} - \nabla \log p \right)(G(x))\right).
\end{align}
By Assumption~\ref{ass:g_bounded}, we get that $\forall x \in \K, G(x) \in \mathcal{B}(0, C_{\K})$, with $\L = \mathcal{B}(0, C_{\K})$ the closed ball of center $0$ and radius $C_{\K}$. By Assumption~\ref{ass:p_regularity}(a) and Proposition 1 in~\cite{laumont2023maximum}, we know that $\|\nabla \log p_{\sigma} - \nabla \log p\|_{\infty, \L} \to 0$ when $\sigma \to 0$.
Then,  when $\sigma \to 0$, using Assumption~\ref{ass:g_bounded}, we obtain
\begin{align*}
    &\|s - s_{\sigma}^{\pi}\|_{\infty, \K} \\
    \le&\; \eE_{G \sim \pi} \left(\vvvert\fJ_G\vvvert \|\nabla \log p_{\sigma} - \nabla \log p \|_{\infty, \L}\right) \\
    \le&\; \|\nabla \log p_{\sigma} - \nabla \log p \|_{\infty, \L} \eE_{G \sim \pi}\left(\vvvert\fJ_G\vvvert \right) \to 0. 
\end{align*}

\subsection{Proof of Proposition~\ref{prop:critical_points_cvg}}\label{sec:proof_critical_points_cvg}
For $x \in \S$, we have $\sigma_n >0$ decreasing to $0$ and $x_n \in \S_{\sigma_n}$ such that $x_n \to x$. Because $x_n$ is a converging sequence, there exists a compact $\K$ such that $\forall n \in \N, x_n \in \K$. Moreover, thanks to Proposition~\ref{prop:uniform_cvg}, $\|s - s_{\sigma_n}^{\G}\|_{\infty, \K} \xrightarrow[n \to \infty]{} 0$ and then $\|\fF - \fF_{\sigma_n}^{\pi}\|_{\infty, \K} \to 0$. So, $\|\fF(x_n) - \fF_{\sigma_n}^{\pi}(x_n)\| \to 0$ and by definition $\fF_{\sigma_n}^{\pi}(x_n) = 0$ which gives $\|\fF(x_n)\| \to 0$. It implies that $\|\nabla \fF(x)\| = 0$.

\subsection{Proof of Lemma~\ref{lemma:residual_control_perturbed_pgd}}\label{sec:proox_residual_control_perturbed_pgd}

Thanks to the $L_f$-smoothness of $f$ (Assumption~\ref{ass:regularities_assumptions}) on all $\R^d$, we have
\begin{align}\label{eq:f_l_smooth}
    &f(x_{k+1} - \zeta_{k+1}) \le f(x_k - \zeta_k) + \langle \nabla f(x_k- \zeta_k), x_{k+1} - x_k - \zeta_{k+1} + \zeta_k \rangle \\
    &+ \frac{L_f}{2} \|x_{k+1} - x_k - \zeta_{k+1} + \zeta_k\|^2 \nonumber
\end{align}
We now exploit Equation~\eqref{eq:perturbe_version_snopnp} that can be expressed as $x_{k+1} - \zeta_{k+1} =  \mathsf{Prox}_{\tilde g_{\sigma}}\left( x_k - \frac{1}{\lambda} \nabla f(x_{k}) \right)$. It is important to note that $x_{k+1} - \zeta_{k+1}$ is a deterministic function of $x_k$. We denote $\eE_k = \eE(\cdot|x_k)$. By taking the expectation in Equation~\eqref{eq:f_l_smooth}, because $x_k-\zeta_k$ is a deterministic function of $x_{k-1}$ and $\zeta_{k}$, $\zeta_{k+1}$ are centered and independent of $x_k-\zeta_k$, we have $\eE\left(\langle \nabla f(x_k- \zeta_k), \zeta_k - \zeta_{k+1}  \rangle \right) = 0$. Therefore, we get
\begin{align}
    &\eE\left(f(x_{k+1} - \zeta_{k+1})\right) \le \eE\left(f(x_k - \zeta_k)\right) + \eE\left(\langle \nabla f(x_k), x_{k+1} - x_k \rangle\right) + L_f \eE\left(\|x_{k+1} - x_k\|^2\right) \label{eq:ineq_inter_lemma}\\
    &+ L_f \eE\left(\| \zeta_{k+1} - \zeta_k\|^2\right)\nonumber.
\end{align}
Then, the optimal condition of the proximal operator in Equation~\eqref{eq:perturbe_version_snopnp} gives 
\begin{align}
    0&=x_{k+1} -  \zeta_{k+1} - x_k + \frac{1}{\lambda} \nabla f(x_k) + \nabla \tilde g_{\sigma}(x_{k+1} -  \zeta_{k+1})  \nonumber \\
    \nabla f(x_k) &= \lambda (x_k - x_{k+1}) + \lambda \zeta_{k+1} - \lambda \nabla \tilde g_{\sigma}(x_{k+1} -  \zeta_{k+1}). \label{eq:formulation_nabla_f}
\end{align}
Injecting Equation~\eqref{eq:formulation_nabla_f} into Equation~\eqref{eq:ineq_inter_lemma}, we get
\begin{align}
    &\eE\left(f(x_{k+1} - \zeta_{k+1})\right) \le \eE\left(f(x_k - \zeta_k)\right) + \left(L_f - \lambda\right) \eE\left(\|x_{k+1} - x_k\|^2\right) + \lambda \eE\left(\langle \zeta_{k+1}, x_{k+1} - x_k \rangle\right) \nonumber \\ &- \lambda \eE\left(\langle \nabla \tilde g_{\sigma}(x_{k+1} -  \zeta_{k+1}), x_{k+1} - x_k \rangle\right) \nonumber\\ &+ \eE\left(\langle \nabla f(x_k- \zeta_k) - \nabla f(x_k), x_{k+1} - x_k\rangle\right)  + L_f \eE\left(\| \zeta_{k+1} - \zeta_k\|^2\right). \label{eq:intermediar_bound} 
\end{align}
Thanks to Equations~\eqref{eq:formulation_nabla_f} and~\eqref{eq:variance_zeta}, we have 
\begin{align}\label{eq:control_noise_intermediar}
    &\eE(\langle \zeta_{k+1}, x_{k+1} - x_k \rangle))\nonumber \\
    &= \eE\left(\langle \zeta_{k+1}, \zeta_{k+1} - \frac{1}{\lambda} \nabla f(x_k) - \nabla \tilde g_{\sigma}(x_{k+1} -  \zeta_{k+1}) \rangle) \right) \le 2 L_h^2 \sigma^2
\end{align}
while $\eE \| \zeta_{k+1} - \zeta_k\|^2\leq 2(\eE \| \zeta_{k+1}\|^2+\eE \| \zeta_{k}\|^2 )\leq  8 L_h^2 \sigma^2 $. 
By taking the expectation in equation~\eqref{eq:intermediar_bound} and thanks to equation~\eqref{eq:control_noise_intermediar} and the previous relation, we get 
\begin{align}
    &\eE(f(x_{k+1} - \zeta_{k+1})) \le \eE(f(x_k - \zeta_k)) + \left(L_f - \lambda\right) \eE(\|x_{k+1} - x_k\|^2) \nonumber
    \\&- \lambda \eE(\langle \nabla \tilde g_{\sigma}(x_{k+1} -  \zeta_{k+1}), x_{k+1} - x_k \rangle)   + (8 L_f + 2\lambda) L_h^2 \sigma^2 \nonumber\\
    &+ \eE\left(\langle \nabla f(x_k- \zeta_k) - \nabla f(x_k), x_{k+1} - x_k \rangle \right).\label{eq:intermediar_bound_1}
\end{align}

We have,
\begin{align}
    &\eE\left(\langle \nabla f(x_k- \zeta_k) - \nabla f(x_k), x_{k+1} - x_k \rangle \right) \nonumber \\
    &\le \eE\left( \|\nabla f(x_k- \zeta_k) - \nabla f(x_k)\| \|x_{k+1} - x_k\|\right)  \nonumber\\
    &\le \eE\left( L_f \| \zeta_k\| \|x_{k+1} - x_k\| \right)  \nonumber\\
    &\le \frac{L_f}{2} \eE\left(\| \zeta_k\|^2\right) + \frac{L_f}{2} \eE\left(\|x_{k+1} - x_k\|^2\right)  \nonumber\\
    &\le \frac{L_f}{2} \eE\left( \|x_{k+1} - x_k\|^2\right) + L_f L_h^2 \sigma^2, \label{eq:one_term_control}
\end{align}
using equation~\eqref{eq:variance_zeta} for the last inequality. By combining equations~\eqref{eq:intermediar_bound_1} and~\eqref{eq:one_term_control}, we get
\begin{align}
    &\eE(f(x_{k+1} - \zeta_{k+1})) \le \eE(f(x_k - \zeta_k)) + \left(\frac{3}{2} L_f - \lambda\right) \eE(\|x_{k+1} - x_k\|^2) \nonumber
    \\&- \lambda \eE(\langle \nabla \tilde g_{\sigma}(x_{k+1} -  \zeta_{k+1}), x_{k+1} - x_k \rangle)   + (9 L_f + 2\lambda) L_h^2 \sigma^2. \label{eq:intermediar_result_2}
\end{align}

Thanks to Proposition~\ref{prop:denoiser_is_a_prox}, we have that $\tilde g_{\sigma}$ is $\rho$-weakly convexity, with $\rho = \frac{L_h}{1+L_h}$. So $\tilde g_{\sigma} + \frac{\rho}{2} \|\cdot\|$ is convex. This implies that $\forall x, y \in \R^d$, $-\langle \nabla \tilde g_{\sigma}(x), x - y \rangle \le \tilde g_{\sigma}(y) - \tilde g_{\sigma}(x) + \frac{\rho}{2} \|x-y\|^2$ (see a proof of this fact in Lemma~\ref{lemma:inequality_weakly_cvx} of Section~\ref{sec:snopnp}).
Therefore, with $ x = x_{k+1} -  \zeta_{k+1}$ and $y = x_k - \zeta_k$ in the previous inequality, we get
\begin{align}
    &-\eE(\langle \nabla \tilde g_{\sigma}(x_{k+1} -  \zeta_{k+1}), x_{k+1} - x_k \rangle) \nonumber \\
    &= -\eE(\langle \nabla \tilde g_{\sigma}(x_{k+1} -  \zeta_{k+1}), x_{k+1} -  \zeta_{k+1} - x_k -  \zeta_{k} \rangle) \nonumber\\
    &\le \eE\left( \tilde g_{\sigma}(x_k - \zeta_k) - \tilde g_{\sigma}(x_{k+1} -  \zeta_{k+1}) + \frac{\rho}{2} \|x_{k+1} -x_k + \zeta_k -  \zeta_{k+1}\|^2 \right) \nonumber\\
    &\le \eE\left( \tilde g_{\sigma}(x_k - \zeta_k) - \tilde g_{\sigma}(x_{k+1} -  \zeta_{k+1}) + \rho \alpha \|x_{k+1} -x_k\|^2 + \frac{4 \rho}{\alpha} L_h^2 \sigma^2 \right),\label{eq:control_scalar_product_weakly_convex}
\end{align}
because $\forall x, y \in \R^d, \forall \alpha > 0, \langle x, y\rangle \le \frac{\alpha}{2} \|x\|^2 + \frac{1}{2\alpha}\|x\|^2$.

By injecting Equation~\eqref{eq:control_scalar_product_weakly_convex} into Equation~\eqref{eq:intermediar_result_2} and denoting $\tilde \fF = f + \lambda \tilde g_{\sigma}$, we get
\begin{align}
    &\eE(\tilde \fF(x_{k+1} - \zeta_{k+1})) \le \eE(\tilde \fF(x_k - \zeta_k)) + \left(\frac{3}{2} L_f - \lambda (1 - \rho \alpha) \right) \eE(\|x_{k+1} - x_k\|^2) \nonumber
    \\&
    + (9 L_f + \frac{4 \rho \lambda}{\alpha} + 2\lambda ) L_h^2 \sigma^2. \label{eq:intermediar_result_3}
\end{align}
 Due to the inequality $\rho < 1$, we can take $\alpha = \frac{1}{2}$ to ensure that $1 - \rho \alpha \ge \frac{1}{2}$. With this choice, we obtain
\begin{align}
    &\eE(\tilde \fF(x_{k+1} - \zeta_{k+1})) \le \eE(\tilde \fF(x_k - \zeta_k)) + \frac{3L_f - \lambda (2 - \rho)}{2} \eE(\|x_{k+1} - x_k\|^2) \nonumber
    \\& + (9 L_f + 8 \rho \lambda + 2 \lambda) L_h^2 \sigma^2, \label{eq:intermediar_result_4}
\end{align}

By re-arranging the terms, we have
\begin{align*}
    \frac{ \lambda (2 - \rho) - 3L_f}{2} \eE(\|x_{k+1} - x_k\|^2) &\le \eE(\tilde \fF(x_k - \zeta_k) - \tilde \fF(x_{k+1} - \zeta_{k+1})) \\
    &+ (9 L_f + 8 \rho \lambda + 2\lambda ) L_h^2 \sigma^2.
\end{align*}

By averaging for $k$ between $0$ and $N-1$, we get, for $\lambda > \frac{3 L_f}{2-\rho}$
\begin{align*}
  \frac{1}{N} \sum_{k=0}^{N-1} \eE(\|x_{k+1} - x_k\|^2) &\le \frac{2(\tilde \fF(x_0) - \fF^\ast)}{N(\lambda (2 - \rho) - 3L_f)} + \frac{2(9 L_f + 8 \rho \lambda + 2 \lambda)}{\lambda (2 - \rho) - 3L_f} L_D^2 \sigma^2,
\end{align*}
due to $\zeta_0 = 0$.
This proves Lemma~\ref{lemma:residual_control_perturbed_pgd} with $C_1 = \frac{2}{(\lambda (2 - \rho) - 3L_f)}$ and $C_2 = \frac{2(9 L_f + 8 \rho \lambda + 2 \lambda)}{\lambda (2 - \rho) - 3L_f} L_h^2$. Moreover because $\rho = \frac{L_h}{1+L_h}< 1$ (Assumption~\ref{ass:denoiser_structure}), the condition $\lambda > \frac{3 L_f}{2-\rho}$ can be relaxed into $\lambda \ge 3 L_f$.

\subsection{Proof of Lemma~\ref{lemma:residual_control_snopnp}}\label{sec:proof_residuals_snopnp}

For completeness, we recall the proof of Lemma~\ref{lemma:residual_control_snopnp} in our notations, first demonstrated in~\cite[Lemma 2]{renaud2024convergenceanalysisproximalstochastic}.

First, we recall that for a $\rho$-weakly convex function $g_{\sigma}$ satisfying  $\rho<1$, problem $\argmin_{z \in \R^d} \frac{1}{2}\|x-z\|^2+g_{\sigma}(z)$ is strongly convex and $\mathsf{Prox}_{g_{\sigma}}$ is univalued. Next we introduce the quantity $G_{k}$ from the  proximal mapping~\eqref{eq:snopnp_iterates} as
\begin{align}
    \tilde G_{k} = x_k - x_{k+1}= x_k - \mathsf{Prox}_{g_{\sigma}}\left( x_k - \frac{1}{\lambda} \tilde \nabla f(x_k) \right),\label{eq:def_G2}
\end{align}
where $\tilde \nabla f(x_k) = \nabla f(x_k) + \zeta_k = \nabla f(x_k) + \sigma \lambda z_{k+1}$ with $z_{k+1} \sim \mathcal{N}(0, I_d)$.

From Assumption~\ref{ass:regularities_assumptions}(a), we have that $\nabla f$ is $L_{f}$-Lipschitz, which gives
\begin{align}
    f(x_{k+1}) &\le f(x_{k}) + \langle \nabla f(x_{k}), x_{k+1} - x_k \rangle +\frac{L_{f}}{2} \|x_{k+1} - x_k \|^2 \\
    &= f(x_{k}) - \langle \nabla f(x_{k}), \tilde G_{k} \rangle +\frac{L_{f}}{2} \|\tilde G_{k}\|^2 \\
    &= f(x_{k}) - \langle \tilde \nabla f(x_{k}), \tilde G_{k} \rangle +\frac{L_{f}}{2} \|\tilde G_{k}\|^2 + \langle \zeta_k, \tilde G_{k} \rangle, \label{eq:first_inequalities2}
\end{align}
with $\zeta_k =\tilde \nabla f(x_{k}) - \nabla f(x_{k})$.

The optimal condition of the proximal operator in~\eqref{eq:def_G2} implies that 
\begin{align}
    x_{k+1}-x_k + \frac{1}{\lambda} \tilde \nabla f(x_{k}) + \nabla g_{\sigma}(x_{k+1}) = 0,
\end{align}
so $\tilde G_{k}$ can also be expressed as
\begin{align}\label{eq:prox_map_formula2}
    \tilde G_{k} = \frac{1}{\lambda} \tilde \nabla f(x_{k}) + \nabla g_{\sigma}(x_{k+1}).
\end{align}
By using equations~\eqref{eq:first_inequalities2} and~\eqref{eq:prox_map_formula2}, we get
\begin{align}
    &f(x_{k+1}) \le f(x_{k}) - \langle \lambda\left( \tilde G_{k} - \nabla g_{\sigma}(x_{k+1}) \right), \tilde G_{k} \rangle +\frac{L_{f}}{2} \|\tilde G_{k}\|^2 + \langle \zeta_k, \tilde G_{k} \rangle \\
    &= f(x_{k}) + \left(\frac{L_{f}}{2} - \lambda \right) \|\tilde G_{k}\|^2 + \lambda \langle \nabla g_{\sigma}(x_{k+1}) , \tilde G_{k} \rangle + \langle \zeta_k, \tilde G_{k} \rangle \\
    &= f(x_{k}) + \left(\frac{L_{f}}{2} - \lambda \right) \|\tilde G_{k}\|^2 + \lambda \langle \nabla g_{\sigma}(x_{k+1}) , x_k - x_{k+1} \rangle + \langle \zeta_k, \tilde G_{k}\rangle. \label{eq:inequality_before_cvx2}
\end{align}

Combining equation~\eqref{eq:inequality_before_cvx2} and Lemma~\ref{lemma:inequality_weakly_cvx}, we get
\begin{align*}
   & f(x_{k+1}) \\\le\, &f(x_{k}) + \left(\frac{L_{f}}{2} - \lambda \right) \|\tilde G_{k}\|^2 + \lambda \left( g_{\sigma}(x_k) - g_{\sigma}(x_{k+1}) + \frac{\rho}{2} \|x_{k+1}-x_k \|^2 \right)\\& + \langle \zeta_k, \tilde G_{k} \rangle \\
    =\, &f(x_{k}) + \left(\frac{L_{f}}{2} - \lambda \right) \|\tilde G_{k}\|^2 + \lambda \left( g_{\sigma}(x_k) - g_{\sigma}(x_{k+1}) + \frac{\rho}{2} \|\tilde G_{k}\|^2 \right) \\&+ \langle \zeta_k, \tilde G_{k} \rangle \\
    =\,& f(x_{k}) + \left(\frac{L_{f}}{2} + \frac{\rho \lambda}{2} - \lambda \right) \|\tilde G_{k}\|^2 + \lambda \left( g_{\sigma}(x_k) - g_{\sigma}(x_{k+1}) \right) + \langle \zeta_k, \tilde G_{k} \rangle.
\end{align*}
Then we introduce $G_{k}$ as
\begin{align}
    G_{k} = x_k - \mathsf{Prox}_{g_{\sigma}}\left( x_k - \frac{1}{\lambda} \nabla f(x_k) \right).\label{def_Gbar}
\end{align}
Note that $G_k$ is a deterministic function of $x_k$. Thus we have $\eE_k \langle \zeta_k, G_{k} \rangle = 0$.

By re-arranging terms we get
\begin{align}
   \left(\lambda (1 - \frac{\rho}{2}) - \frac{L_{f}}{2}  \right) \|\tilde G_{k}\|^2\nonumber \le F(x_{k}) - F(x_{k+1}) + \langle \zeta_k, G_{k} \rangle + \langle \zeta_k, \tilde G_{k} - G_{k} \rangle.
\end{align}
By taking the expectation $\eE_k$ with respect with $x_k$, and taking $\lambda > \frac{L_f}{2- \rho}$, we get
\begin{align}
\eE_k\left(\|\tilde G_{k}\|^2\right) \le \frac{2}{(2-\rho)\left(\lambda - \frac{L_f}{2-\rho} \right)} \left[\eE_k\left(F(x_{k}) - F(x_{k+1})\right) + \eE_k\left(\langle \zeta_k, \tilde G_{k} - G_{k} \rangle\right) \right].\label{eq:control_with_scalar_product}
\end{align}
To finish the proof, we now prove an intermediate result.
\begin{lemma}\label{lemma:control_prox_map_weakly}
For $f$  $\rho$-weakly convex, we have
\begin{align}
    \frac{1}{\lambda (1 - \rho)} \| \zeta_k \|  \ge \|\tilde G_{k} - G_{k}\|.
\end{align}
\end{lemma}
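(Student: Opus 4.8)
The plan is to reduce the claim to a Lipschitz estimate for the proximal operator of the weakly convex regularization $g_{\sigma}$. Note first that, although the statement is phrased for $f$, the quantities $G_k$ and $\tilde G_k$ are both built from $\mathsf{Prox}_{g_{\sigma}}$, so the relevant weak convexity is that of $g_{\sigma}$, which Proposition~\ref{prop:denoiser_is_a_prox} guarantees with $\rho = \frac{L_h}{1+L_h} < 1$. First I would write $u_k = x_k - \frac{1}{\lambda}\nabla f(x_k)$ and $\tilde u_k = x_k - \frac{1}{\lambda}\tilde \nabla f(x_k)$, so that the defining relations~\eqref{def_Gbar} and~\eqref{eq:def_G2} give
\begin{align*}
    \tilde G_k - G_k = \mathsf{Prox}_{g_{\sigma}}(u_k) - \mathsf{Prox}_{g_{\sigma}}(\tilde u_k),
    \qquad u_k - \tilde u_k = \frac{1}{\lambda}\left(\tilde \nabla f(x_k) - \nabla f(x_k)\right) = \frac{1}{\lambda}\zeta_k.
\end{align*}
Thus the lemma is exactly the assertion that $\mathsf{Prox}_{g_{\sigma}}$ is $\frac{1}{1-\rho}$-Lipschitz, applied to two points separated by $\frac{1}{\lambda}\zeta_k$.

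The core step, and the only real obstacle, is therefore to establish this Lipschitz bound for the proximal operator of a $\rho$-weakly convex function with $\rho < 1$. I would set $p = \mathsf{Prox}_{g_{\sigma}}(u_k)$ and $q = \mathsf{Prox}_{g_{\sigma}}(\tilde u_k)$; since $\rho < 1$ the proximal objectives are strongly convex, the minimizers are unique, and the first-order conditions read $u_k - p = \nabla g_{\sigma}(p)$ and $\tilde u_k - q = \nabla g_{\sigma}(q)$. The function $\psi := g_{\sigma} + \frac{\rho}{2}\|\cdot\|^2$ is convex, so $\nabla \psi$ is monotone; since $\nabla \psi(p) = (u_k - p) + \rho p$ and $\nabla \psi(q) = (\tilde u_k - q) + \rho q$, monotonicity yields
\begin{align*}
    0 \le \langle \nabla \psi(p) - \nabla \psi(q),\, p - q\rangle = \langle u_k - \tilde u_k,\, p - q\rangle - (1-\rho)\|p - q\|^2.
\end{align*}
Rearranging and applying Cauchy--Schwarz gives $(1-\rho)\|p - q\|^2 \le \langle u_k - \tilde u_k,\, p - q\rangle \le \|u_k - \tilde u_k\|\,\|p - q\|$, hence $\|p - q\| \le \frac{1}{1-\rho}\|u_k - \tilde u_k\|$.

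Combining the two steps with $\|u_k - \tilde u_k\| = \frac{1}{\lambda}\|\zeta_k\|$ yields
\begin{align*}
    \|\tilde G_k - G_k\| = \|p - q\| \le \frac{1}{1-\rho}\|u_k - \tilde u_k\| = \frac{1}{\lambda(1-\rho)}\|\zeta_k\|,
\end{align*}
which is the desired inequality. The delicate point is purely the Lipschitz estimate for the proximal map: it is the familiar firm nonexpansiveness in the convex case, and the weakly convex refinement hinges precisely on $\rho < 1$, as supplied by Proposition~\ref{prop:denoiser_is_a_prox}. Differentiability of $g_{\sigma}$ at $p$ and $q$ is not an issue, since these prox points lie in $\mathsf{Im}(D_{\sigma})$ where $g_{\sigma}$ is smooth; alternatively one may run the same argument with the monotonicity of the subdifferential of the convex function $\psi$.
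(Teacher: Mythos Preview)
Your proof is correct and follows essentially the same route as the paper's: both write the first-order optimality conditions for the two proximal points, use the $\rho$-weak convexity of $g_{\sigma}$ (equivalently, monotonicity of $\nabla g_{\sigma} + \rho\,\mathrm{Id}$) to obtain $(1-\rho)\|p-q\|^2 \le \langle u_k - \tilde u_k,\, p-q\rangle$, and conclude by Cauchy--Schwarz. You also rightly observe that the hypothesis should read ``$g_{\sigma}$ is $\rho$-weakly convex'' rather than ``$f$'', which is indeed how the paper's own proof uses it.
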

\begin{proof}
We define $u_k = \mathsf{Prox}_{g_{\sigma}}\left( x_k - \frac{1}{\lambda} \nabla f(x_k) \right)$ and $v_k = \mathsf{Prox}_{g_{\sigma}}\left( x_k - \frac{1}{\lambda} \tilde \nabla f(x_k) \right)$. By the optimal condition of the proximal operator, we get
\begin{align*}
   u_k - x_k + \frac{1}{\lambda} \nabla f(x_k) + \nabla g_{\sigma}(u_k) = 0 \\
   v_k - x_k + \frac{1}{\lambda} \tilde \nabla f(x_k) + \nabla g_{\sigma}(v_k) = 0.
\end{align*}
So we have
\begin{align*}
    \nabla f(x_k) &= - \lambda \nabla g_{\sigma}(u_k) + \lambda (x_k - u_k)\\
    \tilde \nabla f(x_k) &= - \lambda \nabla g_{\sigma}(v_k) + \lambda (x_k - v_k) \\
    \langle \nabla f(x_k), v_k - u_k \rangle &=  - \lambda \langle \nabla g_{\sigma}(u_k), v_k - u_k \rangle + \lambda \langle x_k - u_k, v_k - u_k \rangle \\
    \langle \tilde \nabla f(x_k), u_k - v_k \rangle &=  - \lambda  \langle \nabla g_{\sigma}(v_k), u_k - v_k \rangle + \lambda \langle x_k - v_k, u_k - v_k \rangle \\
    \langle \nabla f(x_k) - \tilde \nabla f(x_k), v_k - u_k \rangle &= \lambda \langle \nabla g_{\sigma}(v_k) - \nabla g_{\sigma}(u_k), v_k - u_k \rangle +\lambda \|v_k - u_k \|^2\\
    &\ge \lambda \left( 1 - \rho \right) \|v_k - u_k \|^2,
\end{align*}
where we used Lemma~\ref{lemma:inequality_weakly_cvx} on the $\rho$-weak convexity of $f$ for the last relation. 
Thus, from the definition of the bias $\zeta_k = \tilde \nabla f(x_k) - \nabla f(x_k)$, we obtain 
\begin{align*}
    \| \zeta_k \| \| v_k - u_k \| \ge \langle \nabla f(x_k) - \tilde \nabla f(x_k), v_k - u_k \rangle \ge \lambda \left( 1 - \rho \right) \|v_k - u_k \|^2.
\end{align*}
By definition of $G_{k}$ and $\tilde G_{k}$, we have $\tilde G_{k} - G_{k} = v_k - u_k$ and  
 we get
\begin{align*}
    \| \zeta_k \|  \ge \lambda \left( 1 - \rho \right) \| \tilde G_{k} - G_{k} \|.
\end{align*}
\end{proof}
Using Lemma~\ref{lemma:control_prox_map_weakly} on equation~\eqref{eq:control_with_scalar_product}, we obtain
\begin{align}
   & \eE_k\left(\|\tilde G_{k}\|^2\right) \le \frac{2}{(2-\rho)\left(\lambda - \frac{L_f}{2-\rho} \right)}  \left[\eE_k\left(F(x_{k}) - F(x_{k+1})\right) +  \frac{1}{\lambda (1 - \rho)} \eE_k\left(\| \zeta_k \|^2\right) \right].
\end{align}
Moreover, we recall that $\zeta_k = \sigma \lambda z_{k+1}$, so
\begin{align}
    \eE_k(\|\zeta_k\|^2) = \sigma^2 \lambda^2 \eE_k(\|z_{k+1}\|^2) = \sigma^2 \lambda^2 d.
\end{align}
Thus, we obtain
\begin{align}\label{eq:bound_gadient_mapping}
    \eE_k\left(\|\tilde G_{k}\|^2\right) \le \frac{2}{(2-\rho)\left(\lambda - \frac{L_f}{2-\rho} \right)}  \left[\eE_k\left(F(x_{k}) - F(x_{k+1})\right) +  \frac{\lambda  \sigma^2 d}{1 - \rho} \right].
\end{align}
Recalling that $\tilde G_{k} = x_k - x_{k+1}$ and 
averaging for k between $0$ and $N-1$, we get
\begin{align*}
\frac{1}{N}\sum_{k=0}^{N-1} \eE\left(\|x_{k+1} - x_k\|^2\right) &\le \frac{2}{(2-\rho)\left(\lambda - \frac{L_f}{2-\rho} \right)}  \left[\frac{1}{N}\eE\left(F(x_0) - F(x_{N+1})\right) +  \frac{\lambda  \sigma^2 d}{1 - \rho} \right] \\
&\le \frac{2}{(2-\rho)\left(\lambda - \frac{L_f}{2-\rho} \right)}  \left[\frac{F(x_0) - F^\ast}{N} +  \frac{\lambda d \sigma^2}{1 - \rho} \right]
\end{align*}
which demonstrates Lemma~\ref{lemma:residual_control_snopnp} with $A_1 = \frac{2}{(2-\rho)\left(\lambda - \frac{L_f}{2-\rho} \right)}$ and $A_2 = \frac{2 \lambda d}{(1-\rho)(2-\rho)\left(\lambda - \frac{L_f}{2-\rho} \right)}$.

\subsection{Proof of Proposition~\ref{prop:snopnp_convergence_critical_point}}\label{sec:appendix_proof_proposition_cvg_critical_points_spgd}

We estimate the difference between the gradient of $F=f+\lambda g_{\sigma}$ and the gradient mapping $G_k$ (defined in equation~\eqref{eq:prox_map_formula2})

\begin{align*}
    \frac{1}{\lambda} \nabla F(x_k) - G_k &= \frac{1}{\lambda} \left(\nabla f(x_k) + \lambda \nabla g_{\sigma}(x_k) \right) - \frac{1}{\lambda} \tilde \nabla f(x_k) - \nabla g_{\sigma}(x_{k+1})\\
    &= \frac{1}{\lambda} \left(\nabla f(x_k) - \tilde \nabla f(x_k) \right) + \nabla g_{\sigma}(x_k) - \nabla g_{\sigma}(x_{k+1}).
\end{align*}

Inspired by~\cite[Lemma 5.5]{gao2024non}, we make the following computation, using the $L_g$-Lipschitz continuity of $\nabla g_{\sigma}$ on $\text{Im}(D_{\sigma})$ given by Assumption~\ref{ass:g_l_smooth}
\begin{align*}
    &\|\frac{1}{\lambda}\nabla F(x_k)\|^2 = \|G_k + \frac{1}{\lambda} \nabla F(x_k) - G_k\|^2 \\
    &=  \|G_k + \frac{1}{\lambda} \left(\nabla f(x_k) - \tilde \nabla f(x_k)\right) + \left( \nabla g_{\sigma}(x_k) - \nabla g_{\sigma}(x_{k+1}) \right)\|^2 \\
    &\le 3 \|G_k\|^2 + \frac{3}{\lambda^2} \|\nabla f(x_k) - \tilde \nabla f(x_k)\|^2 + 3\| \nabla g_{\sigma}(x_k) - \nabla g_{\sigma}(x_{k+1})\|^2\\
    &\le 3 \|G_k\|^2 + \frac{3}{\lambda^2} \|\nabla f(x_k) - \tilde \nabla f(x_k)\|^2 + 3 L_g^2 \| x_k - x_{k+1}\|^2 \\
    &\le 3(1+L_g^2) \|G_k\|^2 + \frac{3}{\lambda^2} \|\nabla f(x_k) - \tilde \nabla f(x_k)\|^2.
\end{align*}

Then, taking the expectation with respect to $x_k$, we get
\begin{align}
    \eE_k\left(\|\nabla F(x_k)\|^2\right) &\le 3 \lambda^2 (1+L_g^2) \eE_k\left(\|G_k\|^2\right) + 3 \eE_k\left(\|\nabla f(x_k) - \tilde \nabla f(x_k)\|^2 \right) \\
    &\le 3 \lambda^2 (1+L_g^2) \eE_k\left(\|x_{k+1} - x_k\|^2\right) + 3 \sigma^2 \lambda^2,
\end{align}
where we use the definition of $\tilde \nabla f(x_k) = \nabla f(x_k) + \sigma \lambda z_{k+1}$, with $z_{k+1} \sim \mathcal{N}(0, I_d)$.

By taking the total expectation, averaging for $k$ between $0$ and $N-1$ and applying Lemma~\ref{lemma:residual_control_snopnp}, we obtain

\begin{align}
    \frac{1}{N} \sum_{k=0}^{N-1}\eE_k\left(\|\nabla F(x_k)\|^2\right)
    &\le  \frac{6 \lambda^2 (1+L_g^2)}{(2-\rho)\left(\lambda - \frac{L_f}{2-\rho} \right)}  \left[\frac{F(x_0) - F^\ast}{N} +  \frac{\lambda  \sigma^2}{1 - \rho} \right] + 3 \sigma^2 \lambda^2 \\
    &\le \frac{C_1 (F(x_0) - F^\ast)}{N} + C_2 \sigma^2, 
\end{align}
with $B_1 = \frac{6 \lambda^2 (1+L_g^2)}{(2-\rho)\left(\lambda - \frac{L_f}{2-\rho} \right)}$ and $B_2 = 3 \lambda^2 + \frac{6 \lambda^3 (1+L_g^2)}{(2-\rho)(1-\rho)\left(\lambda - \frac{L_f}{2-\rho} \right)}$.

\subsection{Proof of Corollary~\ref{cor:almost_sre_cvg_subsequence}}\label{sec:proof_cor_almost_sure_cvg_subsequence}
Equation~\eqref{eq:prop_snopnp_cvg_critical_point} shows that there exists a subsequence $x_{\psi_0(k)}$ such that $\forall k \in \N$ 
\begin{align}
    \eE\left(\|\nabla F(x_{\psi_0(k)})\|^2\right) \le 2 B_2 \sigma^2,
\end{align}
so that
\begin{align}
    \eE\left(\|\nabla F(x_{\psi_0(k)})\|\right) \le \sqrt{\eE\left(\|\nabla F(x_{\psi_0(k)})\|^2\right)} \le \sqrt{2 B_2} \sigma.
\end{align}
By the Fatou Lemma, we get
\begin{align}
    \eE\left(\liminf_{k \to + \infty}\|\nabla F(x_{\psi_0(k)})\|\right) \le \liminf_{k \to + \infty} \eE\left(\|\nabla F(x_{\psi_0(k)})\|\right) \le \sqrt{2 B_2} \sigma.
\end{align}
Then, by the Markov inequality, for $\beta > 0$, we get
\begin{align}
    \Prb{\liminf_{k \to + \infty}\|\nabla F(x_{\psi_0(k)})\| \ge \frac{\sqrt{2 B_2} \sigma}{\beta}} \le \beta.
\end{align}
So, with probability larger than $1 - \beta$, we have
\begin{align}\label{eq:prob_markov_control}
    \liminf_{k \to + \infty} \|\nabla F(x_{\psi_0(k)})\| < \frac{\sqrt{2 B_2} \sigma}{\beta}.
\end{align}
We define the event $B = \{ \lim_{k \to +\infty}\|x_{\psi_0(k)}\| = +\infty \}$. We get
\begin{align}
    \eE\left(\|\nabla F(x_{\psi_0(k)})\|^2\right) &= \eE\left(\|\nabla F(x_{\psi_0(k)})\|^2 \mathds{1}_B\right) + \eE\left(\|\nabla F(x_{\psi_0(k)})\|^2 \mathds{1}_{\bar{B}}\right) \\
    &\ge \eE\left(\|\nabla F(x_{\psi_0(k)})\|^2 \mathds{1}_B\right).
\end{align}
By the Fatou lemma, it leads to
\begin{align}
    2B_2 \sigma^2 \ge \liminf_{k \to +\infty}\eE\left(\|\nabla F(x_{\psi_0(k)})\|^2 \mathds{1}_B\right) \ge \eE\left(\liminf_{k \to +\infty}\|\nabla F(x_{\psi_0(k)})\|^2 \mathds{1}_B\right).
\end{align}
Since $\nabla F$ is coercive,  $\liminf_{k \to +\infty}\|\nabla F(x_{\psi_0(k)})\| = +\infty$.  on the event $B$. So, necessarily, $\Prb{B} = 0$. Then, almost surely, $x_{\psi_0(k)}$ does not go to $+\infty$. Hence, almost surely, there exists $x_{\psi_1(k)}$ a subsequence of $x_{\psi_0(k)}$ which is bounded in a compact $\mathbf{K} \subset \R^d$.

Then, by equation~\eqref{eq:prob_markov_control}, with probability larger than $1-\beta$, we have
\begin{align}\label{eq:prob_markov_control_2}
    \liminf_{k \to + \infty} \|\nabla F(x_{\psi_1(k)})\| \le \frac{\sqrt{2 B_2} \sigma}{\beta}.
\end{align}
Due to the inequality~\eqref{eq:prob_markov_control_2}, there exists a subsequence $x_{\psi_2(k)}$ such that
\begin{align}\label{eq:prob_markov_control_3}
    \|\nabla F(x_{\psi_2(k)})\| \le \frac{2\sqrt{2 B_2} \sigma}{\beta}.
\end{align}

By Remark 6.5 in~\cite{bierstone1988semianalytic}, because $(x_{\psi_2(k)})_{k \in \N}$ is bounded in $\cK$ and $\fF$ is subanalytic on the compact $\cK$, the Lojasiewicz's inequality ensures that there exist $r, c > 0$ (depending of $\cK$) such that $\forall x \in \cK$
\begin{align}\label{eq:lojasiewicz_inequality}
    \| \nabla F(x) \| \ge  c d(x, \sS)^{\frac{1}{r}},
\end{align}
with $d(x, \sS) = \min_{s \in \sS} \|x-s\|$ the distance to the set of critical points of $F$. Note that $s, r >0$ depend of the realization. Therefore with probability larger than $1-\beta$, we have
\begin{align}
    \| \nabla F(x_{\psi_2(k)}) \| \ge  c d(x_{\psi_2(k)}, \sS)^{\frac{1}{r}}.
\end{align}

By combining the previous inequality and equation~\eqref{eq:prob_markov_control_3}, we get that, with a probability larger than $1-\beta$, 
\begin{align}
    c d(x_{\psi_2(k)}, \sS)^{\frac{1}{r}} &\le \| \nabla F(x_{\psi_2(k)}) \| \le \frac{2\sqrt{2 B_2} \sigma}{\beta} \\
    d(x_{\psi_2(k)}, \sS)  &\le \left(\frac{2\sqrt{2 B_2}}{c}\right)^r \left(\frac{\sigma}{\beta}\right)^r,
\end{align}
that proves Corollary~\ref{cor:almost_sre_cvg_subsequence} with $\psi = \psi_2$, $B_3 = \left(\frac{2\sqrt{2 B_2}}{c}\right)^r$.

\end{document}